\documentclass{lmcs} 
\pdfoutput=1

\usepackage{lastpage}
\lmcsdoi{16}{3}{16}
\lmcsheading{}{\pageref{LastPage}}{}{}%
{Feb.~23,~2019}{Sep.~08,~2020}{}

\keywords{two-variable logic, finite model theory, algebraic automata theory}

\usepackage[utf8]{inputenc}
\usepackage{amssymb}
\usepackage{graphicx}
\usepackage{stmaryrd}
\usepackage{tikz}
\usetikzlibrary{arrows,automata}
\usepackage{xcolor,booktabs,colortbl,tabularx}



\newcommand{\cJ}{\mathcal{J}}


\newcommand{\suc}{\texttt{suc}}
\newcommand{\betw}{\texttt{bet}}
\newcommand{\betfac}{\texttt{betfac}}
\newcommand{\thr}{\texttt{th}}
\newcommand{\thrfac}{\texttt{thfac}}
\newcommand{\thtl}{\mbox{$\mathit{ThTL}$}\xspace}
\newcommand{\bthtl}{\mbox{$\mathit{BThTL}$}\xspace}
\newcommand{\invtl}{\mbox{$\mathit{InvTL}$}\xspace}
\newcommand{\binvtl}{\mbox{$\mathit{BInvTL}$}\xspace}
\newcommand{\factl}{\mbox{$\mathit{NFacTL}$}\xspace}
\newcommand{\bfactl}{\mbox{$\mathit{BFacTL}$}\xspace}
\newcommand{\bthfactl}{\mbox{$\mathit{BThFacTL}$}\xspace}
\newcommand{\utl}{\mbox{$\mathit{TL}[\fut,\past]$}}
\newcommand{\until}{\textsf{U}}
\newcommand{\since}{\textsf{S}}
\newcommand{\fut}{\textsf{F}}
\newcommand{\henceforth}{\textsf{G}}
\newcommand{\past}{\textsf{P}}
\newcommand{\nextt}{\textsf{X}}
\newcommand{\prev}{\textsf{Y}}
\newcommand{\gfut}[1]{\mbox{$\fut_{#1}$}}
\newcommand{\gpast}[1]{\mbox{$\past_{#1}$}}

\newcommand{\ltl}{\mbox{$\mathit{LTL}$}\xspace}
\newcommand{\ltlbin}{\mbox{$\ltl[\until,\since]$}\xspace}
\newcommand{\cltlbin}{\mbox{$C\ltlbin$}\xspace}
\newcommand{\ltlun}{\mbox{$\ltl[\fut,\past]$}\xspace}
\newcommand{\ltlunsuc}{\mbox{$\ltl[\fut,\past,\nextt,\prev]$}\xspace}
\newcommand{\ltlunsucp}{\mbox{$\ltl[\fut,\past,\nextt^{n},\prev^{n}]$/}\xspace}

\usepackage{xspace}

\newcommand{\defn}{\mathrel{\mbox{$~\stackrel{\rm def}{=}~$}}}
\renewcommand{\epsilon}{\varepsilon}
\newcommand{\limplies}{\rightarrow}
\newcommand{\eqvt}{\leftrightarrow}
\newcommand{\true}{true}
\newcommand{\false}{false}
\newcommand{\orover}{{\displaystyle\bigvee}}
\newcommand{\andover}{{\displaystyle\bigwedge}}
\newcommand{\diam}[1]{\langle#1\rangle}
\newcommand{\boxm}[1]{[#1]}
\newcommand{\exact}[1]{\llbracket#1\rrbracket}
\newcommand{\expand}[1]{\{#1\}}
\newcommand{\fo}{\mbox{$\mathit{FO}$}\xspace}
\newcommand{\foless}{\mbox{$\fo[<]$}\xspace}
\newcommand{\fotwo}{\mbox{$\fo^2$}\xspace}
\newcommand{\fotwoless}{\mbox{$\fotwo[<]$}\xspace}
\newcommand{\fotwosuc}{\mbox{$\fotwo[<,\suc]$}\xspace}
\newcommand{\fotwobet}{\mbox{$\fotwo[<,\betw]$}\xspace}
\newcommand{\fotwobetfac}{\mbox{$\fotwo[<,\betfac]$}\xspace}
\newcommand{\fotwothr}{\mbox{$\fotwo[<,\thr]$}\xspace}
\newcommand{\fotwothrfac}{\mbox{$\fotwo[<,\thrfac]$}\xspace}
\newcommand{\pspace}{\mbox{$\mathit{Pspace}$}}
\newcommand{\expspace}{\mbox{$\mathit{Expspace}$}}
\newcommand{\nexptime}{\mbox{$\mathit{NExptime}$}}
\newcommand{\np}{\mbox{$\mathit{NP}$}}

\newcommand{\fosucc}{FO^2[<,{\tt succ}]}
\newcommand{\fobet}{FO^2[<,\betw]}
\newcommand{\fobetfac}{FO^2[<,\betfac]}

\newcommand{\da}{{\bf DA}\xspace}
\newcommand{\me}{{\bf M}_e}
\newcommand{\meda}{\me{\bf DA}}

\newcommand{\stard}{*{\bf D}}

\newcommand{\medad}{\meda\stard}


\begin{document}

\title[Two-variable logics with some betweenness relations]{Two-variable logics with some betweenness relations: Expressiveness, satisfiability and membership}

\author[A.~Krebs]{Andreas Krebs\rsuper{a}}	
\address{\lsuper{a}Universit\"at T\"ubingen}	

\author[K.~Lodaya]{Kamal Lodaya\rsuper{b}}	
\address{\lsuper{b}The Institute of Mathematical Sciences, Chennai, and
	Homi Bhabha National Institute, Mumbai}

\author[P.K.~Pandya]{Paritosh K.~Pandya\rsuper{c}}	
\address{\lsuper{c}Tata Institute of Fundamental Research, Mumbai}	

\author[H.~Straubing]{Howard Straubing\rsuper{d}}
\address{\lsuper{d}Boston College}

\begin{abstract}
We study two extensions of  \fotwoless, first-order logic interpreted in finite words, in which formulas are restricted to use only two variables.  We adjoin to this language two-variable atomic formulas that say, `the letter $a$ appears between positions $x$ and $y$' and `the factor $u$ appears between positions $x$ and $y$'.  These are, in a sense, the simplest properties that are not expressible using only two variables.

We present several logics, both first-order and temporal, that have the same expressive power, and find matching lower and upper bounds for the complexity of satisfiability for each of these formulations.
We give effective conditions, in terms of the syntactic monoid of a regular language, for a property to be expressible in these logics. This algebraic analysis allows us to prove, among other things, that our new logics have strictly less expressive power than full first-order logic \foless.
Our proofs required the development of novel
techniques concerning factorizations of words.
\end{abstract}

\maketitle

\nocite{L20}
\section{Introduction}

We denote by $FO[<]$ first-order logic with the order relation $<$, interpreted in finite words over a finite alphabet $A$.  Variables in first-order formulas are interpreted as positions in a word, and for each letter  $a\in A$ there is a unary relation $a(x)$, interpreted to mean `the letter in position $x$ is $a$'.
We also use the binary successor relation $\suc(x,y)$ to denote that the position
$y$ immediately follows the position $x$, that is, $y=x+1$.
Thus sentences in this logic define properties of words, or, what is the same thing, languages $L\subseteq A^*$. The logic $FO[<]$ over words has been extensively studied, and has many equivalent characterizations in terms of temporal logic, regular languages, and the algebra of finite semigroups.
See, for instance,~\cite{Str-book,Wilke} and the many references cited therein.

The first-order definable languages---those definable
in the logic \foless---were shown equivalent to languages definable by star-free expressions
by the work of McNaughton and Papert~\cite{MNP} and Sch\"utzenberger~\cite{Sch1}.
The algebraic viewpoint established decidability of the definability question,
that is, whether a given regular language is first-order definable.
Much subsequent interest focused on effectively determining the quantifier alternation depth of definable languages.
The work of Simon~\cite{Simon} on `piecewise-testable events' provides such a characterization for languages
definable by boolean combinations of $\Sigma_1$ sentences.
Recently Place and Zeitoun found characterizations for several higher levels of the hierarchy~\cite{PZ-higher}.

If we allow variable symbols to be reused, then even formulas with high quantifier depth can be rewritten as equivalent formulas
with a small number of distinct variable symbols: Kamp proved~\cite{Kamp} that every sentence of $FO[<]$ is equivalent to one using only three variables. The family of languages definable with two-variable sentences is strictly smaller (see, for example, Immerman and Kozen~\cite{IK}). The fragment $FO^2[<]$, consisting of the two-variable formulas, has also been very thoroughly investigated, and once again, there are many equivalent effective characterizations~\cite{TW}.

The reason $FO^2[<]$ is strictly contained in $FO[<]$ is that one cannot express `betweenness' with only two variables.  More precisely, let $a\in A$.  Then the predicate
\[a(x,y)=\exists z(x<z<y \wedge a(z)),\]
which says that there is an occurrence of $a$ strictly between $x$ and $y$, is not expressible using only two variables.  We denote by \fotwobet the two-variable first-order logic that results from adjoining these predicates for each $a\in A$ to $FO^2[<]$.

More generally, if $u=a_1\cdots a_n\in A^+$, then we can consider the predicate $\diam{u}(x,y)$, which says that there is an occurrence of the factor $u$ strictly between $x$ and $y$. We denote by  $\fotwobetfac$ the two-variable logic that results when we adjoin all such predicates to  $FO^2[<]$.


What properties of words can we express when we adjoin these new relations to $FO^2[<]$?  The first obvious question to ask is whether we recover all of $FO[<]$ in this way.  The answer, as we shall see, is `no', but we will give a much more precise description.

The present article is a study of these extended two-variable logics \fotwobet and \fotwobetfac.  Our investigation is centered around two quite different themes.  One theme investigates several different
logics, based on $FO[<]$ as well as on the temporal logic $LTL$,  for expressing each of these betweenness, and establishes their expressive equivalence. We explore the complexity of
satisfiability-checking in these logics as a measure of their descriptive succinctness.

The second theme is devoted to determining, in a sense that we will make precise, the exact expressive power of the logics \fotwobet  and \fotwobetfac.  Here we draw on tools from the algebraic theory of semigroups to find decision procedures for determining whether a given regular language is definable in \fotwobet  or in \fotwobetfac.

In Section~\ref{sec:BasicProperties} we  give the precise definition of our logics
(although there is not much more to it than what we have written in this Introduction).
We introduce related logics  \fotwothr and  \fotwothrfac which enforce quantitative constraints on counts of letters and factors, respectively. We
show that they have the same expressive power as  \fotwobet and \fotwobetfac, respectively, although they can result in formulas that are considerably more succinct.
Quite how far this expressiveness goes in terms of
the quantifier alternation depth (or, what is more or less the same thing, the so-called `dot-depth') of languages definable in this logic
is studied in Section~\ref{sec:altdepth}.

In Section~\ref{sec:tlsat} we introduce temporal logics,
qualitative  and quantitative, but again with the same expressive power as our original formulations.
We determine the complexity of formula
satisfiability for all these temporal as well as two-variable logics.

Section~\ref{AndreasHoward} introduces the algebraic machinery we will need
to characterize the expressiveness of these logics.

Sections~\ref{sec:licsmain} and~\ref{sec:fac} are devoted to a characterization of the expressive power of $\fobet$ in terms of the algebra of finite semigroups, and
Section~\ref{sec:cslnext} reduces the definability question for  $\fobetfac$ to that of $\fobet$. As a result we find effective conditions both  for a language to be definable \fotwobet and in $\fobetfac$.
 We use these results to show that \fotwobetfac is strictly less expressive than $FO[<]$, and $\fotwobet$ is strictly less expressive that $\fotwobetfac$. Moreover, we also show that
 $\fotwoless$ is strictly less expressive than $\fotwobet$.

\begin{table*}[t]{\small
\setlength{\aboverulesep}{0pt}
\setlength{\belowrulesep}{0pt}
\setlength{\extrarowheight}{1ex}
\begin{center}
  \begin{tabularx}{\textwidth}{>{\raggedright}l>{\raggedright}l>{\raggedright\columncolor{gray!50}}l>{\raggedright\columncolor{gray!50}}>{\raggedright}l>{\raggedright}l>{\raggedright\arraybackslash}l}
    \toprule
    Complexity/ &      &                     &                 &              & \\
    Variety & {\bf Ap} & {\bf M${}_e$DA$*$D} & {\bf M${}_e$DA} & {\bf DA$*$D} & {\bf DA} \\[1ex]
    \midrule[0.08em]
    Nonelementary      & $\fo[<]$ &  &  &  &  \\[1ex]
    \midrule
    \expspace          & \cltlbin & \fotwobetfac, & \fotwothr, & \ltlunsucp & \\
                       &          & \fotwothrfac, & \fotwobet, & (binary~notation) & \\
                       &          & \bthfactl     & \bthtl, &                & \\
                       &          &           & \thtl &                & \\[1ex]
    \midrule
    \nexptime          &          &           &       & \fotwosuc      & \fotwoless \\
                       &          &           &       &                & (unbounded \\
                       &          &           &       &                & alphabet) \\[1ex]
    \midrule
    \pspace            & \ltlbin  & \bfactl,  & \binvtl, & \ltlunsuc   & \\
                       &          & \factl    & \invtl &             & \\[1ex]
    \midrule
    \np                &          &           &       &              & \fotwoless \\
                       &          &           &       &              & (bounded \\
                       &          &           &       &              & alphabet), \\
                       &          &           &       &              & \ltlun \\[1ex]
    \bottomrule
  \end{tabularx}
\end{center}
\caption{A summary of the results in this paper, in the context of the complexity of satisfiability-checking and the expressive power of temporal and predicate logics studied in earlier work. The top row shows the varieties of finite semigroups and monoids that provide the effective characterization
of the corresponding logics.  Our new results are in the shaded columns.}\label{tab:results}
}\end{table*}


Table~\ref{tab:results} gives a map of our results
and compares them to those of related previous work. Etessami {\em et al.}~\cite{EVW} as well as Weis and Immerman~\cite{WI} have explored logics \fotwoless and \fotwosuc, as well as matching temporal logics and their decision complexities.
Our own work has been in interval logics with the same expressive power~\cite{LPS1,LPS}.
Th\'erien and Wilke~\cite{TW} found characterizations of the expressive power of these same logics, using algebraic methods. We find that our new logics are more expressive but this comes at the cost of some computational power.

In terms of related work,
some counting extensions \cltlbin of \ltlbin have been studied by
Laroussinie {\em et al.}~\cite{LMP}, and by Alur and Henzinger as discrete time Metric Temporal logic~\cite{AH}.
In a more general setting,
satisfiability and model theory of two-variable and guarded logics with several relations on ordered as well as unordered relational structures have been intensively studied (see Otto~\cite{otto01}, Gr\"adel~\cite{gra99}).


\section{Two-variable logics and games}\label{sec:BasicProperties}

\subsection{Two-variable Between Logic}%
\label{sec:examples}
Throughout this paper, $A$ denotes a finite alphabet, $A^*$ the set of all words over $A$ (including the empty word), and $A^+$ the set of all nonempty words over $A$.  If $w\in A^+$, then $|w|$ denotes the length of $w$, and if, further, $1\leq i\leq |w|$, then $w(i)$ denotes the $i^{th}$ letter of $w$, where we take the leftmost letter of $w$ to be the first letter.  The \textit{content} $\alpha(w)\subseteq A$ of a word $w\in A^*$ is the set of letters that it contains.

$FO[<]$ is first-order logic interpreted in words over a finite alphabet $A$. Variables represent positions in a word, and the binary relation $<$ is interpreted as the usual ordering on positions.  There is also   a unary relation $a(x)$ for each $a \in A$, interpreted to mean that the letter in position $x$ is $a$. If $\phi$ is a sentence of $FO[<]$, then the set of words $w\in A^*$ such that $w\models\phi$ is a language in $A^*$, in fact a regular language. Similarly, a formula $\phi(x)$  with a single free variable defines a set of \emph{marked words} $(w,i)$, where $w\in A^+$ and $1\leq i\leq |w|$.
Often we will be a little sloppy in our terminology, and treat $FO[<]$ and its various sublogics at times as sets of formulas, or as sets of sentences, or as a family of languages, or as a family of sets of marked words.\footnote{A marked word cannot be empty, but a word can be. In contrast to the usual practice in model theory, we permit our formulas to be interpreted in the empty word:  every existentially quantified sentence is taken to be false in the empty word, and thus every universally quantified sentence is true.}

 For each $a\in A$ we adjoin to this logic a \textit{binary} relation $a(x,y)$ which is interpreted to mean
$\exists z(x<z\wedge z<y\wedge a(z))$.
This relation cannot be defined in ordinary first-order logic over $<$ without introducing a third variable.  We will investigate the fragment \fotwobet, obtained by restricting to formulas that use  both the unary and binary $a$ relations, along with $<$, but use only two variables.

\begin{exa}
There is an even simpler relation that is not expressible in two-variable logic that we could have adjoined:  this is the successor $y=x+1$ (which we also write $\suc(x,y)$). The logic \fotwoless supplemented by successor, which we denote by \fotwosuc has also been extensively studied, and the kinds of questions that we take up here for \fotwobet have already been answered for \fotwosuc. (See, for example,~\cite{EVW,LPS,TW}).
\end{exa}

\begin{exa}
The successor relation  $y=x+1$ is itself definable in \fotwobet, by a formula that says no letter of the alphabet appears strictly between $x$ and $y$.
As a result, we can define the set $L$  of words over $\{a,b\}$ in which there is no occurrence of two consecutive $b$'s by a sentence of \fotwobet. We can similarly define the set of words without two consecutive $a$'s.  Since we can also say that the first letter of a word is $a$ (by $\forall x(\forall y (x\leq y)\rightarrow a(x))$ and that the last letter is $b$, we can define the language ${(ab)}^*$ in \fotwobet).  This language is not, however, definable in \fotwo$[<]$.
\end{exa}

\begin{exa}
Let $L\subseteq {\{a,b\}}^*$ be the language defined by the regular expression
\[{(a+b)}^*bab^+ab{(a+b)}^*.\]
This language is definable in \fotwobet\ by the sentence
\[\exists x(\exists y(b(x,y)\wedge\neg a(x,y)\wedge \phi(y))\wedge\psi(x)),\]
where
\[\phi(y)=a(y)\land \exists x(\suc(y,x)\wedge b(x))\text{, and}\]
\[\psi(x)=a(x)\land \exists y(\suc(y,x)\wedge b(y)).\]
As we shall see further on, this language is not definable in \fotwosuc, so our new logic has strictly more expressive power than \fotwosuc.
\end{exa}

\begin{exa}
The value of an $r$-bit counter (modulo $2^r$) can be represented by a word $b_1 \cdots b_{r}$ over  $\{0,1\}$,  with $b_1$ representing the least significant bit.  A sequence $c_1 c_2 \ldots c_k$ of  $k$ such values can be represented by a word
$mark \cdot b^1_1 \cdots b^1_r \cdot mark \cdot b^2_1 \cdots b^2_r \cdot mark \cdots mark \cdot b^k_1 \cdots b^k_r$ where $b^i_j \in \{0,1\}$ and $mark$ is a new letter used to separate two successive counter values.
The logic \fotwobet (and in fact its sublogic \fotwosuc) can  assert
properties such as $(c_j=c_i)$, $(c_i=p)$, $(c_j=c_i+1)$ or even $(c_i = c_j + p)$ for any constant $p$ with $0 \leq p < 2^r$ using formulas of
length polynomial in $r$. These formulas use monadic predicates $mark(x), 0(x), 1(x)$.

The formulas $Bit^i_0(x)$ for $1\leq i \leq r$ when evaluated at position $x$ having letter $mark$ states that the bit in position $x+i$ has value $0$.
This is defined as
\[
\begin{array}{rl}
Bit^1_0(x)   \defn & \!\!\!\! \exists y: \suc(x,y) \land 0(y),\\
Bit^{i+1}_0(x) \defn & \!\!\!\! \exists y: \suc(x,y) \land Bit^{i}_0(y),
\end{array}
\]
We define the formulas $Bit^i_1(x)$ similarly.

The $O(r^2)$ size formula $EQ(x,y)$ below checks equality of two numbers
by comparing the $r$ bits in succession.
We use the fact that the bit string always has $r$ bits.

\[\begin{array}{rl}
EQ(x,y)   \defn & mark(x) \land mark(y) \land \andover_{i=1}^{r} EQ_i(x,y),\\
EQ_i(x,y) \defn & (Bit^i_0(x) \eqvt Bit^i_0(y))
\end{array}\]

By small variations of this formula, we can define formulas
$LT$, $GT$ etc, to make other comparisons.
Incrementing the counter modulo $2^r$
can be encoded by an $O(r^3)$ formula $INC_1(x,y)$
which converts a least significant block of $1$s to $0$s,
and $0$ after that to $1$.
We can also define $INC_c(x,y)$ which checks that the number at position $y$
of the word is obtained by increasing the number at position $x$
by a constant $c$.
\end{exa}

In contrast, it is quite difficult to find examples of languages definable in $FO[<]$ that are \textit{not} definable in \fotwobet.  Much of this paper is devoted to establishing methods for generating such examples.

More generally, for each $u=a_1\cdots a_n\in A^+$ we define a binary relation $\diam{u}(x,y)$, which is interpreted to mean
\[\exists z_1\dots \exists z_n(
x < z_1 < \dots z_n < y \land \suc(z_1,z_2) \land \dots \land \suc(z_{n-1},z_n)
\land a_1(z_1) \land \dots \land a_n(z_n)).\]
That is, there is an occurrence of the factor $u$ between $x$ and $y$.  The two-variable logic obtained by adjoining all these relations to $FO^2[<]$ is denoted $\fobetfac$. In contrast to  \fotwobet,  we will only interpret sentences in this logic in nonempty words. (This somewhat technical condition is required to make our algebraic characterizations work. We will have more to say about this in Section~\ref{AndreasHoward}.)

\subsection{Two-variable Threshold Logic}%
\label{sec:twotwo}

We  generalize \fotwobet as follows: Let $k\geq 0$ and $a\in A$.  We define $(a,k)(x,y)$  to mean that $x<y$, and that there are at least $k$ occurrences of $a$ between $x$ and $y$.   Adding these (infinitely many) relations gives a new logic $\fotwothr$.  In an analogous fashion, we define \fotwothrfac by adjoining relations asserting that there are at least $k$ (possibly overlapping) occurrences of a factor $u\in A^+$ strictly between $x$ and $y$.

\begin{exas}
The language $STAIR_k$ consists of all words $w$ over $\{a,b,c\}$ which have a factor of the form $a {(a+c)}^* a $ with at least $k$
occurrences of $a$. This can be specified by sentence $\exists x \exists y ( (a,k)(x,y) \land \neg b(x,y))$.

Threshold logic is quite useful in specifying quantitative properties of systems. For example, a bus arbiter circuit may have the property that if $req$ is continuously on for $15$ cycles then there should be at least $3$ occurrences of $ack$. This can be specified by
$\forall x \forall y ((req,15)(x,y) \rightarrow (ack,3)(x,y))$.
\end{exas}

Since $a(x,y)$ is equivalent to $(a,1)(x,y)$,  $\fotwothr$ is at least as expressive as \fotwobet.  What is less obvious is that the converse is true, albeit at the cost of a large blowup in the quantifier complexity of formulas:

\begin{thm}\label{thm.invequalsthr}
Considered both as families of languages and as families of sets of marked words,
\[\fotwobet=\fotwothr.\]
\end{thm}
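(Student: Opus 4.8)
The inclusion $\fotwobet\subseteq\fotwothr$ is immediate, since $a(x,y)$ is the special case $(a,1)(x,y)$; the content of the theorem is the reverse inclusion, for which the plan is to eliminate every threshold predicate in favour of the ordinary betweenness predicates while staying within two variables. The basic engine is a \emph{re-anchoring} construction for one-sided counts. Writing $\mathrm{first}_a(x,y)$ for the \fotwobet\ formula $a(y)\wedge x<y\wedge\neg a(x,y)$, which says that $y$ is the first occurrence of $a$ strictly after $x$, I would define by induction on $k$ a one-free-variable formula $\gamma^a_k(x)$ asserting ``at least $k$ occurrences of $a$ lie strictly after $x$'', with base cases $\gamma^a_0(x)=\true$ and $\gamma^a_1(x)=\exists y(a(y)\wedge x<y)$. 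The crucial point is that $\gamma^a_k(x)$ can be built as $\exists y\bigl(\mathrm{first}_a(x,y)\wedge\gamma^a_{k-1}[y]\bigr)$, where $\gamma^a_{k-1}[y]$ renames the free variable to $y$ and lets its inner quantifiers reuse $x$: once the anchor has moved to the first $a$ after $x$, the original value of $x$ is no longer needed, so two variables suffice at every level of the recursion. A symmetric construction counts occurrences strictly before a position.

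The difficulty is the genuinely two-sided predicate $(a,k)(x,y)$, in which \emph{both} endpoints are constrained simultaneously. A naive recursion peels off the leftmost (or rightmost) $a$ of the interval and recurses on the shorter interval, but each peeling step must compare the scan position with \emph{both} $x$ and $y$ at once, and in two variables one cannot keep three positions live: re-anchoring the lower endpoint to the newly found $a$ destroys the upper bound, and conversely. Indeed I expect that, as a stand-alone binary relation, $(a,k)(x,y)$ is \emph{not} \fotwobet-definable for $k\ge 2$, so a local, subformula-by-subformula replacement cannot work. This is exactly where the theorem is non-trivial, and it forces the translation to be global rather than predicate by predicate.

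To get around this I would translate \fotwothr\ formulas as a whole, exploiting that the two logics are compared as families of languages and of sets of marked words, so that at the top level at most one variable is free. Concretely, I would first put a \fotwothr\ formula into a normal form in which each maximal subformula is evaluated at a single anchor position, and then push each threshold constraint outward so that the moving endpoint of the count is chosen existentially \emph{together with} the surrounding conditions; a count $(a,k)(x,y)$ occurring under a quantifier binding $y$ is then realized by scanning outward from the fixed anchor $x$ with the one-sided gadgets $\gamma^a_k$, re-anchoring to successive $a$-positions and interleaving the evaluation of the sibling conditions on $y$ so that each is discharged before the anchor is overwritten. The main work, and the source of the large blowup in quantifier complexity announced before the statement, is carrying out this interleaving uniformly for all letters and all thresholds that occur, and then verifying by induction on formula structure that the resulting \fotwobet\ formula defines the same language and the same set of marked words (the empty-word convention and the base cases being routine). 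I expect the bookkeeping of which conditions must be checked before each re-anchoring step, rather than any single clever identity, to be the real obstacle.
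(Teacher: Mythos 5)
Your proposal is sound in outline, but it takes a genuinely different route from the paper. The paper argues semantically, with games: it introduces a $\theta$-bounded variant of the Ehrenfeucht-Fra\"{\i}ss\'e game for \fotwobet in which the numbers of jumped occurrences of each letter must agree only up to the threshold $\theta(a)$, and proves by strategy copying that if $\theta'$ raises the threshold of a single letter $b$ by one, then $\equiv^{\theta}_{2k}$ refines $\equiv^{\theta'}_{k}$ --- Player 2 answers a move of Player 1 that jumps $\theta(b)+1$ occurrences of $b$ by splitting it into two $\theta$-legal moves through an intermediate occurrence of $b$, which is exactly the game-theoretic shadow of your re-anchoring step. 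Iterating from the all-ones threshold, every $\equiv^{\theta}_{k}$-class is a union of $\equiv_{k\cdot 2^{r}}$-classes with $r=\sum_{a\in A}\theta(a)-1$, hence \fotwobet-definable, and a $\theta$-bounded \fotwothr formula of quantifier depth $k$ is then a finite disjunction of such classes. This buys the paper freedom from all formula surgery: no normal form, no interleaving combinatorics, at the price of a non-constructive detour through equivalence classes. Your translation is more explicit, but it rests on two pieces you do not carry out: (i) the anchored normal form for \fotwothr, which is a Scott/order-type argument in the style of~\cite{EVW} (the paper invokes such arguments only for its satisfiability bounds) and needs its own induction --- this is where the restriction to at most one free variable actually gets used throughout the formula, not just at the top; and (ii) the interleaving bookkeeping, which does go through: put each guard in disjunctive normal form, guess exact counts for the upper-bounded letters and an interleaving of all required witnesses, and scan through the witnesses with jumps forbidding the upper-bounded letters in the gaps. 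Reassuringly, that second piece is precisely the technique the paper itself uses, not for this theorem, but to prove $\binvtl\subseteq\invtl$ inside Theorem~\ref{thm.tlequivalence}, so it is validated there. Your side observation that $(a,k)(x,y)$ for $k\geq 2$ is not \fotwobet-definable as a stand-alone binary relation agrees with the paper's remark following the theorem statement, and both routes incur the exponential blowup in quantifier depth that the paper announces.
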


There is a bit more to this than meets the eye---the stated equality of expressive power holds only for sentences and for formulas with a single free variable interpreted in finite words or finite marked words.
For instance, the relations $(a,k)(x,y)$ for $k>1$ are not themselves expressible by single formulas of \fotwobet, and therefore the proof of Theorem~\ref{thm.invequalsthr} is not completely straightforward. We will give the proof of Theorem~\ref{thm.invequalsthr} in the next subsection, after introducing our game apparatus below.

We also have the analogous result for the other logics we have introduced here:

\begin{thm}\label{thm.invequalsthrfac}
Considered both as families of languages and as families of sets of marked words,
\[\fotwobetfac=\fotwothrfac.\]
\end{thm}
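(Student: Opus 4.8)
The plan is to prove the two inclusions separately, reusing Theorem~\ref{thm.invequalsthr} for the hard one. The inclusion $\fotwobetfac \subseteq \fotwothrfac$ is immediate, since for every $u \in A^+$ the relation $\diam{u}(x,y)$ is literally the relation $(u,1)(x,y)$ asserting at least one occurrence of $u$ strictly between $x$ and $y$; hence every atomic formula of $\fotwobetfac$ is already an atomic formula of $\fotwothrfac$, and the translation is the identity. All the work is in the reverse inclusion $\fotwothrfac \subseteq \fotwobetfac$, i.e.\ in showing that the threshold-factor relations $(u,k)(x,y)$ with $k>1$ can be eliminated in favour of between-factor relations, at the level of sentences and single-free-variable formulas.

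For that I would reduce the factor problem to the already-settled letter problem. Fix a formula $\Phi$ of $\fotwothrfac$ and let $U \subseteq A^+$ be the finite set of factors mentioned in $\Phi$, with $n = \max_{u \in U} |u|$. The key observation is that each occurrence of a factor $u$ is determined by its starting position, so \emph{counting occurrences of $u$ is the same as counting positions at which $u$ starts}, and overlapping occurrences cause no difficulty. The predicate $\mathit{start}_u(z)$, ``an occurrence of $u$ begins at $z$'', is a local, window-of-size-$n$ property, hence definable in $\fotwosuc$ and a fortiori in $\fotwobetfac$. I would pass to an annotated structure over the alphabet $B = A \times 2^U$ in which each position carries, besides its letter, the set of $u \in U$ that start there; this annotation is a sliding-window relabeling that is definable inside the logic. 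Under it a threshold-factor relation $(u,k)$ over $A$ becomes a threshold-letter relation counting the ``$u$-start'' marker over $B$, so $\Phi$ pulls back to a formula of $\fotwothr$ over $B$. Applying Theorem~\ref{thm.invequalsthr} over $B$ replaces all threshold-letter relations by between-letter relations, and pushing the result back to $A$ turns a between relation on ``$u$-start'' markers into (essentially) $\diam{u}$, landing in $\fotwobetfac$.

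The hard part will be the boundary, or window, effects, which is also what made the letter case ``not completely straightforward''. The relation $\diam{u}(x,y)$ requires an entire occurrence $z_1 < \cdots < z_n$ to lie strictly inside $(x,y)$, whereas the ``$u$-start'' marker records only $z_1$; a between relation on markers therefore counts occurrences whose start lies in $(x,y)$ but whose body may run up to or past $y$, over-counting by the bounded number of starts in the last $n-1$ positions before $y$. Reconciling these two semantics means subtracting a bounded correction localized near the right endpoint, and the genuine obstacle is doing so within the two-variable restriction: there is no spare variable to hold a shifted endpoint $y' = y-(n-1)$ while retaining both $x$ and $y$. I would handle this exactly as in the proof of Theorem~\ref{thm.invequalsthr}, using $\suc$-navigation and the unary predicates to express the finitely many boundary corrections, and exploiting that the claimed equality is asserted only for sentences and single-free-variable formulas (the relations $(u,k)$ for $k>1$ are themselves not individually definable in $\fotwobetfac$, just as $(a,k)$ are not in $\fotwobet$). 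Concretely I would invoke the game apparatus introduced for Theorem~\ref{thm.invequalsthr}, checking that $B$-words indistinguishable by the $\fotwobetfac$-game over the annotated alphabet are $\fotwothrfac$-equivalent as $A$-words, so that the two logics separate exactly the same (marked) words and hence define the same families.
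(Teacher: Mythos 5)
Your proposal is correct in substance, and its central idea---reduce counting of \emph{factors} to counting of \emph{letters} over an enlarged alphabet of bounded local windows, invoke Theorem~\ref{thm.invequalsthr} over that alphabet, and translate back---is exactly the idea the paper uses. The packaging, however, is genuinely different. The paper does not argue formula-by-formula at the level of Theorem~\ref{thm.invequalsthrfac}: it proves two translation propositions (the Delay and Expansion Propositions~\ref{simulate_betfac} and~\ref{simulate_bet}) relating sentences of \fotwobetfac over $A$ to sentences of \fotwobet over the window alphabet ${(A\cup\{*\})}^k$, evaluated in the expansion $w''$ whose $i$-th letter is the length-$k$ window ending at position $i$; it then feeds these into the ${\bf V}*{\bf D}$ machinery (Theorem~\ref{thm.v*d}) to prove that \fotwobetfac-definability is equivalent to $S(L)\in\medad$ (Theorem~\ref{thm.betfac}), observes that the same propositions adapt to thresholds, and concludes Theorem~\ref{thm.invequalsthrfac} algebraically: $L$ is definable in \fotwothrfac iff $S(L)\in\medad$ iff $L$ is definable in \fotwobetfac. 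Your route---annotate starts over $B=A\times 2^U$, go from \fotwothrfac over $A$ to \fotwothr over $B$, apply Theorem~\ref{thm.invequalsthr}, and come back down---is a purely syntactic round trip that avoids the algebra entirely. It is more elementary and self-contained, and it delivers the marked-word half of the statement directly (every step is compositional and Theorem~\ref{thm.invequalsthr} covers single-free-variable formulas), whereas the paper's algebraic route, being about syntactic semigroups, speaks explicitly only of languages. What the paper's route buys is reuse and a by-product: the Delay/Expansion propositions are needed anyway for Theorem~\ref{thm.betfac}, and the detour through \medad yields an effective decision procedure.

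One caution. The thinnest step of your sketch is not the forward over-counting correction you flag---that one does work, because you are translating \emph{into} \fotwothr over $B$, where the case condition ``the gap $(x,y)$ has at least $|u|-1$ positions'' is itself a threshold constraint, and given it, the correction is a purely local window condition at $y$. The real work is the backward translation of the binary atoms $b(x,y)$, $b=(a,S)\in B$, into \fotwobetfac: there ``a marker strictly between $x$ and $y$'' permits the occurrence's body to run past $y$, and the annotation also carries \emph{negative} constraints (no $u\in U\setminus S$ starts at that position), neither of which is literally a $\diam{u}$ relation. Both are handled by a bounded case split: occurrences starting within a bounded distance $d$ of $y$ become local window conditions at $y$ conjoined with ``$y>x+d$'', which---unlike in \fotwobet---is expressible in \fotwobetfac as $\bigvee_{|w|=d}\diam{w}(x,y)$; occurrences starting farther left are captured by $\bigvee_{\omega}\diam{\omega}(x,y)$, where $\omega$ ranges over the fixed-length windows having the longest element of $S$ as a prefix and no element of $U\setminus S$ as a prefix. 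This is precisely the content of the paper's Expansion Proposition~\ref{simulate_bet}; your pointer to ``the proof of Theorem~\ref{thm.invequalsthr}'' for these window effects is slightly misplaced, since that proof handles threshold refinement by games and contains no window bookkeeping, but the tools you name ($\suc$-navigation, finitely many boundary corrections) are indeed the ones that do the job.
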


We will give the proof of this in Section~\ref{sec:cslnext}.

\medskip

We will prove Theorem 2.1 using a game-based argument.  We   define two games, one characterizing expressibility in  $FO^2[<,bet]$, and the other expressibility in $FO^2[<,th]$.  These are variants of the standard Ehrenfeucht-Fra\"{\i}ss\'e games.  We then argue that the existence of a winning strategy for the second player  in either one of the games implies the existence of a winning strategy in the other, although with a different number of rounds.

\subsection{A game characterization of \texorpdfstring{$\pmb{\fotwobet}$}{FO2[<,BET]}}\label{sec:efgame}

We write $(w_1,i_1)\equiv_k (w_2,i_2)$ if these two marked words satisfy exactly the same formulas of \fotwobet\  with one free variable of quantifier depth no more than $k$.

We overload this notation, and also write $w_1\equiv_k w_2$ if $w_1$ and $w_2$ are ordinary words that satisfy exactly the same \textit{sentences} of \fotwobet with quantifier depth $\leq k$.

Let $k\geq 0$. The game is played for $k$ rounds in two marked words $(w_1,i_1)$ and $ (w_2,i_2)$ with a single pebble on each word.  At the start of the game, the pebbles are on the marks $i_1$ and $i_2$.   In each round, the pebble is moved to a new position in both words, producing two new marked words.

Suppose that at the beginning of a round, the marked words are $(w_1,j_1)$ and $(w_2,j_2)$.  Player 1 selects one of the two words and moves the pebble to a different position.  Let's say he picks $w_1$, and moves the pebble to $j_1'$, with $j_1\neq j_1'$.  Player 2 moves the pebble  to a new position $j_2'$ in $w_2$.  This response is required to satisfy the following properties:
\begin{enumerate}
\renewcommand\labelenumi{(\roman{enumi})}
\item The moves are in the same direction:  $j_1<j_1'$ iff $j_2<j_2'$.
\item The letters in the destination positions are the same: $w_1(j_1')=w_2(j_2')$.
\item The set of letters jumped over is the same---that is, assuming $j_1<j_1'$:
\begin{eqnarray*}
&\hspace{-5mm}&\{a\in A: w_1(k)=a\text{ for some } j_1<k<j_1'\}=\\
&\hspace{-5mm}&\{a\in A: w_2(k)=a\text{ for some } j_2<k<j_2'\}.
\end{eqnarray*}
\end{enumerate}

\noindent
Player 2 wins the 0-round game if $w_1(i_1)=w_2(i_2)$. Otherwise, Player 1 wins the 0-round game.

Player 2 wins the $k$-round game for $k>0$ if she makes a legal response in each of $k$ successive rounds, otherwise Player 1 wins.

The following theorem and its corollary are just the standard results about Ehrenfeucht-Fra\"{\i}ss\'e games adapted to this logic; we omit the proofs.

\begin{thm}\label{thm.game_equiv}
$(w_1,i_1)\equiv_k(w_2,i_2)$ if and only if Player 2 has a winning strategy in the $k$-round game in the two marked words.
\end{thm}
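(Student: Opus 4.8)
The plan is to prove both implications simultaneously by induction on the number of rounds $k$, following the standard Ehrenfeucht-Fra\"{\i}ss\'e template specialized to the single-pebble game for \fotwobet. First I fix notation: for an ordered pair of distinct positions $p,q$ in a word $w$, let its \emph{atomic type} record the direction ($p<q$ or $q<p$), the two letters $w(p),w(q)$, and the set of letters occurring strictly between $p$ and $q$. Conditions (i)--(iii) in the game say exactly that a legal response preserves this atomic type: they assert, respectively, equality of direction, equality of the destination letter, and equality of the jumped-over letter set, while equality of the \emph{source} letter is maintained as an invariant (guaranteed at the start by the $0$-round condition and after each round by (ii)). Crucially, the atomic type of $(p,q)$ is precisely what a quantifier-free formula of \fotwobet can observe about the two free variables $x\mapsto p$, $y\mapsto q$: the order atoms, the monadic atoms $a(x),a(y)$, and the between-atoms $a(x,y)$, which are symmetric and so coincide with $a(y,x)$, are all determined by it, and conversely the type is captured by a single quantifier-free formula $\theta(x,y)$. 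The base case $k=0$ is then immediate: a depth-$0$ formula with the one free variable $x$ is a Boolean combination of the atoms $a(x)$, so $(w_1,i_1)\equiv_0(w_2,i_2)$ holds iff $w_1(i_1)=w_2(i_2)$, which is Player 2's winning condition in the $0$-round game.

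The inductive step rests on a normal-form observation, which is where the two-variable restriction does its work: since only $x,y$ are available, every subformula that begins with a quantifier binds one of them and hence has \emph{at most one} free variable. Consequently any $\chi(x,y)$ of depth $\le k$ is a Boolean combination of quantifier-free atoms over $\{x,y\}$ together with single-free-variable formulas $\mu(x)$ and $\nu(y)$, each of depth $\le k$. To prove ($\Leftarrow$), assume Player 2 wins the $(k{+}1)$-round game and take a formula $\phi(x)$ of depth $\le k{+}1$; by closure under Boolean connectives it suffices to treat $\phi(x)=\exists y\,\chi(x,y)$ with $\chi$ of depth $\le k$. Suppose $w_1,i_1\models\phi$ with witness $j_1$. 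If $j_1=i_1$, the witness yields the single-variable formula $\chi(x,x)$ of depth $\le k$, and the level-$k$ hypothesis (restricting the strategy to its first $k$ rounds) transfers it to $(w_2,i_2)$, so $i_2$ witnesses $\phi$ there. If $j_1\ne i_1$, let Player 1 play $i_1\to j_1$; her winning response $j_2$ matches the atomic type and leaves Player 2 winning the remaining $k$-round game from $(w_1,j_1),(w_2,j_2)$. Evaluating $\chi$ through its normal form, the atoms agree because the atomic type is matched, the $\mu(x)$ blocks agree by the level-$k$ hypothesis applied to $(w_1,i_1),(w_2,i_2)$, and the $\nu(y)$ blocks agree by the level-$k$ hypothesis applied to $(w_1,j_1),(w_2,j_2)$; hence $w_2,i_2,j_2\models\chi$ and $j_2$ witnesses $\phi$. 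The symmetric argument handles the converse direction of the biconditional.

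For ($\Rightarrow$) I build Player 2's strategy from the equivalence using characteristic formulas: for each $k$ and marked word $(w,i)$ there is a formula $\phi^k_{(w,i)}(x)$ of depth $\le k$ whose models are exactly the marked words $\equiv_k$-equivalent to $(w,i)$, and these exist because over a fixed finite alphabet there are only finitely many inequivalent formulas of bounded depth. Assume $(w_1,i_1)\equiv_{k+1}(w_2,i_2)$ and let Player 1 move $i_1\to j_1$ in $w_1$. Put $\theta$ to be the quantifier-free formula for the atomic type of $(i_1,j_1)$ and $\rho=\phi^k_{(w_1,j_1)}$. Then $w_1,i_1\models\exists y(\theta(x,y)\wedge\rho(y))$, a formula of depth $\le k{+}1$ with free variable $x$, so by hypothesis $w_2,i_2$ satisfies it, yielding a position $j_2$ with the same atomic type relative to $i_2$ (hence $j_2\ne i_2$ and conditions (i)--(iii) hold) and with $(w_2,j_2)\equiv_k(w_1,j_1)$. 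The level-$k$ hypothesis then supplies a winning strategy for the remaining $k$ rounds, completing the $(k{+}1)$-round strategy.

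The main obstacle is not the induction itself but correctly isolating what a two-variable formula can observe in a single step, so that the single-pebble game is both sound and complete. The normal-form observation---that every quantified subformula has at most one free variable---is what makes conditions (i)--(iii) the right local test and lets the continuation be played from the \emph{new} pebble position without ever returning to the old one; the only genuinely delicate point is the witness $y=x$, which is not a legal position-changing move and must be discharged separately through the single-variable induction hypothesis.
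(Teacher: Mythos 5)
Your proof is correct, and it is precisely the standard Ehrenfeucht--Fra\"{\i}ss\'e argument that the paper itself invokes and omits (``just the standard results about Ehrenfeucht-Fra\"{\i}ss\'e games adapted to this logic''): simultaneous induction on $k$, the two-variable normal form (every quantified subformula has at most one free variable) for the game-wins-implies-equivalence direction, characteristic formulas of $\equiv_k$-classes for the equivalence-implies-strategy direction, and separate treatment of the $y=x$ witness. One cosmetic slip: under the paper's interpretation $a(x,y)\equiv\exists z(x<z\wedge z<y\wedge a(z))$ the between-atoms are \emph{not} symmetric ($a(y,x)$ is simply false whenever $x<y$), but this is harmless, since both $a(x,y)$ and $a(y,x)$ are still determined by the atomic type you define, which is all your argument actually uses.
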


We can also define the $k$-round game in ordinary unmarked words $w_1,w_2\in A^*$.  Player 1 begins in the first round by placing a pebble on a position in one of the two words, and Player 2 must respond on a position in the other word containing the same letter.  Thereafter, they play the game in the two marked words that result for $k-1$ rounds.  The following is a direct consequence of the preceding theorem.

\begin{cor}\label{cor.game_equiv}  Player 2 has a winning strategy in the $k$-round game in $w_1$ and $w_2$ if and only if $w_1\equiv_k w_2$.
\end{cor}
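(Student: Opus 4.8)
The plan is to reduce everything to Theorem~\ref{thm.game_equiv} by analyzing the outermost quantifier of a sentence. Observe first that the unmarked game on $w_1,w_2$ is, by construction, exactly one initial ``placing'' round followed by the ordinary $(k-1)$-round marked game of Theorem~\ref{thm.game_equiv}: Player~1 chooses a position $i_1$ in one of the words (say $w_1$), Player~2 must reply with a position $i_2$ in $w_2$ carrying the same letter, and thereafter the two players play the $(k-1)$-round game on $(w_1,i_1)$ and $(w_2,i_2)$. Hence, applying Theorem~\ref{thm.game_equiv} to that residual game, Player~2 has a winning strategy in the $k$-round unmarked game if and only if both of the following hold: for every position $i_1$ of $w_1$ there is a position $i_2$ of $w_2$ with $(w_1,i_1)\equiv_{k-1}(w_2,i_2)$, and symmetrically with the roles of $w_1$ and $w_2$ interchanged. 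The letter-matching requirement is subsumed by $\equiv_{k-1}$, since ``the marked letter is $a$'' is expressible at quantifier depth $0$. It therefore suffices to show that this back-and-forth condition at depth $k-1$ is equivalent to $w_1\equiv_k w_2$.

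For the direction from the game to equivalence, I would argue that every sentence of quantifier depth $\le k$ is preserved. Any such sentence is a Boolean combination of sentences of the form $\exists x\,\phi(x)$ with $\phi$ of quantifier depth $\le k-1$; since the equivalence is closed under Boolean combinations, it is enough to treat $\exists x\,\phi(x)$. If $w_1\models\exists x\,\phi(x)$, pick a witnessing position $i_1$; the game condition yields $i_2$ in $w_2$ with $(w_1,i_1)\equiv_{k-1}(w_2,i_2)$, so $(w_2,i_2)\models\phi(x)$ because $\phi$ has depth $\le k-1$, and hence $w_2\models\exists x\,\phi(x)$. The reverse implication uses the symmetric half of the game condition, and so $w_1\equiv_k w_2$.

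For the converse, I would invoke the finiteness of the logic up to logical equivalence at each bounded quantifier depth. Because the alphabet $A$ is finite, there are, up to equivalence, only finitely many formulas of \fotwobet of quantifier depth $\le k-1$ with one free variable; conjoining those satisfied by a given marked word $(w,i)$ yields a single characteristic formula $\tau_{(w,i)}(x)$ of depth $\le k-1$ whose models at a position are exactly the marked words that are $\equiv_{k-1}$-equivalent to $(w,i)$. Now assume $w_1\equiv_k w_2$ and let Player~1 play a position $i_1$ in $w_1$. The sentence $\exists x\,\tau_{(w_1,i_1)}(x)$ has quantifier depth $\le k$ and holds in $w_1$, hence in $w_2$; a witness provides a position $i_2$ with $(w_2,i_2)\models\tau_{(w_1,i_1)}$, that is, $(w_1,i_1)\equiv_{k-1}(w_2,i_2)$. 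By Theorem~\ref{thm.game_equiv}, Player~2 then wins the remaining $(k-1)$-round marked game, and her response when Player~1 instead plays in $w_2$ is obtained symmetrically.

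The only genuinely delicate point, which I would flag explicitly, is the construction of the characteristic formulas $\tau_{(w,i)}$ used in the converse: this is the familiar Hintikka-formula step, and it goes through here precisely because \fotwobet over a fixed finite alphabet has finitely many inequivalent atomic formulas (the unary $a(x)$ and binary $a(x,y)$ for $a\in A$, together with $<$), so that only finitely many formulas of each bounded quantifier depth arise up to equivalence. Everything else is a routine unwinding of the definitions layered on top of Theorem~\ref{thm.game_equiv}.
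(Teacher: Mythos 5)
Your proof is correct and follows exactly the route the paper intends: the paper states the corollary as ``a direct consequence of the preceding theorem'' and omits the standard Ehrenfeucht--Fra\"{\i}ss\'e details, and your argument (placing round plus Theorem~\ref{thm.game_equiv}, the outermost-quantifier decomposition, and the Hintikka characteristic formulas made possible by the finite alphabet) is precisely that standard derivation, spelled out.
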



\subsection{Proof of Theorem~\ref{thm.invequalsthr}}

We introduce a game characterizing $\fotwothr$. Let $\theta$ be a function from $A$ to the positive integers. We consider formulas in $\fotwothr$ in which for all $a\in A$, every occurrence of the relation $(a,k)(x,y)$ has $k\leq\theta(a)$.  Let's call these \textit{$\theta$-bounded} formulas.

The rules of the game are the same as those for the \fotwobet game, with this difference:  At each move, for each $a\in A$, the number $m_1$ of $a$'s jumped by Player 1 must be equivalent, threshold $\theta(a)$, to the number $m_2$ of $a$'s jumped by Player 2. That is, either $m_1$ and $m_2$ are both greater than or equal to $\theta(a)$, or $m_1=m_2$.  Observe that the game for \fotwobet is the case $\theta(a)=1$ for all $a\in A$.

For marked words $(w_1,i_1)$ and $(w_2,i_2)$, let us define, $(w_1,i_1)\equiv^{\theta}_k(w_2,i_2)$ if and only if they satisfy exactly the same $\theta$-bounded formulas of quantifier depth less than or equal to $k$.  As with the case of \fotwobet, we also have a version of both the game and the equivalence relation for ordinary words.

It is a routine matter to show that the analogues of Theorem~\ref{thm.game_equiv} and Corollary~\ref{cor.game_equiv} hold in this more general setting: Player 2 has a winning strategy in the $k$-round game in $(w_1,i_1),(w_2,i_2)$ if and only if $(w_1,i_1)\equiv^{\theta}_k(w_2,i_2)$, and likewise for ordinary words.

Let $\theta, \theta'$ be two functions from $A$ to the natural numbers that differ by one in the following sense:  $\theta'(b)=\theta(b)+1$ for exactly one $b\in A$, and $\theta'(a)=\theta(a)$ for all $a\neq b$.  Obviously $\equiv^{\theta'}_k$ is finer than $\equiv^{\theta}_k$.  We claim that  $\equiv^{\theta}_{2k}$ refines $\equiv^{\theta'}_{k}$.

This will give  the desired result, because any threshold function $\theta$ can be built from the base threshold function that assigns 1 to each letter of the alphabet by a sequence of steps in which we add 1 to the threshold of each letter.  So it follows by induction that for any $\theta$, $\equiv^{\theta}_k$ is refined by $\equiv_{k\cdot 2^r}$, where $r=\sum_{a\in A}\theta(a)-1$.  In particular, each $\equiv^{\theta}_k$-class is definable by a \fotwobet sentence, although the quantifier depth of this sentence is exponential in the thresholds used.

So given a fixed $\fotwothr$ formula $\phi$ there is a threshold $\theta$ such that the formula is $\theta$-bounded. Let $k$ be the quantifier depth of $\phi$. Then the formula cannot distinguish words that are in the same equivalence class with respect to $\equiv^{\theta}_k$. As there are only finitely many $\theta$-bounded formulas of quantifier depth at most $k$, there are only finitely many such equivalence classes. By the argument above we can find a \fotwobet\ formula for each equivalence class accepted by $\phi$ and the disjunction of these will be a formula in \fotwobet\ that has the same models as $\phi$.

Finally, we prove the claim above with a game argument, showing that if Player 2 has a winning strategy in the $2k$-round $\theta$ game in $(w_1,w_2)$ then she has a strategy in the $k$-round $\theta'$ game in the same two words.   This is done by a strategy-copying argument:  Suppose Player 2 needs to know how to reply to a rightward move by Player 1 in $w_1$.  If for each $a$, the number of occurrences of $a$ in the jumped-over letters is no more than $\theta(a)$, then she can just use the reply she would ordinarily make in the $\theta$ game.  Suppose, however, that the jumped-over letters contain $\theta'(b)=\theta(b)+1$ occurrences of $b$.  Player 2 computes her reply by decomposing this move of Player 1 into two separate moves:  A jump to an occurrence of $b$ strictly between the source and destination of the move, and then a jump from this position to the destination of the move.  Player 2 has a legal reply to each of these moves in the $\theta$-game, and the two successive replies constitute  a successful reply in the $\theta'$-game.  Observe that Player 2 uses up no more than two of her moves in the $\theta$-game for each move in the $\theta'$ game.


\section{Alternation depth}\label{sec:altdepth}
A first-order formula that begins with a sequence of quantifiers, followed by a quantifier-free formula, is said to be in \textit{prefix form}.  The sequence of quantifiers at the beginning consists of alternating blocks of existential and universal quantifiers.  If there are $k$ such blocks in all, and the leftmost block is existential, then we say that the formula is a $\Sigma_k$ formula; if the leftmost block is universal, then the formula is a $\Pi_k$ formula.  We denote the class of such formulas in $FO[<]$ by $\Sigma_k[<]$.

We are  interested in how \fotwobet sits inside $FO[<]$.  One way to measure the complexity of a language in $FO[<]$ is by the smallest number of alternations of quantifiers required in a defining formula, that is the smallest $k$ such that the language is definable by a boolean combination of sentences of $\Sigma_k[<]$.  We will call this the \textit{alternation depth} of the language.  (This is closely related to the \textit{dot-depth}, which can be defined the same way, but with slightly different base of atomic formulas.)
We stress that the alternation depth is measured with respect to arbitrary $FO[<]$ sentences, and not the variable-restricted sentences of \fotwobet.


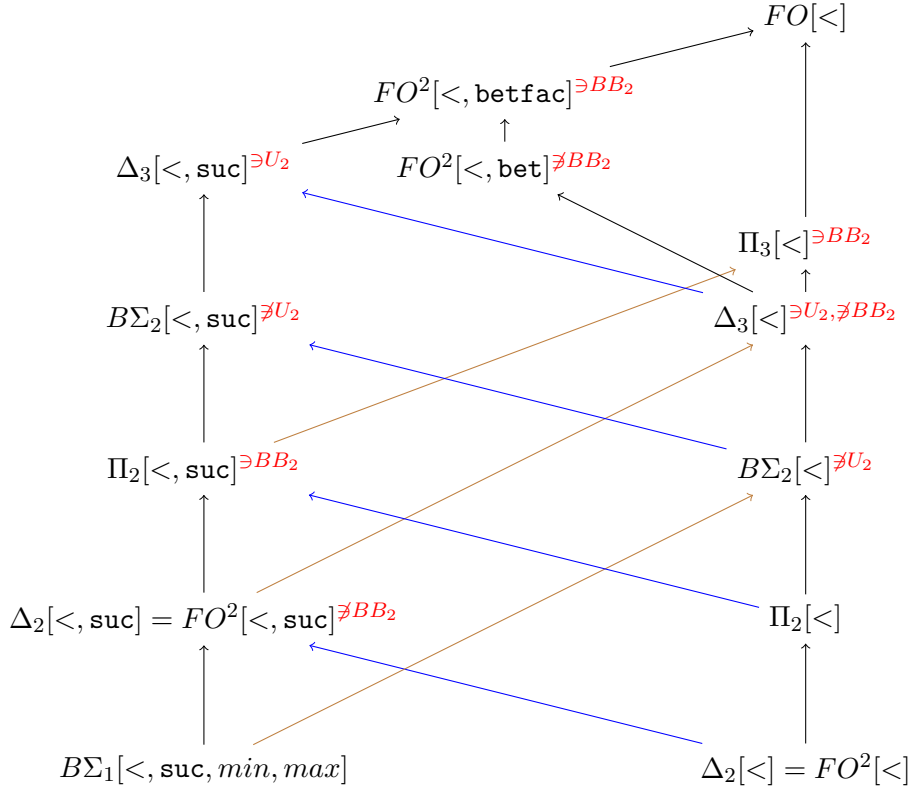
\begin{figure}[t]
\begin{tikzpicture}[scale=1.0,transform shape]

\draw(10,0) node (A) {$\Delta_2[<]=FO^2[<]$};
\draw(10,2) node (C) {$\Pi_2[<]$};
\draw(10,4) node (D) {$B\Sigma_2[<]^{\color{red}\not\ni U_2}$};

\draw(10,6) node (E) {$\Delta_3[<]^{\color{red}\ni U_2, \not\ni BB_2}$};
\draw(10,7) node (G) {$\Pi_3[<]^{\color{red}\ni BB_2}$};
\draw(10,10) node (H) {$FO[<]$};

\draw(2,0) node (Z) {$B\Sigma_1[<,\suc,min,max]$};
\draw(2,2) node (I) {$\Delta_2[<,\suc] = FO^2[<,\suc]^{\color{red}\not\ni BB_2}$};
\draw(2,4) node (K) {$\Pi_2[<,\suc]^{\color{red}\ni BB_2}$};
\draw(2,6) node (L) {$B\Sigma_2[<,\suc]^{\color{red} \not\ni U_2}$};
\draw(2,8) node (X) {$\Delta_3[<,\suc]^{\color{red}\ni U_2}$};

\draw(6,8) node (M) {$FO^2[<,\betw]^{\color{red} \not\ni BB_2}$};
\draw(6,9) node (W) {$FO^2[<,\betfac]^{\color{red}\ni BB_2}$};

\draw [->](A) edge[blue] (I);
\draw [->] (A)-- (C);
\draw [->](C)-- (D);
\draw [->](Z)edge[brown] (D);
\draw [->](I)edge[brown] (E);
\draw [->](K) edge[brown] (G);

\draw [->] (E)-- (G);
\draw [->](G)-- (H);

\draw [->](Z)-- (I);
\draw [->] (I)-- (K);
\draw [->](K)-- (L);
\draw [->](L)-- (X);
\draw [->](X)-- (W);

\draw [->](D)-- (E);
\draw [->](C) edge[blue] (K);
\draw[->](D) edge[blue] (L);
\draw [->](E) edge[blue] (X);
\draw [->](W)-- (H);

\draw [->](E)-- (M);
\draw [->](M)-- (W);

\end{tikzpicture}
\caption{Dot depth and quantifier alternation hierarchies.
The language {\color{red} $U_2$} over alphabet $A=\{a,b,c\}$
is $(A^*\setminus(A^*ac^*aA^*)) \cup (A^*\setminus(A^*bc^*bA^*))ac^*aA^*$,
it consists of words which have
no occurrence of $bc^*b$ before an occurrence of $ac^*a$.
The language {\color{red} $BB_2$} over $\{a,b\}$ is ${(a{(ab)}^*b)}^*$.
}%
\label{fig:hierarchies}
\end{figure}


For the reader familiar with the lower levels of the
quantifier alternation hierarchy of first-order logic
(see~\cite{DGK} for a survey),
these are the classes on the right in Figure~\ref{fig:hierarchies}.
Those on the left are the classes of the original dot depth
hierarchy of Cohen and Brzozowski~\cite{CB}.
The logics which we have introduced in~\cite{KLPS,KLPS18} and in this paper
are at top centre. They have a nonempty
intersection with every level of both the hierarchies.
The two example languages $BB_2$ and $U_2$ (defined in the diagram caption) have played a prominent role in our work.

\begin{thm}\label{thm.alternationdepth}
 The alternation depth of languages in \fotwobet is unbounded.
\end{thm}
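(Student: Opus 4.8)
The plan is to exhibit, for every $n$, a language $L_n$ that is definable in \fotwobet\ yet has alternation depth growing without bound in $n$. Since the alternation depth is measured against all of \foless, this immediately yields the theorem: no finite $k$ can bound the alternation depth of the \fotwobet-definable languages.

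For the construction I would work over an alphabet containing a distinguished ``filler'' letter $c$ together with letters $a_1,\dots,a_n$, and build $L_n$ from nested ``dumbbells'' $a_i c^* a_i$, that is, a pair of $a_i$'s with only $c$'s strictly between them. The crucial observation is that a single dumbbell is already expressible in \fotwobet: the occurrence of a factor $a_i c^* a_i$ is captured by $\exists x\exists y\,(a_i(x)\wedge a_i(y)\wedge x<y\wedge \bigwedge_{d\neq c}\neg d(x,y))$, because $\neg d(x,y)$ for every $d\neq c$ says exactly that the only letters strictly between $x$ and $y$ are $c$'s. The betweenness relation therefore lets us quantify over the region spanned by a dumbbell and speak about which letters do and do not occur inside it; by reusing the two variables we can then nest these assertions, imposing left-to-right ordering and containment constraints among dumbbells of different indices. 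Concretely I would take $L_n$ to be the \fotwobet-definable generalisation of the language $U_2$ of Figure~\ref{fig:hierarchies} (words forbidding a prescribed ordered nesting of the $n$ dumbbell types) and write down the defining sentence $\varphi_n$ by induction on $n$, each level of nesting consuming one further alternation of the two quantified variables but never a third variable.

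The easy half is thus the upper bound: a routine induction produces the \fotwobet\ sentence $\varphi_n$ for $L_n$. The substance lies in the lower bound, namely that the alternation depth of $L_n$ tends to infinity. Here I would invoke the known strictness of the quantifier-alternation (equivalently dot-depth) hierarchy of \foless\ (see \cite{CB,DGK}): languages built from iterated runs of exactly this dumbbell shape are of the form used to separate successive levels of the Straubing--Th\'erien hierarchy. I would then exhibit an alternation-depth-preserving reduction from a canonical level-$k$ witness $V_k$ to $L_n$ for $n$ chosen as a function of $k$ --- most cleanly an inverse letter-to-letter morphism $h$ with $V_k=h^{-1}(L_n)$, under which every level of the hierarchy is closed. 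Were $L_n$ at level $j$, then $V_k$ would be at level $j$ as well, so taking $V_k$ with alternation depth exceeding $j$ forces $L_n\notin B\Sigma_{k}[<]$ once $n$ is large enough, which is precisely the unboundedness we want.

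I expect the lower bound to be the main obstacle, and in the spirit of this section I would ultimately replace the reduction by a direct Ehrenfeucht--Fra\"iss\'e argument for the $\Sigma_{k}[<]$/$\Pi_{k}[<]$ game: for each $k$ one must produce a pair of words that agree on all boolean combinations of $\Sigma_{k}[<]$ sentences but differ on membership in $L_n$. Building these pairs --- typically by padding the nested dumbbells with sufficiently long $c$-runs so that Duplicator can survive $k$ rounds --- and proving that Duplicator wins is the delicate technical heart of the proof; the definability in \fotwobet\ and the bookkeeping of the reduction are comparatively mechanical. The one subtlety to watch is that the reduction (or the padding) must neither destroy the run structure that \fotwobet\ exploits nor accidentally lower the alternation depth, and it is the verification of the latter that carries the real weight.
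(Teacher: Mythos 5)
Your upper-bound half (defining a family of \fotwobet\ languages over a growing alphabet and proving definability by induction) is structurally parallel to what the paper does, but your lower bound has a genuine gap, and the reduction you propose cannot work as stated. Both \fotwobet\ and every level $B\Sigma_k[<]$ of the alternation hierarchy are closed under inverse letter-to-letter morphisms (for \fotwobet\ one simply replaces each atom $a(x)$, resp.\ $a(x,y)$, by the disjunction of $b(x)$, resp.\ $b(x,y)$, over all $b$ with $h(b)=a$). Hence if a ``canonical level-$k$ witness'' $V_k$ satisfies $V_k=h^{-1}(L_n)$ with $L_n$ definable in \fotwobet, then $V_k$ is itself definable in \fotwobet\ and of high alternation depth --- which is exactly the object whose existence the theorem asserts. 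The witnesses actually known to separate the levels of the dot-depth/Straubing--Th\'erien hierarchies are bracket-nesting languages of the $BB_k$ type, and these are provably \emph{not} in \fotwobet\ (Corollary~\ref{cor.strictinclusion} shows this already for $BB_2={(a{(ab)}^*b)}^*$), so they can never arise as such inverse images. Your reduction therefore presupposes the theorem rather than proving it, and your fallback --- a direct Ehrenfeucht--Fra\"{\i}ss\'e argument showing $L_n\notin B\Sigma_k[<]$ for growing $k$ --- is precisely the ``delicate technical heart'' that you leave entirely unexecuted, for languages (nested dumbbells generalizing $U_2$) that you never define precisely and for which it is not at all clear that the alternation depth grows: note that $U_2$ itself sits inside $\Delta_3[<]$, and flat ordering constraints among dumbbells could well stay at a bounded level.

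The paper avoids this trap with a different choice of languages and a different lower-bound engine. It takes $T_m$ to be the prefix encodings of depth-$m$ AND/OR circuit trees (with $0/1$ inputs) that evaluate to \emph{true}, over an alphabet of size $m+2$; \fotwobet-definability is proved by an induction using the game of Section~\ref{sec:efgame}. For the lower bound it observes that any $AC^0$ language $L$ reduces, by encoding an input word together with its circuit, to membership in $T_n$; so if all $T_n$ had alternation depth at most $d$, every $AC^0$ language would be recognized by polynomial-size circuits of some fixed depth, contradicting Sipser's depth-hierarchy theorem~\cite{Sipser}. This import of circuit complexity is what replaces the unbounded family of game arguments you would otherwise need, and it is the key idea missing from your proposal.
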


\begin{proof}

Consider an alphabet consisting of the symbols
\[0,1,\vee_1,\wedge_2,\vee_3,\wedge_4,\ldots.\]
We define a sequence of languages by regular expressions as follows:
\[C_1=\vee_1{(0+1)}^+\]
\[T_1=\vee_1{(0+1)}^*1{(0+1)}^*.\]
For even $m>1$,
\[C_m=\wedge_m C_{m-1}^+,T_m=\wedge_m T_{m-1}^+.\]
For odd $m>1$,
\[C_m=\vee_m C_{m-1}^+, T_m=\vee_m C_{m-1}^*T_{m-1}C_{m-1}^*.\]

Observe that $C_m$ and $T_m$ are languages over a finite alphabet of $m+2$ letters.  $C_m$ denotes the set of prefix encodings of depth $m$ boolean circuits with 0's and 1's at the inputs.  In these circuits the input layer of 0's and 1's is followed by a layer of unbounded fan-in OR gates, then alternating layers of unbounded fan-in AND and OR gates (strictly speaking, these circuits are trees of AND and OR gates).  $T_m$ denotes the set of encodings of those circuits in $C_m$ that evaluate to \true.

We will show by induction  that for all $m$, both $T_m$ and $C_m$ are definable in $FO^2[<,bet]$, and then show that the alternation depth of the languages $C_m$ and $T_m$ grows without bound as $m$ increases.

 We obtain a sentence  defining $C_1$ by saying that the first symbol is $\vee_1$ and every symbol after this is either 0 or 1.
\[\exists x( \vee_1(x) \wedge \forall y(x\leq y \wedge x<y\rightarrow (1(y) \vee 0(y)))).\]
We obtain a sentence for $T_1$ by taking the conjunction of this sentence with $\exists x 1(x)$.

For the inductive step, we suppose that we have a sentence defining $T_k$ for $k\geq 1$.  Let's suppose first that $k$ is odd.  Thus $T_k$ is a union of $\equiv_r$-classes for some $r$, which we assume to be at least 2. Consequently, whenever $v\in T_k$ and $v'\notin T_k$, Player 1 has a winning strategy in the $r$-round game in $v$ and $v'$.  The proof now proceeds by showing that whenever $w\in T_{k+1}$ and $w'\notin T_{k+1}$, Player 1 has a winning strategy in the $(r+1)$-round game in these two words.  This implies that the $\equiv_r$-class of $w$ is contained within $T_{k+1}$, and thus $T_{k+1}$ is a union of such classes, and hence definable by a sentence in our logic.

First the easy cases: If $w'$ has no occurrence of $\wedge_{k+1}$, then Player 1 wins the game in one round.  If $w'$ has two or more occurrences of $\wedge_{k+1}$, then Player 1 wins the game in two rounds, by first playing on the unique $\wedge_{k+1}$ in $w$ and then jumping in $w'$ to a second occurrence.  If $\wedge_{k+1}$ is not the first letter of $w'$, then Player 1 likewise wins in two rounds by playing first on this occurrence, and then moving the pebble in $w'$  one position left in the second round. If $\vee_{k}$ is not the second letter of $w'$, then Player 2 wins in two rounds again, by playing first on  the initial letter of $w$, and then moving this pebble in the second round to the second position of $w$.  Player 2 must respond by moving the pebble in $w'$ one position to the the right as well, because the set of letters in jumped positions must be empty.

Thus if Player 2 is to have a chance at all, $w'$ must have a factorization
\[w'=\wedge_{k+1} \vee_k w'_1\cdots \vee_k w'_p,\]
where each $w'_1$ contains no occurrences of either $\vee_k$ or $\wedge_{k+1}$, and $p\geq 1$.
Let
\[w=\wedge_{k+1} \vee_k w_1\cdots \vee_k w_q\]
be the corresponding factorization of $w$.  Since $w\in T_{k+1}$, every factor $\vee_k w_j$ belongs to $T_k$.  However, some factor $\vee_k w'_s$ is not in $T_k$.  So Player 1 opens by putting the pebble on the first letter of the factor $\vee_k w'_s$, and Player 2 must respond on the first letter of $\vee_k w_t$ for some $t$.  By the inductive hypothesis, Player 1 has a winning strategy in the $r$-round game in $\vee_k w'_s$ and $\vee_k w_t$.  Player 1 now proceeds to  play this strategy.  Observe that every move he makes inside $\vee_k w_t$ or $\vee_k w'_s$ must be answered my a move within one of these factors, since Player 2 is not allowed to jump over an occurrence of $\vee_k$.   Thus Player 1 wins the $r$-round game in $w,w'$.

The argument for the case where $k$ is even, and for the languages $C_k$, is identical.  Observe that since $T_1$ is defined by a formula of quantifier depth 2, we can take $r=k+1$.

It remains to prove the claim about unbounded alternation depth.  It is possible to give an elementary proof  of this using games. However, by deploying some more sophisticated results from circuit complexity, we can quickly see that the claim is true. Let us suppose that we have a language $L\subseteq{\{0,1\}}^*$ recognized by a constant-depth polynomial-size family of unbounded fan-in boolean circuits; that is, $L$ belongs to the circuit complexity class $AC^0$.  We can encode the pair consisting of a word $w$ of length $n$ and the circuit for length $n$ inputs by a word $p(w) \in C_n$. We now  have $w\in L$ if and only if $p(w)\in T_n$.

Now if the alternation depth of all the $T_n$ is bounded above by some fixed integer $d$,  then we can recognize every $T_n$ by a polynomial-size family of circuits of depth $d$.  We can use this to obtain a polynomial family of circuits of depth $d$ recognizing $L$.  This contradicts the fact (see Sipser~\cite{Sipser}) that the required circuit depth of languages in $AC^0$  is unbounded.
\end{proof}

The situation appears to be different if we bound the size of the alphabet.  In an earlier paper~\cite{KLPS}, we showed that if $|A|=2$, then the alternation depth of languages in $A^*$ definable in $\fotwobet$ is bounded above by 3.  We conjecture that for any fixed alphabet $A$, the alternation depth is bounded above by a linear function of $|A|$, but this question remains open.


\section{Temporal logics and satisfiability}\label{sec:tlsat}
\newcommand{\oomit}[1]{}

We denote by $\utl$ temporal logic with two operators $\fut$ and $\past$. ($\fut$ and $\past$ stand for `future' and `past'.)
 Atomic formulas are propositions
$p\in PV$, a finite set of propositions.  Formulas are built from atomic formulas by applying the boolean operations $\wedge,\vee$, and $\neg$, and the modal operators $\gamma\mapsto \fut\gamma$, $\gamma\mapsto\past\gamma$.

Temporal logics have been interpreted over infinite as well as finite words. Here we confine ourselves only to finite words.
So the propositions $PV$ can be just the alphabet $A$.
We interpret these formulas in marked words. Thus $(w,i)\models a$ if $w(i)=a$, where $w(i)$ denotes the $i^{th}$ letter of $w$. Boolean operations have the usual meaning. We define $(w,i)\models \fut\gamma$ if there is some $j>i$ such that $(w,j)\models\gamma$, and $(w,i)\models\past\gamma$ if there is some $j<i$ with $(w,j)\models\gamma$.

In the setting of verification, it is convenient working with words over a \emph{set} of propositions $\wp(PV)$
as the alphabet, and a boolean formula determining the set which holds at a position $i$.
We also find it convenient to use a set of propositions in Section~\ref{sec:twosat}
where we have to perform reductions between problems.

We can also interpret a formula in ordinary, that is, unmarked words, by defining $w\models\gamma$ to mean $(w,1)\models\gamma$.  Thus temporal formulas, like first-order sentences, define languages in $A^*$.  The temporal logic $\utl$ is known to define exactly the languages definable in $FO^2[<]$ (see~\cite{EVW,TW}).

The \emph{size} of a temporal logic formula can be measured as usual,
using the parse tree of the formula, or
using the parse \emph{dag} of the formula, where a syntactically unique subformula
occurs only once.
The latter representation is shorter and is used in the
formula-to-automaton construction~\cite{VW}.
The complexity of decision procedures for model checking and
satisfiability do not change (see~\cite{DGL}).
For our results, we will use the latter notion of \emph{dag-size}.

\subsection{Betweenness of letters}

We now define new temporal logics by modifying the modal operators $\fut$ and $\past$ with \textit{betweenness and threshold constraints}---these are versions of the between relations $\diam{u}(x,y)$ and $(a,k)(x,y)$ that we introduced earlier.   Let $B\subseteq A$.  A threshold constraint is an expression of the form $\#B\sim c$,  where $c\geq 0$, and $\sim$ is one of the symbols $\{<,\leq,>,\geq,=\}$. Let  $w\in A^*$ and $1\leq i<j\leq |w|$. We say that $(w,i,j)$ satisfies the threshold constraint $\#B\sim c$ if
\[|\{k: i<k<j\text{ and }w(k)\in B\}|\sim c.\]
We can combine threshold constraints with boolean operations $\wedge,\vee,\neg$.  We define satisfaction of a boolean combination of threshold constraints in the obvious way---that is, $w(i,j)$ satisfies $g_1\vee g_2$ if and only if $(w,i,j)$ satisfies  $g_1$ or $g_2$, and likewise for the other boolean operations.

If $g$ is a boolean combination of threshold constraints, then our new operators $\gfut{g}$ and $\gpast{g}$ are defined as follows: $(w,i)\models \gfut{g}\gamma$ if there exists $j>i$ such that $(w,i,j)$ satisfies $g$ and $(w,j)\models\gamma$, $(w,i)\models\gpast{g}\gamma$ if and only if there exists $j<i$ such that $(w,j,i)$ satisfies $g$ and $(w,j)\models\gamma$.

\begin{exas}
\noindent We can express $\fut\gamma$ with threshold constraints as $\gfut{\#\emptyset=0}\gamma$.

We use $\nextt$ to denote the `next' operator:  $(w,i)\models \nextt\gamma$ if and only if $(w,i+1)\models\gamma$. We can express this with threshold constraints by $\gfut{\#A=0}\gamma$.

We can define the language ${(ab)}^+$ over the alphabet $\{a,b\}$ as the conjunction of several subformulas: $a\wedge \nextt b$ says that the first letter is $a$ and the second $b$.  $\neg \fut (a\wedge \nextt  a)$ says that no  occurrence of $a$ after the first letter is immediately followed by another $a$, and similarly we can say that no occurrence of $b$ is followed immediately by another $b$. The formula $\fut(b\wedge\neg\nextt (a\vee b))$ says that the last letter is $b$.
\end{exas}

We denote by  \bthtl temporal logic with these modified operators $\gfut{g}$ and $\gpast{g}$, where $g$ is a boolean combination of threshold constraints.
We also define several fragments of  \bthtl.
In \thtl we further restrict guards $g$ to be atomic, rather than boolean combinations.
In \invtl we restrict \thtl to constraints of the form $\#B=0$---we call these \textit{invariance constraints} since they assert invariance of absence of letters of $B$ in parts of words.
In \binvtl we allow boolean combinations of such invariance constraints.

\begin{exa}
It is useful to have boolean combinations of threshold constraints.
The language $STAIR_k$ given in Section~\ref{sec:twotwo} can be defined by
$\fut (\gfut{\#a=k \land \#b=0} ~true)$.
\end{exa}

\begin{thm}\label{thm.tlequivalence} The logics \thtl, \bthtl, \invtl, \binvtl,
\fotwothr and \fotwobet all define the same family of languages.
\end{thm}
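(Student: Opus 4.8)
The plan is to establish Theorem~\ref{thm.tlequivalence} by proving a cycle of inclusions among the six logics, using two tools that are already available to us: the expressive equivalence $\fotwobet=\fotwothr$ from Theorem~\ref{thm.invequalsthr}, and the syntactic containments among the temporal fragments that are immediate from their definitions. Since Theorem~\ref{thm.invequalsthr} already collapses the two first-order logics to a single expressive class, it suffices to show that all four temporal logics, together with this first-order class, coincide. The cleanest route is to prove the chain
\[
\invtl \subseteq \binvtl \subseteq \thtl \subseteq \bthtl \subseteq \fotwothr = \fotwobet \subseteq \invtl,
\]
where the first four inclusions are the easy direction and the last inclusion is the substantive one that closes the loop.

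First I would dispatch the syntactic inclusions. The containment $\invtl\subseteq\binvtl$ is trivial, since an invariance constraint $\#B=0$ is a (degenerate) boolean combination of invariance constraints. For $\binvtl\subseteq\thtl$ one notes that a boolean combination of invariance constraints need not be atomic, so this inclusion is \emph{not} purely syntactic; rather one argues that a guarded future modality $\gfut{g}$ with $g$ a boolean combination of $\#B=0$ constraints can be simulated by nesting and combining atomic-guard modalities of \thtl. The inclusion $\thtl\subseteq\bthtl$ is again immediate from the definitions, as atomic guards are boolean combinations of themselves. These steps require only unwinding the semantics of the guarded operators given in the text and should be routine.

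The heart of the matter is the translation $\bthtl\subseteq\fotwothr$ and, closing the cycle, $\fotwobet\subseteq\invtl$. For the forward translation I would proceed by structural induction on a \bthtl formula $\gamma$, producing a \fotwothr formula $\gamma^*(x)$ with one free variable such that $(w,i)\models\gamma$ iff $w\models\gamma^*(i)$. Atomic propositions and boolean connectives are handled directly; for a modality $\gfut{g}\delta$ one writes $\exists y\,(x<y \wedge g^\dagger(x,y)\wedge \delta^*(y))$, where $g^\dagger$ is the obvious rendering of the boolean combination of threshold constraints $\#B\sim c$ into the atomic relations $(a,k)(x,y)$ of \fotwothr---here the threshold relations of \fotwothr are exactly what let us express $\#B\geq c$ and hence, by boolean combination, every constraint $\#B\sim c$. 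Crucially, since $\delta^*$ reuses the single free-variable slot, the two-variable discipline is maintained: the witness $y$ is quantified and then the inductively-given formula $\delta^*$ is instantiated at $y$, freeing $x$ for reuse deeper in the recursion. The reverse direction, translating \fotwobet back into a temporal fragment, is where the real work lies.

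\textbf{The main obstacle} will be the inclusion $\fotwobet\subseteq\invtl$, which must re-derive, in our between-enriched setting, the classical fact that $\fotwo[<]$ collapses to unary temporal logic. I expect to lean on the game characterization of Section~\ref{sec:efgame}: by Theorem~\ref{thm.game_equiv} and Corollary~\ref{cor.game_equiv}, each $\equiv_k$-class is \fotwobet-definable, so it suffices to show that every $\equiv_k$-class is \invtl-definable, and since \fotwobet-definable languages are unions of such classes the inclusion follows. The plan is to exhibit, for each marked word $(w,i)$, a \invtl formula $\chi^k_{(w,i)}$ of modal depth $k$ that characterizes its $\equiv_k$-class---the invariance guards $\#B=0$ encoding precisely condition (iii) of the game (the set of letters jumped over), while the ``same destination letter'' condition (ii) is captured by the atomic proposition asserted at the target and the directionality condition (i) by the choice of $\gfut{}$ versus $\gpast{}$. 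The delicate point is verifying that the finitely many distinct formulas $\chi^k$ arising from the finitely many $\equiv_k$-classes genuinely separate those classes, i.e. that $(w_1,i_1)\equiv_k(w_2,i_2)$ iff they satisfy the same depth-$k$ \invtl formulas; this is proved by induction on $k$ by matching a Player~1 move to the outermost modality of a distinguishing formula. Because the game's jumped-letter condition is an \emph{invariance} condition (the jumped set being empty for a given $B$ is $\#B=0$), the weakest fragment \invtl already suffices, which is what makes the entire cycle close at the bottom rather than at \bthtl.
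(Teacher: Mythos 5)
Your overall architecture---a cycle of inclusions, with $\bthtl\subseteq\fotwothr$ by structural translation and Theorem~\ref{thm.invequalsthr} collapsing the two first-order logics---parallels the paper's proof, but your closing step $\fotwobet\subseteq\invtl$, which you yourself identify as the real work, contains a genuine gap. You claim that each $\equiv_k$-class is defined by an \invtl formula of modal depth $k$, on the grounds that the game's condition (iii) ``is an invariance condition.'' It is not: condition (iii) is an \emph{equality} of the sets of jumped letters, and only its negative half (``no letter outside $S$ was jumped'') is an invariance constraint $\#B=0$; the positive half (``every letter of $S$ actually occurs in the jumped segment'') is the \emph{negation} of an invariance constraint, which a single \invtl guard cannot express. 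Concretely, your biconditional ``$(w_1,i_1)\equiv_k(w_2,i_2)$ iff they satisfy the same depth-$k$ \invtl formulas'' fails already at $k=1$: take $w_1 = d\,a\,c\,b\,c$ and $w_2 = d\,a\,c\,c\,b\,c$ over $A=\{a,b,c,d\}$, both marked at position $1$. Depth-$1$ \invtl facts reduce (the modality distributes over disjunctions, and each depth-$0$ formula is equivalent to a disjunction of letters) to statements ``some future occurrence of $\ell$ has jumped set disjoint from $B$''; since the jumped sets of successive occurrences of a fixed letter $\ell$ increase under inclusion, these facts are determined by the jumped set of the \emph{first} future occurrence of each letter, and these agree in the two marked words ($\emptyset$ for $a$, $\{a\}$ for $c$, $\{a,c\}$ for $b$, none for $d$, no past positions). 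Yet Player 1 wins the one-round game by jumping in $w_2$ to the $c$ at position $4$, whose jumped set $\{a,c\}$ is realized by no $c$ in $w_1$ (the two $c$'s there jump $\{a\}$ and $\{a,b,c\}$). So depth-$k$ \invtl formulas cannot pick out $\equiv_k$-classes, and your cycle does not close.

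The inclusion $\fotwobet\subseteq\invtl$ is of course true, but it cannot be obtained at matching modal depth, and the paper closes the loop differently. It first proves $\fotwobet=\binvtl$ at matching depth---this is exactly your game argument, and it works there because the exact-jumped-set condition \emph{is} a boolean combination of invariance constraints, namely $\bigwedge_{a\notin S}(\#\{a\}=0)\wedge\bigwedge_{a\in S}\neg(\#\{a\}=0)$---and only then translates $\binvtl$ into $\invtl$ by a nesting trick: rewrite the guard in disjunctive normal form, pull disjunctions out of the modality, and for a conjunction such as $(\#\{a\}=0)\wedge(\#\{b\}>0)\wedge(\#\{c\}>0)$ guess the order in which $b$ and $c$ first appear, as in $\gfut{\#\{a,b,c\}=0}(b\wedge\gfut{\#\{a,c\}=0}(c\wedge\gfut{\#\{a\}=0}\gamma))$ disjoined with the symmetric formula. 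This costs a blowup in modal depth, which is precisely what your matching-depth claim overlooks. You in fact already have the nesting idea in your step $\binvtl\subseteq\thtl$; the repair is simply to reroute: prove $\fotwobet\subseteq\binvtl$ by your game argument, then apply the nesting translation, which lands in \invtl since every guard it produces is an invariance constraint. With that reorganization your proof becomes essentially the paper's.
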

\begin{proof}
In terms of syntactic classes, we obviously have
\[\invtl\subseteq\binvtl\subseteq\bthtl,\]
and
\[\invtl\subseteq\thtl\subseteq\bthtl.\]

We show that $\bthtl\subseteq\invtl$ using a chain of inclusions
passing through the two-variable logics.
In performing these translations, we will only discuss the future modalities,
since the past modalities can be treated the same way.

We can directly translate any formula in \bthtl into an equivalent formula of \fotwothr
with a single free variable $x$:  A formula $\gfut{g}\psi$,
where $g$ is a boolean combination of threshold constraints,
is replaced by a quantified formula $\exists y (y>x \wedge \alpha\wedge \beta(y))$,
where $\alpha$ is a boolean combination of formulas $(a,k)(x,y)$ and $\beta$ is the translation of $\psi$.

We know from Theorem~\ref{thm.invequalsthr} that any formula of \fotwothr can in turn be translated
into an equivalent formula of \fotwobet. Furthermore, \fotwobet is  equivalent to \binvtl:
A simple game argument shows that equivalence of marked words with respect to \binvtl formulas
of modal depth $k$ is precisely the relation $\equiv_k$ of equivalence with respect to
\fotwobet formulas with quantifier depth $k$.

So it remains to show that $\binvtl\subseteq \invtl$:
We do this by translating $\gfut{g}\phi$,
where $g$ is a boolean combination of constraints of the form $\#B=0$,
into a formula that uses only single constraints of this form.

We can rewrite the boolean combination as the disjunction of conjunctions of constraints
of the form $\#\{a\}=0$ and $\#\{a\}>0$.
Since, easily, $\gfut{g_1\vee g_2}\gamma$ is equivalent to $\gfut{g_1}\gamma \vee\gfut{g_2}\gamma$,
we need only treat the case where $g$ is a conjunction of such constraints.
We  illustrate the  general procedure  for translating such conjunctions with an example.
Suppose $A$ includes the letters $a,b,c$.
How do we express $\gfut{g}\gamma$, where $g$ is $(\#\{a\}=0)\wedge(\#\{b\}>0)\wedge(\#\{c\}>0)$?
The letters $b$ and $c$ must appear in the interval between the current position
and the position where $\gamma$ holds.  Suppose that $b$ appears before $c$ does.  We write this as
\[\gfut{\#\{a,b,c\}=0}(b\wedge\gfut{\#\{a,c\}=0}(c\wedge\gfut{\#\{a\}=0}\gamma)).\]
We take the disjunction of this with the same formula in which the roles of $b$ and $c$ are reversed.
\end{proof}
Finally, we remark that $\gfut{\#\{a_1, \ldots, a_p\}=0} \gamma ~~\equiv~~ \bigwedge_{1 \leq i \leq p} ~ \gfut{\#\{a_i\}=0\}} \gamma$, and similarly for
the past modality. Lemma~\ref{lem:singlepair} in the next section gives a proof of this. Thus, \invtl can be simplified to consider singleton sets only.

\subsection{Betweenness of factors}

The temporal logics of previous section allowed $\fut$ and $\past$ modalities to be constrained by threshold counting constraints
on occurrence of letters. In this section, we extend this to threshold counting constraints on occurrence of factors (words).
For example, a formula $F_{\{(\#ab \geq 3) \land (\#c=0)\}} ~~c$ states that there is next occurrence of $c$ in future with at least 3 occurrences of factor $ab$ in-between.
We denote by \bthfactl, the temporal logic with these modified operators $\gfut{g}$ and $\gpast{g}$, where $g$ is a boolean combination of threshold factor constraints.  We also define several fragments of \bthfactl:  For example, \bfactl has the modalities
$\gfut{R}\gamma$ and $\gpast{R}\gamma$ where $R$ is a boolean combination of \emph{requirements on occurrences of factors}. Since disjunctions can be pulled out, we write the conjunctive requirements as a finite set of positive and negative  factors. Thus, for simplicity, $R$ has the form $\{u_1,\ldots,u_p, \neg v_1, \ldots, \neg v_r\}$ in $\bfactl$. This $R$ denotes the constraint $(\#u_1 > 0 \land \ldots \#u_p > 0 \land \#v_1=0 \land \ldots \#v_r = 0)$. Logic \factl is a further restricted version where there are no positive requirements and a single  negative requirement is permitted, i.e. $R$ has the form $\{\neg v\}$.

Formally, we have $(w,i)\models \gfut{R}\gamma$ if for some $j>i$, $(w,j)\models \gamma$
and for every positive factor $u \in R$ there is an occurrence of factor $u$
between, that is, for some $i < k < k+|u|-1 < j$, $(w,k)\models st(u)$,
and for every negative factor $\lnot v \in R$ there is no occurrence of
factor $v$ between,
that is, for all $i < k < k+|v|-1 < j$, $(w,k)\not\models st(v)$.

\medskip
\noindent\textit{Notation.}
For convenience,
for $u = a_1 a_2\dots a_n$, we will abbreviate by $st(u)$ the \ltl formula
$a_1 \land \nextt(a_2 \land (\dots \land \nextt a_n))$ and by
$end(u)$ the formula $a_n \land \prev(a_{n-1} \land (\dots \land \prev a_1))$.
Also, for $u=\epsilon$ (empty string) define $st(u)=end(u)=true$.

\begin{exas}
The formula $\henceforth(st(bb) \limplies \past(end(aa)))$ specifies that
there every occurrence of factor $bb$  is preceded by an $aa$.

$\henceforth[st(bb) \land \past(end(bb)) \limplies \gpast{\{aa, \neg bb \}}(end(bb))]$ says
that between any two occurrences of $bb$ there must be an $aa$.
\end{exas}

\begin{thm}\label{thm.factlequivalence}
The logics \bthfactl, \bfactl, \factl, \fotwothrfac and \fotwobetfac all define the same family of languages.
\end{thm}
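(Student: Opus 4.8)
The plan is to mirror the proof of Theorem~\ref{thm.tlequivalence}, systematically replacing letter-counting constraints with factor-counting constraints and invoking Theorem~\ref{thm.invequalsthrfac} in place of Theorem~\ref{thm.invequalsthr}. Syntactically we have the trivial inclusions $\factl\subseteq\bfactl\subseteq\bthfactl$, so it suffices to close a cycle of semantic inclusions that runs up through the two two-variable logics and back down to \factl.

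First I would translate \bthfactl into \fotwothrfac exactly as in the letter case: a modality $\gfut{g}\psi$, with $g$ a boolean combination of threshold factor constraints, is replaced by $\exists y(y>x\wedge\alpha\wedge\beta(y))$, where $\alpha$ is the boolean combination of atomic relations $(u,k)(x,y)$ of \fotwothrfac corresponding to $g$, and $\beta$ is the inductively obtained translation of $\psi$ (the past modalities being symmetric). By Theorem~\ref{thm.invequalsthrfac} we then have $\fotwothrfac=\fotwobetfac$. Next I would prove $\fotwobetfac=\bfactl$ by an Ehrenfeucht-Fra\"{\i}ss\'e argument parallel to the equivalence $\fotwobet=\binvtl$ used in Theorem~\ref{thm.tlequivalence}: one plays the \fotwobet-style game, but records, in place of the set of letters jumped over, the set of factors (up to the bounded length occurring in the formula) that appear strictly between source and destination, and checks that \bfactl-equivalence of marked words at modal depth $k$ coincides with the resulting depth-$k$ game equivalence for \fotwobetfac. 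Since a \bfactl requirement $R=\{u_1,\dots,u_p,\neg v_1,\dots,\neg v_r\}$ is precisely a positive/negative condition on this set of jumped factors, the match between guards and game conditions is immediate. At this point $\bthfactl=\bfactl=\fotwothrfac=\fotwobetfac$.

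It remains to prove the one genuinely new inclusion $\bfactl\subseteq\factl$, i.e.\ to express every guard $\gfut{R}\gamma$ using only single negative factor constraints $\gfut{\{\neg v\}}$ together with the base operators $\fut,\past,\nextt,\prev$. Two reductions are needed. The reduction from several negative constraints to one is the factor analogue of Lemma~\ref{lem:singlepair}: for any formula $\psi$, $\gfut{\{\neg v_1,\dots,\neg v_r\}}\psi\equiv\bigwedge_i\gfut{\{\neg v_i\}}\psi$, proved by a nearest-witness argument---given witnesses $j_1,\dots,j_r$, their least element $j^\ast=\min_i j_i$ still satisfies $\psi$, and since $(i,j^\ast)$ is a subinterval of every $(i,j_i)$ and absence of an occurrence of $v_i$ is inherited by subintervals, $j^\ast$ simultaneously forbids all the $v_i$. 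To eliminate a positive requirement I would locate an occurrence of $u$ and split the interval at that block, writing $\gfut{\{u,\neg v_1,\dots,\neg v_r\}}\gamma$ in the shape
\[\gfut{\{\neg v_1,\dots,\neg v_r\}}\bigl(st(u)\wedge C\wedge\nextt^{|u|-1}\gfut{\{\neg v_1,\dots,\neg v_r\}}\gamma\bigr),\]
and then applying the first reduction to each multi-negative modality; here $C$ is a bounded local formula, built from $st$, $end$, $\nextt$, $\prev$, that rules out the occurrences of the $v_i$ which the two flanking modalities cannot see.

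The main obstacle is precisely the design and justification of $C$. Because factors may overlap, an occurrence of a forbidden $v_i$ can straddle the fixed $u$-block $[k,k+|u|-1]$, or lie entirely inside it, and such an occurrence is reported by neither the modality reaching the start of the block nor the one leaving its end. The observation that makes the reduction go through is that, $u$ being a fixed word, there are only finitely many alignments of each $v_i$ against the block, so every straddling or internal occurrence is a bounded-distance event around $k$ and can be forbidden by a boolean combination of \ltl formulas; the genuinely delicate point is the bookkeeping at the interval endpoints, since a straddling occurrence whose start is at or before the current position must not be counted, which couples the local analysis near the $u$-block with that near the evaluation point when the two are within $\max_i|v_i|$ of each other. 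This endpoint interaction between the fixed positive factor and the overlapping forbidden factors is the technical heart of the argument and is exactly what the word-factorization machinery developed later in the paper is built to control; the letter case of Theorem~\ref{thm.tlequivalence} avoids it entirely, because a single letter can never straddle a split point.
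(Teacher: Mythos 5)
Your overall architecture does match the paper's: the syntactic inclusions, the translation of \bthfactl into \fotwothrfac, the appeal to Theorem~\ref{thm.invequalsthrfac} for $\fotwothrfac=\fotwobetfac$, a game argument for $\fotwobetfac=\bfactl$, and the identification of $\bfactl\subseteq\factl$ as the one hard step, attacked by splitting the interval at an occurrence of $u$. You have also correctly diagnosed where the difficulty lies (occurrences of forbidden factors that straddle the $u$-block or an endpoint). But there is a genuine gap exactly there: you never construct $C$, and the shape you propose cannot work as stated. Your $C$ is a conjunct evaluated at the position $s$ where $u$ starts, inside the scope of the outer modality, and a formula evaluated at $s$ has no access to the original evaluation point $x$. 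Hence $C$ cannot distinguish a pre-overlap occurrence of some $v_i$ ending at $s-1$ that starts after $x$ (which must be forbidden) from one that starts at or before $x$ (which is allowed --- the cases (3g), (3h) of Figure~\ref{fig:overlap}); a plain local check such as $\neg\prev\, end(v')$ makes the translation strictly too strong whenever $s-x<|v'|$. The same obstruction recurs at the other endpoint: a post-overlap check near $e$ cannot know whether the offending occurrence ends before or at/after the witness $y$ chosen by the inner modality, and the super-overlap condition requires a \emph{conditional} coupling of a check at $s$ with a check at $e$.

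The paper's resolution is not a cleverer local $C$ but a restructuring of the flanking modalities themselves: the formula $\beta$ is an outer disjunction over the cases $s-x>k$ and $s-x=i\leq k$, where in the first case the unrestricted check $\neg Pre(Opre_1)$ is sound (every pre-overlap has length at most $k$, hence starts after $x$), and in the second case the exact distance is pinned by $\nextt^i$ so that only the length-restricted checks $\neg Pre(Opre_1,i)$ are imposed; symmetrically, the $y$-endpoint is handled by the subsidiary formula $\delta(V)$ with its own case split between $\gfut{\neg v}\gamma\wedge\neg Post(V)$ and $\bigvee_i(\nextt^i\gamma\wedge\neg Post(V,i))$; and super-overlaps are handled by sub-disjuncts choosing between $\delta(Opost_1)$ and $\delta(Opost_1\cup Opost_2)$ according to whether the pre-super-overlap is present. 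This case analysis and its correctness proof (Lemmas~\ref{lem:formcond}, \ref{lem:deltacond}, \ref{lem:nfacred}) are the heart of the theorem and are missing from your proposal. Two smaller points: the word-factorization machinery of Section~\ref{sec:fac}, to which you defer the difficulty, has nothing to do with this reduction (it serves the algebraic characterization of \fotwobet); and you never reduce several positive requirements $u_1,\dots,u_p$ to one, which the paper does with the max-witness half of Lemma~\ref{lem:singlepair}, dual to the min-witness argument you give for the negatives.
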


As in the proof of Theorem~\ref{thm.tlequivalence}, the key question is how to translate
\bfactl into \factl. All other reductions are straightforward extensions (substituting factors in place of letters) of those given in the previous proof. Hence we omit repeating these.
For showing $\bfactl \subseteq \factl$, We consider the modality
$\gfut{\{u_1,\ldots,u_p, \neg v_1, \ldots, \neg v_r\} } \gamma$  and construct an equivalent
formula using only the modality of the form $\gfut{\{\neg v\}} \gamma$. Note that the size of the former formula is taken to be
$(|u_1| + \cdots + |u_p| + |v_1| + \cdots + |v_r|) + size(\gamma) +1$.

\begin{lem}\label{lem:singlepair}
$\gfut{\{u_1,\ldots,u_p, \neg v_1, \ldots, \neg v_r\} } \gamma \quad \equiv \quad
 \bigwedge_{1 \leq i \leq p} \bigwedge_{1 \leq j \leq r} \gfut{\{u_i,\neg v_j\}} \gamma$.
 The dag-size of the right hand formula is polynomial in the size of the left hand formula.
\end{lem}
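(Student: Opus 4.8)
The plan is to prove the equivalence by unwinding the semantics at an arbitrary evaluation point and treating the two directions separately: the forward one is immediate, while the backward one requires a \textbf{min--max} selection of witnesses. Throughout I would fix a word $w$ and a position $m$ at which the two formulas are evaluated, write $R=\{u_1,\dots,u_p,\neg v_1,\dots,\neg v_r\}$, and assume $p,r\geq 1$ (the degenerate cases $p=0$ and $r=0$ are the purely negative and purely positive statements, proved by the same argument using only the min, respectively only the max, step; the purely negative instance is exactly what is needed for the remark following Theorem~\ref{thm.tlequivalence}). For positions $m<n$ say that a factor $u$ \emph{appears in} $(m,n)$ if there is a $k$ with $m<k$, $k+|u|-1<n$ and $(w,k)\models st(u)$. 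With this terminology $(w,m)\models\gfut{R}\gamma$ holds iff there is a witness $n>m$ with $(w,n)\models\gamma$ such that every $u_i$ appears in $(m,n)$ and no $v_j$ appears in $(m,n)$, whereas the right-hand side holds iff for every pair $(i,j)$ there is a witness $n_{ij}>m$ with $(w,n_{ij})\models\gamma$, with $u_i$ appearing in $(m,n_{ij})$ and $v_j$ not appearing in $(m,n_{ij})$.

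First I would record the two monotonicity observations that drive everything. \textbf{(Up)} If $u$ appears in $(m,n)$ and $n\leq n'$, then $u$ appears in $(m,n')$, since the same occurrence $k$ still satisfies $k+|u|-1<n\leq n'$. \textbf{(Down)} If $v$ does \emph{not} appear in $(m,n)$ and $n'\leq n$, then $v$ does not appear in $(m,n')$, since any occurrence lying in $(m,n')$ would also lie in $(m,n)$. Given these, the forward direction is trivial: a single witness $n$ for the left-hand side serves, by taking $n_{ij}=n$, as a witness for every conjunct of the right-hand side.

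The backward direction is where the real work lies, and I would carry it out in two stages. From the right-hand side obtain witnesses $n_{ij}$. For each fixed $j$ set $N_j=\max_i n_{ij}$, attained at some $i^\ast$. Then $(w,N_j)\models\gamma$ because $N_j=n_{i^\ast j}$; by \textbf{(Up)} every $u_i$ appears in $(m,N_j)$, since it already appears in $(m,n_{ij})$ and $n_{ij}\leq N_j$; and $v_j$ does not appear in $(m,N_j)=(m,n_{i^\ast j})$. Thus $N_j$ handles all positive factors at once while still excluding its own $v_j$. Now set $N=\min_j N_j$, attained at some $j^\ast$. Then $(w,N)\models\gamma$ (as $N=N_{j^\ast}$), every $u_i$ appears in $(m,N)=(m,N_{j^\ast})$ by the previous step, and for \emph{every} $j$ the factor $v_j$ fails to appear in $(m,N)$: indeed $v_j$ does not appear in $(m,N_j)$ and $N\leq N_j$, so \textbf{(Down)} applies. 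Hence $N$ is a single witness meeting all constraints of $R$, giving $(w,m)\models\gfut{R}\gamma$. The past modality $\gpast{R}$ is handled symmetrically by reflecting the order.

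Finally, for the size claim I would observe that in the parse \emph{dag} the subformula $\gamma$ is shared across all $pr$ conjuncts and so contributes only once; the overhead consists of the $pr$ guarded modalities, the $pr-1$ conjunction nodes, and the guards, whose total size is $\sum_{i,j}(|u_i|+|v_j|)=r\sum_i|u_i|+p\sum_j|v_j|$, at most the square of the left-hand guard size. This is polynomial in the declared size $(|u_1|+\cdots+|u_p|+|v_1|+\cdots+|v_r|)+size(\gamma)+1$. I expect the min--max verification of the backward direction to be the only genuinely delicate point: the two monotonicity facts must be stated with the correct inequality directions, since reversing either one breaks the selection of $N$.
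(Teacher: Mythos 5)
Your proof is correct and takes essentially the same approach as the paper: the paper also handles the backward direction by taking a maximum of witnesses to satisfy all positive factor requirements and a minimum to preserve all negative ones, and obtains the same dag-size bound by sharing $\gamma$ across the $p\cdot r$ conjuncts. The only (immaterial) difference is that the paper factors the argument into two composed equivalences, first collapsing the index $i$ and then $j$ (effectively using the witness $\max_i \min_j n_{ij}$), whereas you run a single merged argument yielding $\min_j \max_i n_{ij}$; both selections work for exactly the monotonicity reasons you state.
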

\begin{proof}
First note that
\[
\gfut{\{u_1,\ldots,u_p, \neg v_1, \ldots, \neg v_r\} } \gamma \quad \equiv \quad
 \bigwedge_{1 \leq i \leq p} \gfut{\{ u_i, \neg v_1, \ldots, \neg v_r\} } \gamma
\]
The forward direction is clear. For the converse, let $k$ be the position where the right hand
conjunction holds. Let $x_i > k$ such that
$(w,x_i)\models \gamma$ witnessing $(w,k)\models \gfut{\{ u_i, \neg v_1, \ldots, \neg v_r\} } \gamma$.
Consider $x=\max\{x_i\}$. Then, $(w,x)\models \gamma$ witnessing the left hand formula.

Also,
\[
\gfut{\{ u, \neg v_1, \ldots, \neg v_r\} } \gamma \quad \equiv \quad
 \bigwedge_{1 \leq j \leq r} \gfut{\{u,\neg v_i\}} \gamma
\]
The forward direction is clear. For the converse, let $k$ be the position where the right hand
conjunction holds. Let $x_i > k$ s.t.
$(w,x_j)\models \gamma$ witnessing $(w,k)\models \gfut{\{ u, \neg v_j\}}$.
Consider $x=\min\{x_i\}$. Then, $(w,x)\models \gamma$ witnessing the left hand formula.

The result follows by combining the two steps. It is also clear that resulting formula has $p.r$ conjuncts of constant size which each use $\gamma$ as subformula. Hence, the dag-size of the conjunction is polynomial
in the size of the left hand formula.
\end{proof}

Because of above Lemma~\ref{lem:singlepair}, we only need to consider formulas with modalities of the form
$\gfut{\{u,\neg v\}} \gamma$  and $\gpast{\{u,\neg v\}} \gamma$.
Next, we will reduce $\gfut{\{u,\neg v\}} \gamma$ to an equivalent formula only using the modalities of the form
$\gfut{\{\neg v\}} \gamma$  which we abbreviate as $\gfut{\neg v}\gamma$.
A similar reduction can be carried out for the past modality too.

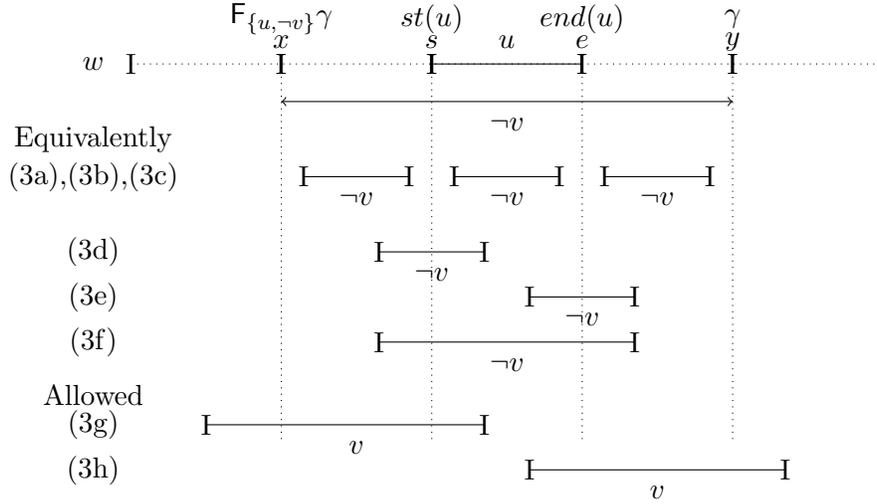
\begin{figure}
\begin{tikzpicture}[scale=1.0,transform shape]

\draw (0.5,5) node {$w$};
\draw(1,5.3) node{};
\draw[style=dotted] (1,5) node{I} -- (11,5);
\draw (3,5) node{I};\draw(3,5.3) node{$x$};
\draw (3,5) node{I};\draw(3,5.6) node{$\gfut{\{u,\lnot v\}}\gamma$};
\draw (5,5) node{I};\draw(5,5.3) node{$s$};
\draw (5,5) node{I};\draw(5,5.6) node{$st(u)$};
\draw (7,5) node{I};\draw(7,5.3) node{$e$};
\draw (7,5) node{I};\draw(7,5.6) node{$end(u)$};
\draw (9,5) node{I};\draw(9,5.3) node{$y$};
\draw (9,5) node{I};\draw(9,5.6) node{$\gamma$};

\draw[style=dotted] (3,5) --(3,0);
\draw[style=dotted] (5,5) --(5,0);
\draw[style=dotted] (7,5) --(7,0);
\draw[style=dotted] (9,5) --(9,0);

\draw[style=solid] (5,5) node{I} -- (7,5);
\draw (6,5.3) node{$u$};

\draw[<->] (3,4.5) -- (9,4.5);
\draw (6,4.2) node{$\neg v$};
\draw (0.5,4) node{Equivalently};

\draw (0.5,3.5) node{(3a),(3b),(3c)};
\draw (3.3,3.5) node{I}  -- (4.7,3.5) node{I}; \draw (5.3,3.5) node{I}  -- (6.7,3.5) node{I};  \draw (7.3,3.5) node{I}  -- (8.7,3.5) node{I};
\draw (4,3.2) node{$\neg v$}; \draw (6,3.2) node{$\neg v$}; \draw (8,3.2) node{$\neg v$};

\draw (0.5,2.5) node{(3d)};
\draw (4.3,2.5) node{I} -- (5.7,2.5) node{I}; 
\draw (5,2.2) node{$\neg v$};

\draw (0.5,1.9) node{(3e)};
\draw (6.3,1.9) node{I} -- (7.7,1.9) node{I}; 
\draw (7,1.6) node{$\neg v$};

\draw (0.5,1.3) node{(3f)};
\draw  (4.3,1.3) node{I} -- (7.7,1.3) node{I};
\draw  (6,1.0) node{$\neg v$};


\draw  (0.5, 0.6) node{Allowed};
\draw (0.5,0.2) node{(3g)};  \draw (2,0.2) node{I} -- (5.7,0.2) node{I};
\draw (4.0,-0.1) node{$v$};

\draw (0.5,-0.4) node{(3h)}; \draw(6.3,-0.4) node{I} -- (9.7,-0.4) node{I};
\draw(8,-0.7) node{$v$};
\end{tikzpicture}
\caption{$w,x \models \gfut{\{u,\neg v\}} \gamma$ with allowed and forbidden occurrences of $v$,
classified as (3a)--(3f).}\label{fig:overlap} 
\end{figure}


We have to consider several cases which are illustrated in Figure~\ref{fig:overlap}.
Notice that, from the semantics, $(w,x)\models \gfut{\{u,\neg v\}} \gamma$ {\bf iff} for given $w,x$ there exist
$s,e,y$ such that $x < s < e < y$ and {\bf (1)} $w[s,e]=u$, {\bf (2)}~$(w,y)\models \gamma$, and {\bf (3)}~$v$ is not a factor of $w[x+1,y-1]$. Also, notice that, without loss of generality, it is sufficient to consider the earliest $s,e,y$ satisfying~(1),~(2) and to check for~(3). We shall assume this henceforth.

The above condition~(3) that  $v$ is not a factor of $w[x+1,y-1]$ is equivalent to the conjunction of conditions: $v$ is not a factor of any of {\bf (3a)}~$w[x+1,s-1]$, {\bf (3b)}~$w[s,e]$, {\bf (3c)}~$w[e+1,y-1]$,
as well as the three overlap conditions below.

\noindent
{\bf (3d')}~$v$ does not overlap $u$ at $s$, more precisely:
$v=v_1u_1 ~\mbox{and}~ u=u_1u_2 ~\mbox{with}~ u_1,u_2,v_1 \neq \epsilon$. We call $v_1$ a \emph{pre-overlap}
and let $Opre_1$ be the set of possible pre-overlaps.

\noindent
{\bf (3e')}~$v$ does not overlap $u$ at $e$, more precisely:
$u=u_1u_2 ~\mbox{and}~ v=u_1v_2 ~\mbox{with}~ u_1,u_2,v_2 \neq \epsilon$. We call $v_2$ a \emph{post-overlap}
and let $Opost_1$ be the set of all possible post-overlaps.

\noindent
{\bf (3f')}~$v$ does not overlap $u$ at both $s$ and $e$, that is,
$v = v_1 \cdot u \cdot v_2$ with $v_1 \neq \epsilon \lor v_2 \neq \epsilon$. We call $v_1$ a
\emph{pre-super-overlap} and $v_2$ a \emph{post-super-overlap}.
In this case we let $Opre_2=\{v_1\}$ and $Opost_2=\{v_2\}$ where $v_1,v_2$ are obtained by the  factorization of $v$
which makes $v_1$ the shortest.
If no such pair $(v_1,v_2)$ exists, let $Opre_2=Opost_2=\emptyset$.
Thus, $Opre_2,Opost_2$ are either both singleton sets or both are empty.

Note that, in condition~(3f') above, we have considered only the factorization $v=v_1 \cdot u \cdot v_2$ which makes $v_1$ the shortest
such factor. (There may be multiple factorizations of $v$ possible. See the example below.) This is justified as we are considering the earliest occurrence of $u$, and any other factorization will lead to an earlier occurrence of $u$.

\begin{exas}
For $u=aaa$ and $v=bbb$, trivially, $Opre_1=Opost_1=Opre_2=Opost_2=\emptyset$ since there are no overlaps.

For $u=ababbb$ and $v=bbabab$, because of pre-overlaps
$(v=bb \cdot abab,u=abab \cdot bb)$ and $(v=bbab \cdot ab,u=ab \cdot ab bb)$, we get $Opre_1=\{bb,bbab\}$. Note that which of these overlaps will
apply will depend of word $w$. If $w=bbabababbb$ then $Opre_1$ element $bbab$ will
characterise pre-overlap  whereas
if $w=bbababbb$ then $Opre_1$ element $bb$ will characterize the pre-overlap.
Also, due to post-overlap
$(u=abab \cdot bb,v=bb \cdot abab)$ we have $Opost_1= \{ abab \}$. Since,
$u$ is not subword of $v$, we have $Opre_2=Opost_2=\emptyset$.

For $u=aa$ and $v=aaaabbaaaa$, we have several possible overlaps. The only pre-overlap is  $(v=aaaabbaaa \cdot a,u=a \cdot a)$. Super-overlaps are  $v=aaaabbaa \cdot aa \cdot$, $aaaabba \cdot aa \cdot a$,
$aaaabb \cdot aa \cdot aa$, $aa \cdot aa \cdot bb aa aa$, $ a \cdot aa \cdot abbaaaa$, $\cdot aa \cdot aa bb aaaa$.
We have one post-overlap $(v=a \cdot aaabbaaaa,u= a \cdot a)$. Hence, we
get $Opre_1=\{ aaaabbaaa\}$, $Opost_1=\{aaabbaaaa\}$. It is sufficient to consider the canonical super-overlap $\cdot aa \cdot aabbaaaa$ giving $Opre_2=\{ \epsilon \}$ and $Opost_2=\{ aabbaaaa\}$.
\end{exas}

Observe that we can rephrase conditions (3d'-3f') to equivalent conditions (3d-3f) as follows.

\noindent
{\bf (3d)} for all pre-overlaps $v' \in Opre_1$ we have $v'$ is not a suffix of $w[x+1,s-1]$.

\noindent
{\bf (3e)} for all post-overlaps $v' \in Opost_1$ we have $v'$ is not a prefix of $w[e+1,y-1]$.

\noindent
{\bf (3f)} Since $Opre_2=Opost_2 = \emptyset$ or $Opre_2=\{v_1'\}, Opost_2=\{v_2'\}$, (3f)
 states that $Opre_2= \emptyset$, or, pre-super-overlap $v_1'$ being suffix of $w[x+1,s-1]$ implies
post-super-overlap $v_2'$ is not a prefix of $w[e+1,y-1]$. That is,
$v_1'uv_2'$ does not super-overlap $u$.

\begin{lem}\label{lem:formcond}
 $w,x \models \gfut{\{u,\neg v\}} \gamma$ {\bf iff} for given $w,x$ there exist $s,e,y$ s.t.
 $x<s<e<y$ and conditions~(1),~(2),~(3a-3f) are satisfied. \qed
\end{lem}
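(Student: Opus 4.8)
The plan is to fix $w$ and $x$ and reduce the biconditional, using the earliest-occurrence convention already adopted in the paragraph preceding the statement, to a single pointwise combinatorial equivalence. Concretely, the semantics of $\gfut{\{u,\neg v\}}\gamma$ already supplies the existential over $s,e,y$, and we argued there that we may assume $[s,e]$ is the earliest occurrence of $u$ strictly to the right of $x$ (with $y$ the corresponding earliest $\gamma$-position after $e$). So it suffices to prove that, for such a fixed $s,e,y$ satisfying (1) and (2), condition (3) — that $v$ is not a factor of $w[x+1,y-1]$ — is equivalent to the conjunction (3a)$\wedge\cdots\wedge$(3f). Establishing this equivalence pointwise immediately yields the lemma.

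The forward implication (3)$\Rightarrow$(3a)--(3f) is routine and needs no minimality assumption: if $v$ does not occur anywhere in $w[x+1,y-1]$, then in particular it occurs in none of the three subintervals $w[x+1,s-1]$, $w[s,e]=u$, $w[e+1,y-1]$, giving (3a)--(3c), and it cannot straddle either boundary of the occurrence $u=w[s,e]$, giving the overlap conditions (3d)--(3f), since any straddling occurrence would by definition exhibit a copy of $v$ inside $w[x+1,y-1]$.

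For the converse I would argue by contraposition: assume $v$ occurs in $w[x+1,y-1]$, say at $[q,q+|v|-1]$, and show that one of (3a)--(3f) fails. This is a case analysis on how $[q,q+|v|-1]$ meets $[s,e]$. If it lies strictly left of $s$, strictly inside $[s,e]$, or strictly right of $e$, then (3a), (3b), or (3c) fails respectively. If it ends inside $[s,e]$ while starting before $s$, then its part before $s$ is a pre-overlap in $Opre_1$ that is a suffix of $w[x+1,s-1]$, contradicting (3d); the symmetric situation at $e$ contradicts (3e).

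The main obstacle is the remaining super-overlap case, where $[q,q+|v|-1]$ strictly contains $[s,e]$, and this is precisely where the earliest-occurrence convention is indispensable. Here $v=v_a\,u\,v_b$, with $v_a=w[q,s-1]$ and $v_b=w[e+1,q+|v|-1]$ aligning the interior copy of $u$ with $[s,e]$. I would first show that this factorization must be the canonical one, i.e.\ $v_a=v_1'$ and $v_b=v_2'$ where $Opre_2=\{v_1'\}$ and $Opost_2=\{v_2'\}$: if a strictly shorter prefix $v_1'$ with $|v_1'|<|v_a|$ induced another factorization $v=v_1'\,u\,v_2'$, then $u$ would also occur at position $q+|v_1'|$, which satisfies $x<q\le q+|v_1'|<q+|v_a|=s$ and ends at $q+|v_1'|+|u|-1\le q+|v|-1\le y-1$, contradicting the minimality of $[s,e]$. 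Once the factorization is pinned to the canonical one, $v_1'$ is a suffix of $w[x+1,s-1]$ and $v_2'$ is a prefix of $w[e+1,y-1]$, which is exactly the configuration that (3f) forbids; hence (3f) fails, completing the contraposition and the proof.
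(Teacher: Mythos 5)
Your proof is correct and follows essentially the same route as the paper, which states this lemma with no separate proof, treating it as immediate from the decomposition of condition~(3) into (3a)--(3c) plus the overlap conditions, together with its remark that the earliest-occurrence convention forces any super-overlapping occurrence of $v$ to align with the canonical (shortest-$v_1$) factorization. Your contraposition, and in particular the minimality step showing the super-overlap factorization must be the canonical one (otherwise $u$ would occur strictly before $s$), is exactly the justification the paper sketches, written out in full.
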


We now define a formula $\beta \in \factl$ to enforce conditions~(1),~(2),~(3a-3f).
By the previous lemma, this will give $\beta$ equivalent to $\gfut{\{u,\neg v\}} \gamma$.
If $u=u_1 \cdot v \cdot u_2$, then condition (3b) is violated, and  trivially
$\gfut{\{u,\neg v\}} \gamma \equiv \false$.
In the rest of the section, we assume that $v$ is not a factor of $u$, and hence condition (3b) is always satisfied.

\bigskip

\noindent For $U,V \subseteq_{fin} A^*$, define as below, with
empty disjuncts vacuously evaluating to $\false$:
\begin{quote}
 $Pre(U) = \bigvee_{(v' \in U)} ~\prev end(v')$, \quad \quad
 $Pre(U,i) = \bigvee_{(v' \in U, |v'|<i)}  ~\prev end(v')$ \\
 $Post(V) = \bigvee_{(v' \in V)} ~ \nextt st(v')$, \quad \quad
 $Post(V,i) = \bigvee_{(v' \in V, |v'|<i)} \nextt st(v')$
\end{quote}

Then,  $(w,s)\models \neg Pre(U)$ states that no $v' \in U$ ends at $s-1$  in $w$.
Moreover, $(w,s)\models \neg Pre(U,i)$ states that no $v' \in U$ of length shorter than $i$ ends at $s-1$.
Formulas $\neg Pre(U)$ and $\neg Pre(U,i)$ will be used to enforce condition (3d). Similarly, formulas
$\neg Post(V)$ and $\neg Post(V,i)$ will be used to enforce condition (3e).

\medskip

For given $u,v$, let $Opre_1,Opost_1,Opre_2,Opost_2$, as well as predicates
$Pre(\cdot), Post(\cdot)$ on them, be syntactically determined as defined above.
These predicates have to be applied at different points.
We define a subsidiary formula $\delta(V)$ with a parameter
$V \subseteq_{fin} A^*$, to check the post-requirements.
Let $m= \max\{|v'| ~\mid~ v' \in V\}$.
\[
\delta(V) = [ (\gfut{\neg v} ~\gamma) ~~\land ~~\neg Post(V)]
    \quad \lor \quad \bigvee_{i=1}^m ~  [ (\nextt^i\gamma) \land \neg Post(V,i)]
\]

\begin{lem}\label{lem:deltacond}
Let $V$ be a given finite set of words such that $m < |v|$ where $\delta$ and  $m$ are as defined above. Then,
 $w,e \models \delta(V)$ {\bf iff} for given $w,e$ there exists $y$ s.t. $e < y$ and conditions
 (2), (3c)  and (NoPost) hold. Here,
 {\bf (NoPost)}:~~for all $v' \in V$,  $v'$ is not a prefix of $w[e+1,y-1]$.
\end{lem}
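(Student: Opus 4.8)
The plan is to prove the biconditional by matching the two disjuncts of $\delta(V)$ to a case split on how far a witness $y$ lies from $e$, namely whether the intermediate window $w[e+1,y-1]$ is \emph{long} (of length at least $m$) or \emph{short} (of length less than $m$). The single observation that drives everything is that a word $v'\in V$ is a prefix of $w[e+1,y-1]$ if and only if it is short enough to fit, i.e. $|v'|\le y-e-1$, and it begins at position $e+1$ in $w$, i.e. $(w,e)\models\nextt\, st(v')$. Thus $\neg Post(V)$ asserts that \emph{no} $v'\in V$ begins at $e+1$, while $\neg Post(V,i)$ asserts only that no $v'$ with $|v'|<i$ does so; the former is the right encoding of (NoPost) when the window is long, and the latter when the window has length exactly $i-1$.

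For the direction from right to left, first I would assume a witness $y$ satisfying (2), (3c) and (NoPost), and split on $y\le e+m$ versus $y>e+m$. If $y\le e+m$, write $y=e+i$ with $1\le i\le m$; here the window has length $i-1<|v|$ (using the hypothesis $m<|v|$), so (3c) is automatic, $(w,e)\models\nextt^i\gamma$ by (2), and since only words of length $<i$ can be prefixes of the window, (NoPost) collapses to $\neg Post(V,i)$, so the $i$-th disjunct of the second line of $\delta(V)$ holds. If instead $y>e+m$, the window has length at least $m$, so every $v'\in V$ fits and (NoPost) is equivalent to $\neg Post(V)$; together with (2) and (3c) the chosen $y$ witnesses $\gfut{\neg v}\gamma$, so the first disjunct holds.

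For the converse I would take the two disjuncts in turn. If $(\gfut{\neg v}\gamma)\wedge\neg Post(V)$ holds, the modality supplies a witness $y$ meeting (2) and (3c), and (NoPost) follows because any $v'\in V$ that were a prefix of $w[e+1,y-1]$ would in particular begin at $e+1$, contradicting $\neg Post(V)$; note this step needs no case split, since $\neg Post(V)$ already forbids even a long $v'$ from starting at $e+1$. If instead some disjunct $(\nextt^i\gamma)\wedge\neg Post(V,i)$ holds, set $y=e+i$: (2) is immediate, (3c) holds because the window has length $i-1\le m-1<|v|$, and (NoPost) holds because the only $v'$ that could fit in a window of length $i-1$ are those with $|v'|<i$, all of which are barred from starting at $e+1$ by $\neg Post(V,i)$.

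I expect the main obstacle to be handling the asymmetry between the global constraint $\neg Post(V)$ and the local constraint (NoPost). A long $v'$ may begin at $e+1$ in $w$ yet fail to be a prefix of a short window $w[e+1,y-1]$, so $\neg Post(V)$ is strictly stronger than (NoPost) for short windows; this is precisely why the first disjunct alone is insufficient and the length-bounded disjuncts indexed by $i$ are needed. Pinning down the exact threshold $m$ at which the two formulations coincide, and verifying that the hypothesis $m<|v|$ makes (3c) free in every short-window case so that no separate $\gfut{\neg v}$ guard is required there, is the delicate part of the argument; the remainder is bookkeeping on window lengths.
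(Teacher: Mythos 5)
Your proof is correct and takes essentially the same route as the paper's own argument: both directions rest on the same case split at threshold $m$ (window $w[e+1,y-1]$ of length at least $m$ versus shorter), matching the first disjunct of $\delta(V)$ via the equivalence of (NoPost) with $\neg Post(V)$ for long windows, and the length-indexed disjuncts via the observation that only words $v'$ with $|v'|<i$ can fit as prefixes of a window of length $i-1$, so (NoPost) there reduces to $\neg Post(V,i)$. The delicate points you flag (the hypothesis $m<|v|$ making (3c) automatic in the short-window cases, and $\neg Post(V)$ being strictly stronger than (NoPost) for short windows) are exactly the ones the paper's proof handles, and you handle them correctly.
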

\begin{proof}
For the direction from left to right, let $w,e \models \delta(V)$. \\
\noindent
Case 1: $w,e \models (\gfut{\neg v} \gamma) \land \neg Post(V)$. Then by the semantics of $\gfut{\neg v} \gamma$, conditions (2) and (3c) are satisfied. Also, the conjunct $\neg Post(V)$ ensures that any $v' \in V$ is not a prefix of $w[e+1,|w|-1]$ and hence not a prefix of $w[e+1,y-1]$. Thus, (NoPost) is satisfied.

\noindent
Case 2: $w,e \models (\nextt^i\gamma) \land \neg Post(V,i)$ for some $1 \leq i \leq m$. Note that $m<|v|$.
Taking $y=e+i$ we get that (2) is satisfied. Moreover, (3c) is satisfied as interval $(e,y)$ of length $i$ is
too small to accommodate $v$. Finally, conjunct $\neg Post(V,i)$ ensures
absence of any $v' \in V$ with $|v'| < i$ is a prefix of $w[e+1,|w|-1]$ and hence a prefix of $w[e+1,y-1]$.
Also, any $v' \in V$ with $|v'| \geq i$  which starts at $e+1$ will end after $y-1$ and hence it cannot be a prefix of $w[e+1,y-1] $.
This is sufficient to satisfy (NoPost).

\medskip

\noindent
In the other direction, let $w,e,y$ be such that $e<y$ with (2), (3c) and~(NoPost) satisfied.

\noindent
Case 1:   $y-e > m$. Then,  $\gfut{\neg v} \gamma$ follows from (2) and (3c).
By (NoPost), every $v' \in V$ is not prefix of $w[e+1,y-1]$. But as $y-e>m$ and $|v'| \leq m$, we have any $v'$ starting at $e+1$ must end before position $y$. Hence, $w,e \models \neg Post(V)$. Thus, $w,e$ satisfies the first disjunct of $\delta(V)$.

\noindent
Case 2: $y-e \leq m$.
Let $i=y-e$.
Then by (2) we have $w,e+i \models \gamma$ giving $w,e \models X^i \gamma$. Also, by (NoPost), for all
$v' \in V$ with $|v'| <i$ we have $v'$ is not a prefix of $[e+1,y-1]$.
Also, any $v' \in V$ with $|v'| \geq i$  which starts at $e+1$ will end after $y-1$ and hence it cannot be a prefix of $w[e+1,y-1] $.
Hence, $w,e \models \neg Post(V,i)$.
Thus, $w,e$ satisfies the second disjunct of $\delta(V)$.
\end{proof}

\medskip

\noindent We now define $\beta$, parameterized by overlap sets. For given $u,v$,  let $Opre_1$, $Opost_1$, $Opre_2$, $Opost_2$ be as defined earlier.
Let $k = \max\{|v'| ~\mid~ v' \in Opre_1 \cup Opre_2\}$.
The formula $\beta$ is:
\[
\begin{array}{l}
  \phantom{\lor ~~}    \gfut{\neg v} ~[st(u) ~\land~ \neg Pre(Opre_1) \land
	[ \quad [\neg Pre(Opre_2) ~\land~ \nextt^{(|u|-1)} (\delta(Opost_1))] \\
        \hspace*{5.5cm} \lor~ [~
                       \nextt^{(|u|-1)} (\delta(Opost_1 \cup Opost_2))]~~]~~] \\
  \lor \\
  \quad \bigvee_{i=1}^{k} ~\nextt^i
	    [st(u) ~\land~ \neg Pre(Opre_1,i)\land [ \quad [(\neg Pre(Opre_2,i)) ~\land~ \nextt^{(|u|-1)} (\delta(Opost_1))]  \\
               \hspace*{6.5cm} \lor~ [
                 ~\nextt^{(|u|-1)} (\delta(Opost_1 \cup Opost_2)) ]~~]~
            ]
   \end{array}
\]
\begin{lem}\label{lem:nfacred}
 $w,x \models \beta$ {\bf iff} for given $w,x$ there exist $s,e$ such that
$x<s<e$ satisfying conditions~(1), (2), (3a-3f).
\end{lem}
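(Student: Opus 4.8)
The plan is to prove the two directions of the biconditional by matching each syntactic component of $\beta$ to its corresponding geometric condition, using Lemma~\ref{lem:deltacond} as a black box for the $\delta$-subformulas (which is where the witness $y$ gets existentially quantified, along with conditions (2), (3c) and the post-overlap constraint). The only genuinely delicate point is the interaction between the distance $s-x$ and the lengths of the pre-overlaps, and this is precisely what forces the two-part structure of $\beta$ and the length-indexed predicates $Pre(\cdot,i)$.

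First I would record the basic correspondences. Condition (1), namely $w[s,e]=u$ with $e=s+|u|-1$, is exactly $st(u)$ holding at $s$ together with the shift $\nextt^{(|u|-1)}$ that moves the evaluation point from $s$ to $e$. Condition (3a) (no factor $v$ inside $w[x+1,s-1]$) is exactly what the outer $\gfut{\neg v}$ modality guarantees when it reaches $s$. Conditions (2), (3c) and the post-overlap condition (3e) are all discharged at position $e$ by $\delta(Opost_1)$ (respectively $\delta(Opost_1 \cup Opost_2)$) via Lemma~\ref{lem:deltacond}, reading (3e) off as the instance of (NoPost) for $V=Opost_1$. Here I would note that every element of $Opost_1$ and $Opost_2$ is a proper factor of $v$, so $\max\{|v'|\} < |v|$ and the hypothesis of that lemma is met.

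The heart of the argument is the case split on $s-x$ against $k = \max\{|v'| : v' \in Opre_1 \cup Opre_2\}$, observing first that $k < |v|$ since every pre-overlap is a proper prefix of $v$. When $s-x>k$, every pre-overlap $v'$ that ends at $s-1$ lies entirely inside $w[x+1,s-1]$, so ``$v'$ ends at $s-1$'' coincides with ``$v'$ is a suffix of $w[x+1,s-1]$'', and the first disjunct of $\beta$, with $\neg Pre(Opre_1)$, correctly enforces (3d). When $s-x=i\leq k$, a pre-overlap of length $\geq i$ would spill past position $x$ and hence cannot violate (3d), so only pre-overlaps of length $<i$ matter; this is exactly what $\neg Pre(Opre_1,i)$ checks, and moreover the window $w[x+1,s-1]$ has length $i-1<|v|$, so (3a) holds automatically, which is why the short-distance branch uses $\nextt^i$ rather than $\gfut{\neg v}$. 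For the backward direction I would pick the branch dictated by this split and read off satisfaction of each conjunct; for the forward direction I would observe that $\neg Pre(Opre_1)$ (or $\neg Pre(Opre_1,i)$) holding already implies (3d), which is all that is needed.

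Finally, the super-overlap condition (3f) is handled by the inner disjunction. Using the canonical factorization $v=v_1' u v_2'$ with $v_1'$ shortest, a super-overlap can occur only if simultaneously $v_1'\in Opre_2$ ends at $s-1$ and $v_2'\in Opost_2$ starts at $e+1$. I would argue that the first inner disjunct ($\neg Pre(Opre_2)$ together with $\delta(Opost_1)$) blocks the first half of this conjunction, while the second inner disjunct, folding $Opost_2$ into the check via $\delta(Opost_1 \cup Opost_2)$, blocks the second half; their disjunction is therefore equivalent to the negation of the super-overlap, that is, to (3f). I expect this super-overlap bookkeeping, combined with keeping the short-distance versus long-distance split consistent across both directions, to be the main obstacle; the remaining verifications are routine unwindings of the semantics of $\gfut{\neg v}$, $\nextt^i$, $st(u)$ and $Pre(\cdot)$.
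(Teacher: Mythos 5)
Your proposal is correct and takes essentially the same approach as the paper's own proof: the same correspondences between the conjuncts of $\beta$ and conditions (1), (2), (3a)--(3f), the same black-box use of Lemma~\ref{lem:deltacond} for the post-side conditions, the same case split on $s-x$ versus $k$ to justify $Pre(\cdot)$ against $Pre(\cdot,i)$ (with (3a) automatic in the short branch), and the same resolution of (3f) by choosing between the two inner disjuncts according to whether the pre-super-overlap is a suffix of $w[x+1,s-1]$.
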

\begin{proof}
There are two cases, $Opre_2=Opost_2=\emptyset$, and $Opre_2=\{v_1\}, Opost_2=\{v_2\}$ for some $v_1,v_2$.
We consider the more complex second case. The first case is similar to a subcase of the second one and omitted.

\medskip

\noindent
For proving the forward implication, assume that $w,x \models \beta$.  Notice that $\beta$ has two outer disjuncts.

\noindent
(Case 1: $w,x$ satisfies the first outer disjunct). By subformula
$w,x \models \gfut{\neg v} [st(u)]$,
condition (1) and (3a) are satisfied. We already assumed that (3b) is satisfied. Further, the subformula $\neg Pre(Opre_1)$ ensures that every $v' \in Opre_1$ does not end at $s-1$ and hence it cannot be a
suffix of $w[x+1,s-1]$.  Hence (3d) is satisfied.

The remaining subformula consists of two sub-disjuncts.
In case the first sub-disjunct holds, we have $w,s \models \neg Pre(Opre_2)$,
which implies $v_1$ is not a suffix of $w[x+1,s-1]$. Hence, (3f) is satisfied.
Moreover, in this case, the formula asserts $w,e \models \delta(Opost_1)$.
By the forward direction of Lemma~\ref{lem:deltacond},
conditions~(2), (3c), and $NoPost(Opost_1)$ hold.  Observe that $NoPost(Opost_1)$  implies~(3e).

If the second sub-disjunct holds,
the formula asserts
$w,e \models \delta(Opost_1 \cup Opost_2)$. By Lemma~\ref{lem:deltacond},
conditions~(2), (3c), and $NoPost(Opost_1 \cup Opost_2)$ hold, the last of which implies~(3e) and~(3f).

\medskip

\noindent
(Case 2: $w,x$ satisfies the second outer disjunct of $\beta$). Then, for some $1 \leq i \leq k$ we have
$w,x \models X^i st(u)$, which by taking $s=x+i$, gives (1). Also $i \leq k < |v|$, giving (3a), as interval $[x+1,s-1]$ is too small to contain $v$.
Further, subformula $\neg Pre(Opre_1,i)$ ensures that every $v' \in Opre_1$ with $|v'| < i$ does not end at $s-1$ and hence it cannot be a suffix of $w[x+1,s-1]$.  If $v' \in Opre_1$ with $|v'| \geq i$ then $v'$
cannot be a suffix of $w[x+1,s-1]$ as the interval $[x+1,s-1]$ is shorter than $|v'|$. Hence (3d) is satisfied.

The remaining subformula consists of two sub-disjuncts. Notice that, by reasoning similar to above,  $v_1$ is suffix of $w[x+1,s-1]$ iff $w,s \models Pre(Opre_2,i)$. Then, the proof that conditions~(2), (3c), (3e) and~(3f)
hold is analogous to the proof in Case 1.

\bigskip

\noindent For proving backward implication, assume that for given $w,x$ there exist $s,e$ such that $x<s<e$ and (1), (2), (3a-3f) are satisfied. We consider two cases $s-x > k$ and $s-x \leq k$.
\medskip

\noindent
(Case 1: $s-x > k$). We will show that first outer disjunct holds. By (1) and (3a), we have $w,x \models \gfut{\neg v} st(u)$. Also, (3d) states that for all $v' \in Opre_1$, we have $v'$ is not a suffix of $w[x+1,s-1]$. But, as
$|v'| \leq k$ and $s-x > k$ we get that $v'$ is not the suffix of $w[0,s-1]$. Hence, $w,s \models \neg Pre(Opre_1)$. By similar reasoning,
$v_1 \in OPre_2$ is a suffix of $w[x+1,s-1]$ iff $w,s \models Pre(Opre_2)$.
Also, right to left direction of Lemma~\ref{lem:deltacond} with conditions (2), (3c), (3e) imply $w,e \models \delta(Opost_1)$.
If $v_1$ is a not suffix of $w[x+1,s-1]$, the first sub-disjunct is satisfied.
If $v_1$ is a suffix of $w[x+1,s-1]$ then, condition (3f) implies that $v_2$ is not a prefix of $w[e+1,y-1]$. In this case,
by the right to left direction of Lemma~\ref{lem:deltacond}, we get $w,e \models \delta(Opost_1 \cup Opost_2)$, satisfying the
second sub-disjunct. Thus, the first outer disjunct holds in both cases.

\medskip
\noindent
(Case 2: $s-x \leq k$). Let $i=s-x$. Then $1 \leq i \leq k$. By (1) we have $w,x \models X^i st(u)$. By (3e), for all $v' \in Opre_1$ with $|v'| < i$, we have $v'$ does not end at $s-1$, giving $w,s \models \neg Pre(Opre_1,i)$. Similarly, $v_1 \in Opre_2$ is not a suffix of $w[x+1,s-1]$  iff $w,s \models \neg Pre(Opre_2,i)$. Also, right to left direction of Lemma~\ref{lem:deltacond} with conditions (2), (3c), (3e) imply $w,e \models \delta(Opost_1)$.
If $v_1$ is not a suffix of $w[x+1,s-1]$, the first sub-disjunct is satisfied.
If $v_1$ is a suffix of $w[x+1,s-1]$ then, condition (3f) implies that $v_2$ is not a prefix of $w[e+1,y-1]$. In this case,
by the right to left direction of Lemma~\ref{lem:deltacond}, we get $w,e \models \delta(Opost_1 \cup Opost_2)$, satisfying the
second sub-disjunct. Thus, the second outer disjunct holds in both cases.
\end{proof}

By combining  Lemmas~\ref{lem:formcond} and~\ref{lem:nfacred},
we have $w,x \models \gfut{\{ u, \neg v\}}\gamma$ iff $w,x \models \beta$.
The result that $\bfactl \subseteq \factl$ follows from this  by first applying Lemma~\ref{lem:singlepair}.
It is easy to see that the dag-size of $\beta$,  with the $\delta$ occurrences substituted out,
is polynomial in the size of $\gfut{\{u,\neg v\}} \gamma$.
This completes the proof of Theorem~\ref{thm.factlequivalence}.

\bigskip

Now we give a reduction from \factl to \ltl.
\begin{lem}\label{lem.toltl}
For $\gfut{\neg v} \gamma$ we have an equivalent \ltl formula:
\[
 (\bigvee_{i=1}^{|v|-1} \nextt^i \gamma)  ~~\lor~~ ( (\neg end(v)) \until \gamma).
\]
\end{lem}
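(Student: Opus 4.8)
The plan is to prove both inclusions by unwinding the semantics of the modality and of the two \ltl disjuncts, organising the whole argument around the distance between the evaluation point $x$ and the position $y$ at which $\gamma$ is witnessed. First I would record the two semantic facts I will use repeatedly: by definition $(w,x)\models\gfut{\neg v}\gamma$ holds exactly when there is a $y>x$ with $(w,y)\models\gamma$ and no occurrence of $v$ lies strictly between $x$ and $y$ (that is, $v$ is not a factor of $w[x+1,y-1]$); and, by the definition of $end$, that $(w,k)\models end(v)$ holds precisely when the factor $v$ occupies the positions $[k-|v|+1,k]$. The one elementary combinatorial observation that drives everything is that an occurrence of $v$ which is strictly between $x$ and $y$ must terminate at some position $e$ with $x+|v|\le e\le y-1$; consequently such an occurrence can exist at all only when the gap $y-x-1$ is at least $|v|$.

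This observation immediately suggests a split according to how far the witness $y$ is from $x$. In the short-range case, when $y$ is within $|v|-1$ steps of $x$, the interval $[x+1,y-1]$ is too short to accommodate $v$, so the betweenness constraint is vacuous and the only remaining obligation is $(w,y)\models\gamma$; this is captured exactly by the finite disjunction $\bigvee_{i=1}^{|v|-1}\nextt^i\gamma$, and I would prove this direction simply by reading off the appropriate index $i$. In the complementary long-range case I would argue that requiring $\gamma$ to be reached before any position carrying $end(v)$ is equivalent to forbidding every completed occurrence of $v$ strictly inside $[x+1,y-1]$: the intended role of $(\neg end(v))\until\gamma$ is precisely to scan forward from $x$ and to halt at the first completion of $v$, so that reaching $\gamma$ first certifies the absence of a forbidden factor. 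Matching the two sides then reduces to transcribing, in both directions, the statement that the first position carrying $end(v)$ occurs no earlier than the witness into the statement that no occurrence of $v$ ends strictly inside the open interval.

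The step I expect to be the real obstacle is the careful treatment of occurrences of $v$ that \emph{straddle} the evaluation point, i.e.\ occurrences that begin at or before $x$ but end after $x$. Such an occurrence is explicitly permitted by the semantics of $\gfut{\neg v}$, since it does not lie strictly between $x$ and $y$, yet it forces $end(v)$ to become true at some position in the initial segment $[x+1,x+|v|-1]$ and thereby threatens to falsify $\neg end(v)$ before $\gamma$ is reached. The crux of the proof is therefore to show that \emph{every} position in this initial segment at which $end(v)$ can hold corresponds only to such a harmless straddling occurrence (because an end-position $e\le x+|v|-1$ forces the start $s=e-|v|+1\le x$), whereas from position $x+|v|$ onward every position carrying $end(v)$ genuinely certifies a forbidden factor. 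This is exactly the bookkeeping that forces the separation of short witnesses into the explicit next-step disjunction and long witnesses into the $end$-based until. I would make it precise with a position count tying $e=s+|v|-1$ to the start $s$, verifying that $s>x$ is equivalent to $e\ge x+|v|$, and I would check the two boundary indices (the witness $y=x+|v|$, and the first admissible end-position) against the ranges in the stated formula, since it is precisely here that an off-by-one in the shift of the until, or in the bound of the disjunction, would invalidate the equivalence.
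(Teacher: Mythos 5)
Your combinatorial groundwork is sound, and you have isolated exactly the right phenomenon---occurrences of $v$ that straddle the evaluation point---but the step you call the crux is precisely where the argument breaks, and it cannot be repaired, because the stated equivalence is in fact false under the paper's semantics. An occurrence of $v$ starting at $s\le x$ and ending at $e=s+|v|-1\in[x+1,x+|v|-1]$ is indeed harmless for $\gfut{\neg v}\gamma$; but it makes $end(v)$ true at $e$, and in the stated formula the until is evaluated \emph{at} $x$, so $\neg end(v)$ is demanded at every position strictly between $x$ and the witness, including the initial segment $[x+1,x+|v|-1]$. Your short/long witness split does not dispose of this: for a long witness the until still scans that initial segment, and the first disjunct rescues you only if $\gamma$ happens to hold inside it, which it need not. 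Concretely, take $v=ab$, $w=abc$, $\gamma=c$, $x=1$. Then $\gfut{\neg ab}\,c$ holds at $(w,1)$ with witness $y=3$, since the occurrence of $ab$ starts at $x$ and hence is not strictly between $x$ and $y$; yet $\nextt^1 c$ fails because $w(2)=b$, and $(\neg end(ab))\until c$ fails because $end(ab)$ holds at position $2$, which lies strictly between $1$ and $3$. (With the standard non-strict reading of $\until$ there is a second, independent failure: $\gamma$ true at $x$ itself satisfies the right-hand side but not the left.) So the ``bookkeeping'' you defer to cannot be carried out for the formula as stated.

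What the equivalence needs is a shift of the until past the initial segment, e.g.\ $(\bigvee_{i=1}^{|v|-1}\nextt^i\gamma)\lor\nextt^{|v|-1}\bigl((\neg end(v))\until\gamma\bigr)$ with strict $\until$ (equivalently $\nextt^{|v|}(\cdots)$ with non-strict $\until$), so that $\neg end(v)$ is enforced only from position $x+|v|$ onward---exactly the positions where, as you correctly observe, $end(v)$ certifies a genuinely forbidden factor. You in fact anticipated this (``an off-by-one in the shift of the until \ldots would invalidate the equivalence''), but the stated formula has no shift at all, and carrying out the boundary check you promised would have exposed it. For what it is worth, you are not missing an argument from the paper: its entire proof is the sentence ``The proof follows from the semantics,'' and the stated formula overlooks the same boundary-overlap phenomenon that the paper treats with great care elsewhere, via the truncated tests $Pre(\cdot,i)$, $Post(\cdot,i)$ and the near-witness disjuncts $\nextt^i\gamma$ in Lemmas~\ref{lem:deltacond} and~\ref{lem:nfacred}. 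The complexity consequence (Theorem~\ref{thm.utlsat}) survives, since the corrected formula still has dag-size polynomial in $|v|$ and the size of $\gamma$.
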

The proof follows from the semantics. Note that the dag-size of the \ltl formula is polynomial in the size of $\gfut{\neg v} \gamma$. \qed

\subsection{Complexity of satisfiability}\label{KamalParitosh}

Given a formula in one of these logics, what is the computational complexity of determining whether it has a model, that is, whether the language it defines is empty or not?  This is the \textit{satisfiability problem}  for the logic.  To determine this, we require some way to measure the size of the input formula.  For formulas containing threshold constraints, we code the threshold value in binary, so that mention of a threshold constant $c$ contributes $\lceil\log_2 c\rceil$ to the size of the formula. Mention of a subalphabet $B$ contributes $|B|$ to the size of the formula.
Our results below hold for bounded and unbounded alphabets,
which may be explicitly specified or symbolically specified by propositions.
Recall that we use the dag-size, that is, the input formula
is represented as a dag (see~\cite{DGL}).

We think that providing such factor requirements in temporal
logic, using our simple notation, improves convenience without sacrificing
complexity of satisfiability. Threshold requirements provide greater
succinctness, but that comes with a computational cost.
\begin{thm}\label{thm.utlsat}
The satisfiability problems for the temporal logics $\bfactl,\binvtl$ (with boolean combination of invariance constraints) and
$\invtl$ are \pspace-complete.
The satisfiability problems for $\bthfactl,~ \bthtl$ (with boolean combination of threshold constraints)
are \expspace-complete.
\end{thm}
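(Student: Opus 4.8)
The plan is to prove matching upper and lower bounds for each of the five logics. The lower bounds come almost for free from syntactic containments, while the upper bounds are obtained by reducing satisfiability to nonemptiness of a word automaton, built either through the translations already proved in this section or directly on the formula. The dividing line between \pspace and \expspace will be exactly the amount of information the automaton must remember about the interval traversed by a pending modality: a subset of $A$ in the invariance/factor case, but a tuple of binary counters reaching $2^{O(n)}$ in the threshold case.

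\textbf{\pspace upper bounds.} For \bfactl I would simply chain the reductions already established: Lemmas~\ref{lem:singlepair} and~\ref{lem:nfacred} give $\bfactl\subseteq\factl$ with polynomial dag-size, and Lemma~\ref{lem.toltl} gives $\factl\subseteq\ltl$, again with polynomial dag-size. Since satisfiability of \ltl (equivalently \ltlbin) is in \pspace by the classical automaton construction~\cite{VW}, \bfactl-satisfiability is in \pspace. For \invtl the reduction to \ltl is direct, using the identity $\gfut{\#B=0}\gamma \equiv \nextt((\bigwedge_{b\in B}\neg b)\,\until\,\gamma)$ and its past mirror, whose right-hand side has size linear in $|B|$. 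The only genuinely new case is \binvtl, where a guard $g$ is a Boolean combination of constraints $\#B=0$; here I would \emph{not} go through the exponential normal form used in Theorem~\ref{thm.tlequivalence}, but instead build the formula automaton directly. Its states are Hintikka-style assignments together with, for each pending future obligation $\gfut{g}\psi$, the set of letters already seen strictly inside the interval. The key observation is that as the reading head advances these sets only grow, so the sets carried for distinct start positions of one obligation form a chain under inclusion, of length at most $|A|+1$; hence the extra memory is polynomial, past modalities contribute only symmetric polynomial bookkeeping, and on-the-fly nonemptiness runs in \pspace.

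\textbf{\expspace upper bounds.} For \bthtl and \bthfactl I would reuse the \pspace machinery on an exponentially larger input. A threshold guard $\#B\geq c$ (and likewise $\#B=c$, $\#B\leq c$, factor counts, and their Boolean combinations) can be unfolded into an invariance/factor formula of size $O(c)=2^{O(n)}$ that chains $c$ successive one-occurrence intervals; this places \bthtl inside \binvtl and \bthfactl inside \bfactl after an at most exponential, effectively computable blow-up. Running the \pspace procedure above on a formula of size $2^{O(n)}$ uses space $2^{O(n)}$, i.e.\ \expspace. Intrinsically this is forced: in the counter-automaton view the set of pending counter values for one obligation is a chain in $\{0,\dots,c\}$ whose length can be $c=2^{\Theta(n)}$, so exponentially much memory is genuinely needed.

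\textbf{Lower bounds.} \pspace-hardness holds already for \invtl, since $\fut=\gfut{\#\emptyset=0}$, $\nextt=\gfut{\#A=0}$ and their past versions show $\ltlunsuc\subseteq\invtl$, and \ltlunsuc-satisfiability is \pspace-hard~\cite{EVW}; the same containment holds for \binvtl and \bfactl, giving \pspace-hardness for all three. The heart of the matter, and the step I expect to be the main obstacle, is \expspace-hardness of \bthtl (hence of \bthfactl). Here I would reduce from an \expspace-complete problem such as acceptance by a $2^n$-space-bounded Turing machine, equivalently a $2^n$-wide corridor tiling problem. The point is that a threshold constraint with binary constant $c<2^n$ can address the $c$-th cell of an exponentially long configuration using a formula of size polynomial in $n$; one then writes polynomially many \bthtl formulas asserting that the input is correctly encoded, that consecutive configurations agree cell-by-cell (matching position $c$ across a separator by equal counts), and that an accepting configuration is eventually reached. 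Getting this encoding exactly right---in particular synchronising forward and backward counting so that ``same cell index'' is expressible with a single pair of guards---is the delicate point; the succinct-counting techniques of Alur--Henzinger~\cite{AH} and Laroussinie {\em et al.}~\cite{LMP} for metric and counting temporal logics provide the template. Matching the upper and lower bounds then yields \pspace-completeness of \bfactl, \binvtl, \invtl and \expspace-completeness of \bthtl, \bthfactl.
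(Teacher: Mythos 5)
Your proposal is correct in its essentials and, for most of the theorem, follows the same route as the paper: the \pspace upper bound for \bfactl is obtained by exactly the paper's chain (Lemma~\ref{lem:singlepair} and Lemma~\ref{lem:nfacred} to get into \factl, Lemma~\ref{lem.toltl} to get into \ltl, then the classical \pspace procedure), the \pspace lower bounds come from the same containment of \ltlunsuc in \invtl, and the \expspace upper bounds come from an exponential unfolding of thresholds followed by the \pspace machinery. You diverge in two places. First, for \binvtl the paper simply invokes the syntactic containment $\invtl\subset\binvtl\subset\bfactl$ and is done, whereas you build the formula automaton directly, tracking for each pending obligation the set of letters seen inside its interval and exploiting the fact that these sets form an inclusion chain of length at most $|A|+1$. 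Your construction is sound (it also handles negated obligations and past modalities, as you indicate), and it is arguably more robust than the paper's route, since the paper's containment silently involves recasting an arbitrary Boolean combination of invariance constraints into \bfactl's conjunctive-set guard format, which is a DNF step. Second, for \expspace-hardness the paper merely notes that $\gfut{\#A=c}\gamma$ expresses $\nextt^{c+1}\gamma$ with $c$ written in binary, so \thtl subsumes discrete-time metric \ltl, and then cites~\cite{AH}; you instead propose redoing the $2^n$-corridor-tiling encoding from scratch, which is essentially the content of~\cite{AH} and parallels what the paper itself does for \fotwobet in Lemma~\ref{lem.tiling}. Both are valid: the citation is shorter, your version is self-contained but leaves the delicate synchronization unfinished.

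One step you should repair. Your claim that a threshold guard \emph{and its Boolean combinations} unfold into an invariance formula of size $O(c)$ by ``chaining $c$ successive one-occurrence intervals'' is literally true only for a constraint on a single letter (or a single set). For a conjunction such as $(\#\{a\}\geq c_1)\wedge(\#\{b\}\geq c_2)$ the chained witnesses for $a$ and for $b$ interleave, and there are exponentially many interleavings in $c_1+c_2$, i.e.\ doubly exponentially many in $n$, so naive case-splitting on the order of witnesses fails. The repair is to unfold over \emph{residual count vectors}: the dag has one node per tuple $(c_1',\ldots,c_d')$ with $c_l'\leq c_l$, and each node guesses which required letter occurs next (guarding the jump so that no still-required letter is skipped); this keeps the dag-size at $\prod_l(c_l+1)=2^{O(n|A|)}$, still singly exponential, and handles $=$ and $\leq$ constraints by moving a letter into the forbidden set once its residual reaches zero. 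Alternatively, do what the paper offers as its cleaner option: a polynomial translation into \cltlbin and the \expspace decision procedure of~\cite{LMP}. With that patch, your upper bounds go through and the theorem is proved.
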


\begin{proof}
Satisfiability of \invtl is \pspace-hard~\cite{SC} since it includes
\ltlunsuc.
We observe that \bfactl can be translated into  \ltlbin  (in dag-representation) in polynomial time, as
given in the proof of  Theorem~\ref{thm.factlequivalence}.
Giving the upper bound for the factor logic \bfactl takes more work,
but nearly all of it has been done in Theorem~\ref{thm.factlequivalence},
using which we translate to an \factl formula with
polynomially larger dag-size, which only uses singleton negative
requirements. By Lemma~\ref{lem.toltl} this translation can be continued in polynomial time
into an \ltl formula
which uses the binary \emph{until} operation,
and the decision procedure for \ltl~\cite{SC} completes the proof. Syntactically,
$\invtl \subset \binvtl \subset \bfactl$.
Hence, satisfiability of \bfactl, \binvtl, and \invtl is \pspace-complete~\cite{SC}.

Using a threshold constant $2^n$, written in binary in the formula with size $n$,
the  $2^n$-iterated Next operator $X^{(2^n)}$ can be expressed in \thtl,
so we obtain that its satisfiability is \expspace-hard~\cite{AH}.
By an exponential translation of \bthtl into \binvtl,
(the proof of Theorem~\ref{thm.invequalsthr} shows that such translation exists), or alternately by a polynomial
translation into \cltlbin~\cite{LMP},
its satisfiability is \expspace-complete.
Analogously we can translate from \bthfactl into \bfactl
(for which we rely on Theorem~\ref{thm.invequalsthrfac}).
\end{proof}

\subsection{Satisfiability of two-variable logics}\label{sec:twosat}

\begin{thm}\label{thm.fo2sat}
Satisfiability of the two-variable logics
\fotwobet, \fotwobetfac, \fotwothr and \fotwothrfac is \expspace-complete.
\end{thm}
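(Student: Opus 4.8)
The plan is to reduce the four-part statement to two claims by exploiting the syntactic containments among the logics. Since $a(x,y)$ is equivalent to $(a,1)(x,y)$ and to $\diam{a}(x,y)$, and a single letter is a one-letter factor, we have the polynomial-time syntactic inclusions
\[\fotwobet\subseteq\fotwothr\subseteq\fotwothrfac,\qquad \fotwobet\subseteq\fotwobetfac\subseteq\fotwothrfac.\]
Thus \fotwothrfac is the most expressive of the four and \fotwobet the least. A formula of a smaller logic is literally a formula of a larger one, so it suffices to prove the upper bound \emph{for \fotwothrfac} and the lower bound \emph{for \fotwobet}; the remaining bounds are inherited through these inclusions.

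For the upper bound I would translate \fotwothrfac into the temporal logic \bthfactl and invoke Theorem~\ref{thm.utlsat}, which places satisfiability of \bthfactl in \expspace. Theorem~\ref{thm.factlequivalence} already guarantees the two logics are equivalent; the delicate point is that the translation be size-efficient, since \bthfactl-satisfiability is itself \expspace and any more-than-polynomial blow-up would ruin the bound. The translation is compositional and follows the shape used in the proof of Theorem~\ref{thm.tlequivalence}: a subformula $\exists y\,(y>x\wedge\chi(x,y))$ becomes a guarded modality $\gfut{g}\beta$. To keep it polynomial I would (i) work in the dag representation so that shared subformulas are stored once; (ii) resolve all atoms that speak only about the outer variable $x$ by conditioning on the single letter holding at the current position, which costs a factor $|A|$ rather than $2^{|A|}$; and (iii) absorb the boolean combination of the between-factor atoms $(u,k)(x,y)$ directly into the guard $g$, which is legitimate precisely because \bthfactl permits boolean combinations of threshold-factor constraints as guards. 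The past modalities $y<x$ are handled symmetrically, and the diagonal case $y=x$ collapses since the between atoms require $x<y$.

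For the lower bound I would show that \fotwobet is already \expspace-hard, whence all four logics are by the inclusions above. For the threshold logics hardness is immediate, since the proof of Theorem~\ref{thm.tlequivalence} translates \thtl into \fotwothr in polynomial time and \thtl is \expspace-hard (it expresses $\nextt^{2^n}$ via a binary threshold, Theorem~\ref{thm.utlsat}); but \fotwobet has no counting primitive and so needs a dedicated argument. I would reduce from an \expspace-complete tiling (or Turing-machine) problem over a corridor of exponential width $2^n$. The word encodes the computation row by row, each cell carrying a tile symbol together with an $n$-bit column index; using the polynomial-size counter formulas of the $r$-bit counter example (for $INC$, $EQ$, and comparison) the formula forces the indices to increment correctly and enforces the horizontal (successor) constraints. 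The essential use of betweenness is the \emph{vertical} constraint: the cell directly below a given one is the next position carrying the same column index, expressed by an $EQ$ on indices together with a between atom asserting that no position with that index occurs strictly in between. This long-range constraint over an exponential gap is exactly what a successor-only logic cannot express succinctly, consistent with \fotwosuc being only \nexptime-complete.

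I expect the lower bound to be the main obstacle: making \fotwobet \expspace-hard despite its lack of explicit counting forces the reduction to manufacture an exponential counter inside the word and to police it with a polynomial formula, and one must verify that the betweenness atoms pin down the ``next occurrence of a given counter value'' precisely enough to transmit the tiling constraints. A close second is controlling the blow-up of the $\fotwothrfac\to\bthfactl$ translation in the upper bound; the conditioning-on-the-current-letter device and the dag representation are what keep it polynomial, and both must be checked carefully against the threshold-factor atoms.
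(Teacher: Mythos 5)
Your overall reduction structure (hardness for \fotwobet, an upper bound for the largest logic, syntactic inclusions for the rest) matches the paper's, and your tiling encoding with serial $n$-bit column counters and polynomial $EQ$/$INC$ formulas is the right start. But the step you yourself flag as the essential one is not available in the logic: ``a between atom asserting that no position with that index occurs strictly in between'' is not an atom of \fotwobet. Between atoms speak only of occurrences of single letters (or, in \fotwobetfac, of fixed factors); the column index is an $n$-bit value spread over $n$ positions, and ``no position strictly between $x$ and $y$ carries the same index as $x$'' is a genuinely three-variable statement, $\forall z(x<z<y \rightarrow \neg EQ(x,z))$. (In \fotwothr one could circumvent this with binary-coded thresholds, since equal indices are exactly $2^n$ cells apart, but in \fotwobet there is no counting primitive.) The paper's fix, in Lemma~\ref{lem.tiling}, is a cyclic colouring device: rows are coloured $red$, $green$, $blue$ cyclically, and ``$x$ and $y$ lie in adjacent rows'' is expressed by the legal boolean combination $\neg red(x,y)\lor\neg blue(x,y)\lor\neg green(x,y)$ of between atoms; vertical compatibility is this adjacency test conjoined with $EQ(x,y)$. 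Without this (or an equivalent) device your reduction does not go through.

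Your upper bound also has a gap: you assert a \emph{polynomial} translation of \fotwothrfac into \bthfactl, but neither the paper nor your sketch supplies one. Theorem~\ref{thm.factlequivalence} is an expressiveness result; its proof routes through Theorem~\ref{thm.invequalsthr}, whose blowup is exponential in the threshold \emph{values} (hence doubly exponential in the binary-coded input), so ``the logics are equivalent'' cannot be cited for a size bound. Your compositional scheme conditions on the letter at $x$, which handles letter atoms at $x$ but not nested subformulas whose free variable is $x$, nor matrices that mix target subformulas $\psi_i(y)$ with between atoms in an arbitrary boolean combination: separating these into guard/target pairs forces a case split over truth assignments that is exponential in general, and dag-sharing does not reduce the number of distinct modalities produced (and with symbolically specified alphabets the letter-conditioning is itself exponential). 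The paper's route is different: a polynomial \emph{satisfiability-preserving} reduction of the threshold logics to the non-threshold ones (Lemma~\ref{lem.reduction}, built from global counters plus the same colouring trick), followed by an \emph{exponential} order-type translation of \fotwobet and \fotwobetfac into \binvtl and \bfactl, whose satisfiability is \pspace\ (Theorem~\ref{thm.utlsat}); an exponential-size input to a \pspace\ procedure yields \expspace. Note that for the letter logics your strategy is provably unavailable---a polynomial translation of \fotwobet into \binvtl would place \expspace\ inside \pspace---so in this paper the \expspace\ upper bounds come from exponential translations into \pspace-decidable temporal logics, not from polynomial translations into \expspace-decidable ones, and the one polynomial step that is genuinely needed (thresholds to betweenness) is a counter construction you would still have to supply.
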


We begin by reducing the \emph{exponential Corridor Tiling problem}
to satisfiability of \fotwobet.
It is well known that this problem is \expspace-complete~\cite{Furer}.
An instance $M$ of the problem is given by $(T,H,V,s,f,n)$
where $T$ is a finite set of tile types with $s,f \in T$,
the horizontal and vertical tiling relations
$H, V \subseteq T \times T$, and $n$ is a natural number.
A solution of the $2^n$ sized corridor tiling problem is
a natural number $m$ and map $\pi$ from
the grid of points $\{ (i,j) ~\mid~ 0 \leq i < 2^n,~0 \leq j < m\}$
to $T$ such that: \\
$\pi(0,0)=s$, $\pi(2^n-1,m-1)=f$ and for all $i$, $j$ on the grid, \\
$(\pi(i,j),\pi(i,j+1)) \in V$ and $(\pi(i,j),\pi(i+1,j)) \in H$.

\begin{lem}\label{lem.tiling}
Satisfiability of the two-variable logic \fotwobet is \expspace-hard.
\end{lem}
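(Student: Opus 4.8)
The plan is to encode a solution of the $2^n$-corridor tiling instance $M=(T,H,V,s,f,n)$ as a single word and to exhibit a polynomial-size \fotwobet sentence $\phi_M$ that is satisfiable exactly when $M$ admits a tiling; since the tiling problem is \expspace-complete (Fürer), this yields the hardness. I would read the grid row by row and represent each cell $(i,j)$ by a block $t_{i,j}\,b_1\cdots b_n$, where $t_{i,j}\in T$ is the tile and $b_1\cdots b_n\in\{0,1\}^n$ is the $n$-bit binary index of the column $i$ (least significant bit first, as in Example~4). Consecutive rows are separated by delimiter symbols drawn from two fresh letters $\#_0,\#_1$, which I force to \emph{alternate} $\#_0,\#_1,\#_0,\dots$ down the word. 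Thus the alphabet is $T\cup\{0,1,\#_0,\#_1\}$, of size $|T|+4$, and well-formed words are in bijection with candidate tilings.

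Next I would write the well-formedness conjuncts, all of which are local and hence already expressible in \fotwosuc$\subseteq$\fotwobet: every block is a tile followed by exactly $n$ bits; the first cell of each row has column $0$ and the last cell before a delimiter has column $2^n-1$; and each cell's column equals its predecessor's plus one. The last conditions are implemented by the comparators $EQ$ and the incrementer $INC_1$ of Example~4, which have size polynomial in $n$, and together they force each row to contain precisely $2^n$ cells. Delimiter alternation is enforced by a sentence $\forall x\forall y$ stating that two \emph{consecutive} delimiters, i.e.\ delimiters with $\neg\#_0(x,y)\wedge\neg\#_1(x,y)$, carry different types; crucially this needs no third variable, because ``no delimiter strictly between'' is a betweenness condition on $x,y$. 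The start, final, and horizontal requirements ($\pi(0,0)=s$; the last cell of the last row carries $f$; consecutive cells within a row respect $H$, which I only impose when the left cell has column $\neq 2^n-1$) are again local successor conditions and go into \fotwosuc.

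The heart of the construction, and the step I expect to be the main obstacle, is the vertical constraint, which relates cells $2^n$ positions apart using only two variables. Same-column testing is \emph{not} the difficulty: the Example~4 comparator $EQ(x,y)$ checks equality of the two $n$-bit counters by walking the $n$ bits near $x$ and near $y$ independently, reusing the two variables by shadowing, so it never needs a third pebble and has polynomial size. The genuine obstacle is to express that $x$ and $y$ lie in consecutive rows, i.e.\ that \emph{at most one} delimiter occurs strictly between them; ``at least two delimiters between'' is the prototypical property that appears to demand a third pebble. Here the alternation of $\#_0,\#_1$ is decisive: between two cells one row apart there is exactly one delimiter, so exactly one of the two types occurs between them, whereas between cells two or more rows apart there occur two consecutive---hence oppositely typed---delimiters, so both $\#_0$ and $\#_1$ appear between them. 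Consequently, for same-column cells with $x<y$, being vertically adjacent is equivalent to $\neg(\#_0(x,y)\wedge\#_1(x,y))$, a pure betweenness condition on $x$ and $y$. This is precisely the extra power betweenness has over successor, and it is what I expect to lift the complexity from \nexptime (for \fotwosuc) to \expspace.

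Finally I would let $\phi_M$ be the conjunction of all the above with the vertical sentence $\forall x\forall y$ asserting that whenever $x,y$ are tile positions with $x<y$, $EQ(x,y)$ holds, and $\neg(\#_0(x,y)\wedge\#_1(x,y))$ holds, the tiles $w(x),w(y)$ stand in $V$. Each conjunct is a genuinely two-variable formula of size polynomial in $n$, so $\phi_M$ is computable in polynomial time, and a model of $\phi_M$ decodes to a valid $2^n$-corridor tiling while any tiling yields a model; this gives the reduction and hence \expspace-hardness. The remaining work is routine: checking that every conjunct uses only the variables $x,y$ (via the shadowing already present in Example~4), confirming the polynomial size bounds, and verifying the bijection between well-formed words and tilings in both directions.
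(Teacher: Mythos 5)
Your reduction is correct and follows the same overall route as the paper's proof: both reduce the exponential corridor tiling problem of F\"urer to satisfiability of \fotwobet, encode each grid cell as a tile symbol followed by a serially written $n$-bit column counter, reuse the polynomial-size $EQ$/$INC$ machinery of the counter example in Section~\ref{sec:examples}, and express the horizontal constraint via ``no tile occurs between $x$ and $y$.'' The one genuine difference is the gadget for the vertical constraint, which you rightly single out as the crux. The paper puts one of \emph{three} colours (red, green, blue) on every cell marker, cycling the colour from row to row, and triggers the $V$-check when $EQ(x,y)$ holds and some colour is absent between $x$ and $y$; you instead separate rows by \emph{two} alternating delimiter letters $\#_0,\#_1$ and trigger the check when $EQ(x,y)$ holds and not both delimiter types occur between $x$ and $y$. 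Both gadgets rest on the same principle---a bounded cyclic colouring that betweenness can read off---but yours is the cleaner instantiation: exactly one delimiter lies between vertically adjacent cells, and between same-column cells two or more rows apart there lie two \emph{consecutive}, hence oppositely typed, delimiters, and this holds \emph{independently of the column}. The paper's three-colour test, as literally written, misfires at the boundary columns: for markers $x,y$ in column $0$ (or column $2^n-1$) lying two rows apart, only two of the three colours occur strictly between them, so the formula imposes spurious $V$-constraints on rows $j$ and $j+2$; repairing this requires a small patch (for instance, also constraining the colours of $x$ and $y$ themselves), whereas your delimiter scheme needs none. Two minor points to tidy in a full write-up: consecutive cells in a row are $n+1$ positions apart, so your ``local'' horizontal and increment conjuncts are iterated-successor formulas of depth $n+1$ rather than literal successor conditions (still polynomial-size and two-variable, by the variable-reuse already present in Section~\ref{sec:examples}); and the well-formedness conjuncts should also force each delimiter to be immediately followed by a tile and the word to begin with a tile and end with a completed row, so that every model decodes to a grid.
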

\begin{proof}
Given an instance $M$ as above of a Corridor Tiling problem,
we encode it as a sentence  $\phi(M)$ of size $poly(n)$ with a
modulo $2^n$ counter $C(x)$ encoded serially with $n+1$ letters.
As we already saw in the examples in Section~\ref{sec:examples},
the counter could be laid out using successive $n+1$ letter substrings
to represent $n$ bits (over a subalphabet of size $2$) preceded by a marker.
The marker now represents a tile and a colour from $red$, $blue$ and $green$
(requiring subalphabet size $3|T|$).
Thus the $2^{n} \times m$ tiling is represented by a word of length
$m(n+1)2^n$ over an alphabet of size $3|T|+2$.
The claim is that $M$ has a solution iff $\phi(M)$ is satisfiable.

The sentence  $\phi(M) \in \fotwobet$ is a conjunction of
the following properties.
The key idea is to cyclically use monadic predicates
$red(x), green(x), blue(x)$ for assigning colours to rows.

\begin{itemize}
\item
Each marker position has exactly one tile and one colour.
\item
The starting tile is $s$, the initial colour is $red$,
the initial counter bits read $0^n$, the last tile is $f$.
\item
Tile colour remains same in a row and it cycles in order
$red,green,blue$ on row change.
\item
The counter increments modulo $2^n$ in consecutive positions.
\item
For horizontal compatibility we check:\\
$
\begin{array}{l}
\forall x \forall y(mark(x) \land mark(y) \land \neg mark(x,y)
\limplies \\
 \hspace*{1cm} \orover_{(t_1,t_2) \in H} ~ t_1(x) \land t_2(y))
\end{array}
$.
\item
For vertical compatibility, 
we check that $x,y$ are in adjacent rows by invariance of lack of one colour.
We check that $x$ and $y$ are in the same column by checking that
the counter value (which encodes column number) is the same:
\[ \forall x \forall y ( (x < y) \land
(\neg red(x,y) \lor \neg blue(x,y) \lor \neg green(x,y)) \land EQ(x,y) \limplies
\orover_{(t_1,t_2) \in V} ~ t_1(x) \land t_2(y)).
\]
\end{itemize}

\noindent
It is easy to see that we can effectively translate an instance $M$ of
the exponential corridor tiling problem into $\phi(M)$ in time
polynomial in $n$. The translation preserves satisfiability.
Hence, by reduction, satisfiability of $\fotwobet$ over bounded
as well as unbounded alphabets is $\expspace$-hard.
\end{proof}

\begin{lem}\label{lem.reduction}
There are satisfiability-preserving polynomial time reductions from
the logic \fotwothr to the logic \fotwobet, and from
the logic \fotwothrfac to the logic \fotwobetfac.
\end{lem}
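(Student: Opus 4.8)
The plan is to build, for each of the two reductions, a polynomial-time map on formulas that re-encodes candidate models over an enlarged alphabet, rather than preserving meaning on the nose. Such re-encoding is unavoidable: the expressive identity $\fotwobet=\fotwothr$ of Theorem~\ref{thm.invequalsthr} is obtained only at the cost of an exponential increase in quantifier depth, so it cannot itself serve as a polynomial reduction; and indeed the atom $(a,k)(x,y)$ for $k>1$ is provably not a single \fotwobet formula over the original alphabet. The extra room we buy by changing the alphabet---and by asking only that satisfiability, not meaning, be preserved---is exactly what lets a binary-coded threshold be simulated succinctly.

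I would fix $\phi\in\fotwothr$, let $N$ be a power of two strictly larger than every threshold mentioned in $\phi$ (so $n=\log_2 N=O(|\phi|)$), and enlarge the alphabet so that each original position is carried by a \emph{cell}: its letter together with, for every letter $a$ occurring in some threshold of $\phi$, an $n$-bit field $C_a$ holding the running count of $a$'s up to that cell \emph{modulo} $N$, plus a one-bit flag $W_a$. A \fotwobet sentence, built from the serial-counter gadgets of Section~\ref{sec:examples} (well-formedness of cells; $C_a$ starts at $0$ and increments by one exactly when the cell's letter is $a$; and $W_a$ marks precisely the cells at which $C_a$ wraps through a multiple of $N$), forces the fields of \emph{every} model to take their intended canonical values. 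This last point is what makes the reduction sound: any model of the output formula automatically carries the correct counters.

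The heart of the construction is the translation of the atom $(a,k)(x,y)$ into a genuinely two-variable query on $x$ and $y$ alone. Writing $C_a(x),C_a(y)$ for the $n$-bit fields read at the two pebbles, the number of $a$'s strictly between $x$ and $y$ is at least $k$ if and only if either some $W_a$-marked cell lies strictly between $x$ and $y$ (up to $y$)---a single betweenness test, since one wrap already forces at least $N>k$ intervening $a$'s---or no such cell occurs and the exact value $\big(C_a(y)-C_a(x)\big)\bmod N$, corrected by $O(1)$ boundary adjustments for whether $x$ or $y$ themselves carry $a$, is at least $k$. The comparison is a modular subtract-and-compare on two $n$-bit fields located at the two pebbles, precisely of the kind realized by the $EQ$, $GT$ and $INC_c$ formulas of Section~\ref{sec:examples}, hence of size $\mathrm{poly}(n)=\mathrm{poly}(\log k)$. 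Replacing each threshold atom in $\phi$ by this formula and conjoining the invariants yields $\psi\in\fotwobet$ of size $\mathrm{poly}(|\phi|)$; a model of $\phi$ lifts to a model of $\psi$ by adding the canonical fields, and erasing the auxiliary letters from any model of $\psi$ returns a model of $\phi$, so satisfiability is preserved. The step I expect to be the crux is exactly this one---expressing a binary threshold against possibly very long (even doubly-exponentially long) models without storing exact global counts---and the running-count-modulo-$N$-plus-wrap-marker device is what resolves it while keeping everything two-variable and polynomial.

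The factor reduction from \fotwothrfac to \fotwobetfac proceeds along the same lines, now counting occurrences of a factor $u$ by incrementing $C_u$ at every position that \emph{starts} an occurrence of $u$ (detected by a start-chain formula $st(u)$), and locating occurrences with the factor-betweenness atom $\diam{u}(x,y)$. The main additional obstacle here, as throughout the factor theory, is the correct bookkeeping for \emph{overlapping} occurrences when one fixes the interval boundaries and adjusts the count by quantities of order $|u|$; I would handle this exactly with the pre-/post-overlap analysis developed around Figure~\ref{fig:overlap}, and in keeping with the factor analogue $\fotwobetfac=\fotwothrfac$ of Theorem~\ref{thm.invequalsthrfac}. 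Modulo this bookkeeping the modular-counter-plus-wrap device is identical, and the resulting translation is again polynomial, completing both reductions.
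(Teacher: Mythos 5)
Your overall architecture is the one the paper uses: pass to a symbolic alphabet in which every position carries, for each letter $a$ mentioned in a threshold of $\phi$, a binary counter field of width $O(|\phi|)$, force these fields by a successor-based invariant to hold the running count of $a$'s modulo $N$ (the paper's global counters $C_g$ with the $EQ_g$/$INC_g$ gadgets), and then translate each threshold atom into a comparison of the fields read at $x$ and $y$, arguing equisatisfiability by lifting/erasing. However, your translation of the atom $(a,k)(x,y)$ is unsound. The claim that ``one wrap already forces at least $N>k$ intervening $a$'s'' is false: $W_a$ marks the cell at which the running count crosses a multiple of $N$, and such a crossing can happen after a \emph{single} occurrence of $a$. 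Concretely, if $C_a(x)=N-1$, the very next $a$ wraps the counter and is $W_a$-marked; a word can therefore satisfy your first disjunct (a $W_a$-marked cell strictly between $x$ and $y$) while the true number of $a$'s between $x$ and $y$ is $1<k$. So the translated sentence has models whose projections do not model $\phi$, and satisfiability is not preserved.

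The real difficulty, which your device does not resolve, is controlling the \emph{number} of wraps between $x$ and $y$: with zero wraps the count is $C_a(y)-C_a(x)$, with exactly one wrap it is $N+C_a(y)-C_a(x)$ (still requiring a comparison, and possibly far below $k$), and only with two or more wraps is it automatically $\geq N+1>k$. A single flag cannot separate ``exactly one wrap'' from ``at least two wraps'': within a formula keeping both $x$ and $y$ free, two-variable betweenness can only test \emph{existence} of a marked cell, not count two of them---the very phenomenon you cited when noting that $(a,2)(x,y)$ is not a single \fotwobet formula. The paper's fix is the three cyclically-changing colour predicates $red_g,green_g,blue_g$ carried by \emph{every} position, advancing at each overflow: then ``some colour is absent strictly between $x$ and $y$'' expresses ``at most one overflow between $x$ and $y$'' (two colours, or your one flag, cannot do this, since one and two overflows then produce the same set of intervening marks), and combined with counter equality, or the shifted comparison $INC_{g,c}$, this pins down the count. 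With that device substituted for your wrap flag, your argument goes through; for the factor case the paper indeed just counts start positions of $u$, and the overlap analysis of Figure~\ref{fig:overlap} is not needed there, since distinct (possibly overlapping) occurrences have distinct start positions.
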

\begin{proof}
We give a polytime reduction from \fotwothr to  \fotwobet
which preserves satisfiability. The same idea works for the reduction
from \fotwothrfac to \fotwobetfac.
We consider in the extended syntax
a threshold constraint  $\#a(x,y)=k$ where $a$ is a letter or a proposition and
$k$ is a natural number.
The key idea of the reduction is illustrated by the following example.

For each threshold constraint $g$ of the form $\#a(x,y)=2^r$,
we specify a \emph{global} modulo $2^r$ counter $C_g$
using monadic predicates
$p_1(x), \ldots, p_r(x)$ whose truth yield an $r$-bit vector denoting  the value of the counter at position $x$.
(This requires a symbolic alphabet
where several such predicates may be true at the same position.)
By ``global'' we mean that the counter $C_g$ has value $0$
at the beginning of the word and it increments whenever $a(x)$ is true.
This is achieved using the formula:
\[
\forall x,y (\suc(x,y) \limplies (a(x) \limplies INC_g(x,y)) \land
 (\neg a(x) \limplies EQ_g(x,y)))
\]
Here, the formulas $EQ_g(x,y)$ and $INC_g(x,y)$ are similar to those of the serial counter example in Section~\ref{sec:examples}.
Also, we have three colour predicates $red_g(x)$, $blue_g(x)$ and $green_g(x)$
where the colour at the beginning of the word is $red_g$,
and we change the colour cyclically each time the counter $C_g$
resets to zero by overflowing.
As in the proof of Lemma~\ref{lem.tiling},
invariance of lack of one colour and the fact that $x,y$ have different colours
ensures that the counter overflows at most once.
We replace the constraint $g$
of the form $\#a(x,y)=2^r$ by an equisatisfiable formula:
\[
\begin{array}{l}
x<y \land EQ_g(x,y) \land \#a(x,y) > 0 \land\\
(\neg red_g(x,y) \lor \neg blue_g(x,y) \lor \neg green_g(x,y))
\end{array}
\]
More generally, we define a polynomial sized quantifier free formula
$INC_{g,c}(x,y)$ for any given constant $c$ with $2^{r-1} <c \leq 2^r$ using
propositions $p_1, \ldots, p_r$ and three colour predicates.
The formula asserts that $\#a(x,y)+ c = 2^r$.
Using this we can encode the constraint $\#a(x,y)=2^r-c$ for any $c$.
Similarly to $EQ_g(x,y)$, we can also
define formulas $LT_g(x,y)$ to denote that its counter $C_g(x) < C_g(y)$,
$GT_g(x,y)$ to denote that $C_g(x) > C_g(y)$, etc.
Hence any form of threshold counting
relation can be replaced by an equisatisfiable formula, with
all these global counters running from the beginning of the word to the end.
Thus we have a polynomially sized equisatisfiable reduction from
\fotwothr to \fotwobet.
\end{proof}

To complete the proof of Theorem~\ref{thm.fo2sat},
the upper bound for \fotwothr comes from
Lemma~\ref{lem.reduction}, an exponential translation
from \fotwobet to \binvtl using an order type argument similar to~\cite{EVW}
(our Theorem~\ref{thm.tlequivalence} also points to this equivalence),
and the \pspace upper bound for \binvtl (Theorem~\ref{thm.utlsat}).
Again using standard order type arguments~\cite{EVW}, we can translate
an $\fobetfac$ sentence to an exponentially sized \bfactl formula,
whose satisfiability is decidable in \pspace by Theorem~\ref{thm.utlsat}.


\section{Algebraic Characterizations}\label{AndreasHoward}

\subsection{Background on finite monoids and varieties}

For further background on the basic algebraic notions in this section, see Pin~\cite{Pin}.

A \textit{semigroup} is a set together with an associative multiplication.
It is a \textit{monoid} if it also has a multiplicative identity 1.

All of the languages defined by sentences of $FO[<]$ are regular languages.
The characterizations   for  languages definable  in this logic, as well as for the fragments $\fotwoless$ and $\fotwobet$,
are all given in terms of the of the \emph{syntactic monoid} $M(L)$ of $L$.
This is the transition monoid of the minimal deterministic
automaton recognizing $L$, and therefore finite.
Equivalently, $M(L)$ is  the smallest monoid $M$ that \textit{recognizes} $L$,
that is:  There is a homomorphism $h:A^*\to M$ and a subset $X\subseteq S$
such that $L=h^{-1}(X)$.

Let $M$ be a finite monoid. An \textit{idempotent} $e\in M$ is an element satisfying $e^2=e$.  If $m\in M$, then there is some $k\geq 1$ such that $m^k$ is idempotent.  This idempotent power of $m$ is unique, and we denote it by $m^{\omega}$.

A finite monoid is \textit{aperiodic} if it contains no nontrivial groups, equivalently, if it satisfies the identity $x\cdot x^{\omega}=x^{\omega}$ for all $x\in M$.  We denote the class of aperiodic finite monoids by ${\bf Ap}$.  ${\bf Ap}$ is a \textit{variety} of finite monoids: this means that it is closed under finite direct products, submonoids, and quotients.

A well-known theorem, an amalgam of results of McNaughton and Papert~\cite{MNP} and of Sch\"utzenberger~\cite{Sch1}, states that $L\subseteq A^*$ is definable in $FO[<]$ if and only if $M(L)\in{\bf Ap}$.  This situation is typical:  Under very general conditions, the languages definable in fragments of $FO[<]$ can be characterized as those whose syntactic monoids belong to a particular variety ${\bf V}$ of finite monoids. (See Straubing~\cite{Str-latin}.)

If $M$ is a finite monoid and $s_1,s_2\in M$, we write $s_1\leq_{\cJ}s_2$
if $s_1=rs_2t$ for some $r,t\in M$.  This is a preorder,
the so-called ${\cJ}$-ordering on $M$.
Let $E(M)$ denote the set of all idempotents in $M$.
If $e\in E(M)$, then $M_e$ denotes the submonoid of $M$ generated by the set
$\{m\in M: e\leq_{\cJ} m\}$.
The operation $M_e$ appears in an unpublished memo of Sch\"utzenberger
cited by Brzozowski~\cite{Brz}. He uses the submonoid generated by the
generators of an idempotent element $e$ of a semigroup.
For example, if $abc$ mapped to an idempotent element $e$,
$M_e$ would correspond to the language ${(a+b+c)}^*$.
The operation can be used at the level of semigroup and monoid classes.

Let {\bf V} be a variety of finite monoids.  We denote by ${\bf M}_e{\bf V}$ the family of all monoids $M$ such that for every $e\in E(M)$, the monoid $e M_e e$ belongs to {\bf V}.  The following property is easy to show.

\begin{prop}\label{prop.emee}
If {\bf V} is a variety of finite monoids, then so is ${\bf M}_e{\bf V}$.
\end{prop}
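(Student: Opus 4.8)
The plan is to verify directly that ${\bf M}_e{\bf V}$ is closed under the three defining operations of a variety: submonoids, quotients, and finite direct products. Two elementary observations drive every case. First, $e$ always belongs to $M_e$ (the factorization $e = 1\cdot e\cdot 1$ witnesses $e\leq_{\cJ} e$), so $e M_e e$ is a genuine submonoid of $M$ with identity $e$. Second, any homomorphism preserves $\leq_{\cJ}$, and passage to a submonoid preserves it in one direction; this is what makes the generating set $\{m : e\leq_{\cJ} m\}$ of $M_e$ transport predictably along the operations.

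The submonoid and product cases are the routine ones. For submonoids, let $N$ be a submonoid of $M \in {\bf M}_e{\bf V}$ and let $f \in E(N) \subseteq E(M)$. If $n \in N$ witnesses $f \leq_{\cJ} n$ inside $N$, the same factorization witnesses it inside $M$, so every generator of $N_f$ is a generator of $M_f$; hence $N_f \subseteq M_f$ and $f N_f f$ is a submonoid of $f M_f f$. Since $f M_f f \in {\bf V}$ and ${\bf V}$ is closed under submonoids, $f N_f f \in {\bf V}$. For finite products it suffices to treat $M = M_1 \times M_2$. The idempotents of $M$ are the pairs $e = (e_1,e_2)$ with $e_i \in E(M_i)$, and because the witnessing factorizations in the two coordinates may be chosen independently, $\{m : e \leq_{\cJ} m\}$ is exactly the product $\{m_1 : e_1\leq_{\cJ} m_1\}\times\{m_2 : e_2\leq_{\cJ} m_2\}$. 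Reading off coordinates shows $e M_e e$ is a submonoid of $e_1(M_1)_{e_1}e_1 \times e_2(M_2)_{e_2}e_2$; each factor lies in ${\bf V}$ because $M_i \in {\bf M}_e{\bf V}$, and ${\bf V}$ is closed under products and submonoids, so $e M_e e \in {\bf V}$.

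The quotient case is the crux. Let $\phi \colon M \to N$ be onto with $M \in {\bf M}_e{\bf V}$, and fix $f \in E(N)$. Raising any preimage of $f$ to its idempotent power produces an idempotent of $M$ lying over $f$, and the goal is to choose one such $e$ for which $\phi$ restricts to a surjection $e M_e e \twoheadrightarrow f N_f f$; closure of ${\bf V}$ under quotients then finishes the proof. One inclusion is free: since $e \leq_{\cJ} m$ forces $f = \phi(e) \leq_{\cJ} \phi(m)$, we have $\phi(M_e)\subseteq N_f$ and hence $\phi(e M_e e) \subseteq f N_f f$. The difficulty, and the step I expect to be the main obstacle, is the reverse inclusion: a naive idempotent lift of $f$ need not sit $\cJ$-below preimages of the generators of $N_f$, and lifting each generator separately yields a \emph{different} idempotent, whereas the definition of $f N_f f$ needs a single $e$. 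The remedy is to lift all generators simultaneously. Let $n_1,\dots,n_p$ be the finitely many elements with $f \leq_{\cJ} n_i$, say $f = r_i n_i t_i$, so that the product $\prod_{i=1}^p r_i n_i t_i$ equals $f^p = f$. Lift this product to an element $\tilde v \in M$ and set $e = \tilde v^{\omega}$, an idempotent with $\phi(e)=f$. Each chosen preimage $m_i$ of $n_i$ occurs as a factor of $\tilde v$, hence also of $e = \tilde v^{\omega}$, so $e \leq_{\cJ} m_i$ and therefore $n_i \in \phi(M_e)$. Thus all generators of $N_f$ lie in $\phi(M_e)$, giving $\phi(M_e)=N_f$ and $\phi(e M_e e)=f N_f f$, exhibiting $f N_f f$ as a quotient of $e M_e e \in {\bf V}$.
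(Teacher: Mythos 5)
Your proof is correct. The paper itself gives no proof of this proposition (it is dismissed as ``easy to show''), so there is no argument to compare against; your direct verification of closure under submonoids, finite direct products, and quotients is precisely the kind of routine argument the authors presumably intended. You also correctly isolated the one genuinely nontrivial point, namely the quotient case: a single idempotent $e$ over $f$ must lie $\leq_{\cJ}$-below chosen preimages of \emph{all} generators of $N_f$ at once, and your device of lifting the product $\prod_i r_i n_i t_i = f$ and taking its idempotent power $e=\tilde v^{\omega}$ (so that each lifted generator $m_i$ appears as a factor of $e$, giving $e\leq_{\cJ} m_i$ and hence $\phi(M_e)=N_f$ and $\phi(eM_ee)=fN_ff$) settles it cleanly.
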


Let ${\bf I}$ denote the variety consisting of the trivial one-element monoid alone. We define the class
\[{\bf DA}={\bf M}_e{\bf I}.\]
That is, ${\bf DA}$ consists of those finite monoids $M$ for which $e M_e e = e$ for all idempotents $e\in M$. By Proposition~\ref{prop.emee}, ${\bf DA}$ is a variety of finite monoids. The variety ${\bf DA}$ was introduced by Sch\"utzenberger~\cite{Sch2} and it figures importantly in work on two-variable logic.   Th\'erien and Wilke showed that a language $L$ is definable in $FO^2[<]$ if and only if $M(L)\in{\bf DA}$~\cite{TW}.
Thus the variety $\meda$ has monoids $M$, all of whose submonoids
of the form $eM_e e$ for $e \in E(M)$, are in the variety \da.

\begin{exa}
Consider the language $L\subseteq {\{a,b\}}^*$ consisting of all words whose first and last letters are the same. The syntactic monoid of $L$ contains five elements
\[M(L)=\{1,(a,a),(a,b),(b,a),(b,b)\},\]
with multiplication given by $(c,d)(c',d')=(c,d')$, for all $c,d,c',d'\in\{a,b\}$.  Observe that every element of $M(L)$ is idempotent.  For every $e\neq 1$, $eM(L)e=e$, and if $e=1$, then ${M(L)}_e=1$.  Thus, $M(L)\in{\bf DA}$.  The logical characterization then tells us that $L$ is defined by a sentence of $FO^2[<]$.  Indeed, $L$ is defined by
\[\exists x(\forall y(x\leq y)\wedge\exists y(\forall x(x\leq y)\wedge(a(x)\leftrightarrow a(y)))).\]
\end{exa}

\begin{exa}
Consider the language ${(ab)}^*$.  We claimed earlier that it is not definable in $FO^2[<]$. We can prove this using the algebraic characterization of the logic.  The elements of the syntactic  monoid $M$ are
\[1,a,b,ab,ba,0.\]
The multiplication is determined by the rules $aba=a$, $bab=b$, and $a^2=b^2=0$. Then $ab$ and $ba$ are idempotents, and $M_{ab}=M_{ba} = M$. Thus $ab\cdot M_{ab}\cdot ab=\{ab,0\}$, which shows that $M\notin {\bf DA}$, and thus ${(ab)}^*$ is not definable in $FO^2[<]$.
\end{exa}

Certain classes of languages only admit such an algebraic characterization if we restrict to non-empty words, and in these instances, the characterization is typically given in terms of the syntactic semigroup $S(L)$ of the language and a variety of finite semigroups.  This is the case for $\fotwosuc$.  Th\'erien and Wilke~\cite{TW}   give an algebraic characterization of this class of languages as well: $L$ is definable in \fotwosuc if and only if for each idempotent $e\in S$, the monoid $N=eSe$ is in {\bf DA}.  (This is a general method for obtaining a variety of finite semigroups from a variety of finite monoids.) We apply this theorem in the example below.

\begin{exa}
Consider the language given by the regular expression ${(a+b)}^*bab^+ab{(a+b)}^*$.  We saw earlier that it is definable in \fotwobet, and claimed that it could not be defined in \fotwosuc.     For the language under discussion, let us denote the image of a word $w$ in the syntactic monoid by $\overline{w}$.  Then $e=\overline{b}$ is idempotent, and $f=\overline{baab}$ is an idempotent in $N=eSe$. Let $s=\overline{bab}$.  Then $s\in eSe$, and $fsf=f$, so $s\in M_f$.
We now have
$fsf=f\neq fssf$, since $fssf=\overline{babab}$ is the zero of $N$. Thus $f M_f f$ contains more than one element, so $eSe\notin {\bf DA}$. Consequently this language, while definable in \fotwobet, cannot be defined in \fotwosuc.
\end{exa}

Our characterization of $\fotwobetfac$, in the final section of the paper, will also be given in terms of the syntactic semigroup.


\section{Characterization of the logic \texorpdfstring{$\fobet$}{FO2[<,BET]}}\label{sec:licsmain}

Over this section and the next, we present the
algebraic characterization of the logic $\fobet$.

\begin{thm}\label{thm.licsmain}\label{thm.cslmain}
Let $L\subseteq A^*$.
If $L$ is definable in \fotwobet if and only if $M(L)\in {\bf M}_e{\bf DA}$.
\end{thm}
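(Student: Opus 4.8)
The plan is to prove both directions by combining the expressive-equivalence results of Section~\ref{sec:tlsat} (reducing \fotwobet\ to the temporal logic \invtl) with the algebraic machinery of the operator $\mathbf{M}_e$. I would first recall that, by Theorem~\ref{thm.tlequivalence}, a language is definable in \fotwobet\ iff it is definable in \invtl, so it suffices to characterize \invtl-definability in terms of $M(L)\in\meda$. The key conceptual point is that an invariance constraint $\#B=0$ tests only \emph{which letters appear} in an interval, i.e.\ the \emph{content} $\alpha$ of a factor, so the whole logic is ``content-blind'' in a way that the $\mathbf{M}_e$ operator is designed to capture: moving to $eM_e e$ restricts attention to elements \emph{above} $e$ in the $\cJ$-order, which are exactly those reachable by words whose content is compatible with the idempotent $e$.

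For the necessity direction (definable $\Rightarrow M(L)\in\meda$), I would show $\meda$ is characterized by some identity and verify that every language definable in the logic satisfies it. Concretely, the condition $M(L)\in\meda$ says $e\,(eM_e e)\,e=eM_e e$ lies in \da\ for each idempotent $e$, equivalently $e M_e e$ satisfies the \da\ identity $(xy)^\omega x (xy)^\omega=(xy)^\omega$ whenever $x,y\geq_{\cJ} e$. The plan is to translate this into a statement about words: if $u,v$ both have content inside some common alphabet $B=\alpha(w)$ for $w$ an $\omega$-power mapping to $e$, then the logic cannot distinguish the relevant words, because within a region of fixed content $B$ the invariance guards $\#B'=0$ give no information. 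I would make this precise with an Ehrenfeucht--Fra\"iss\'e argument using the game of Section~\ref{sec:efgame}: Player~2 can match moves inside a content-$B$ block of an $\omega$-power by exploiting that the jumped-over letter sets are forced to be subsets of $B$, so the \da-equation holds in $eM_e e$.

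For the sufficiency direction ($M(L)\in\meda \Rightarrow$ definable), I would argue that if $M(L)\in\meda$ then $L$ is a finite boolean combination of languages built from ``$\foless^2$-pieces'' inserted between markers that record content changes. The strategy is structural: decompose any word according to its sequence of \emph{content $\cJ$-classes} as one scans left to right; between two consecutive such markers the content is essentially constant, and on each such block the behaviour in $M(L)$ is governed by $eM_e e\in\da$, hence by Th\'erien--Wilke is \fotwoless-definable (equivalently \ltlun-definable). I would then glue these block-formulas together using the invariance modalities $\gfut{\#B=0}$ and $\gpast{\#B=0}$ to locate the block boundaries (a boundary is a position after which some new letter first appears), yielding an \invtl\ formula, hence a \fotwobet\ sentence by Theorem~\ref{thm.tlequivalence}.

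The main obstacle I expect is the sufficiency direction, specifically making the ``block decomposition by content'' argument rigorous and uniform across all factorizations. The difficulty is that content does not change monotonically and a single letter can recur across many non-adjacent blocks, so one cannot simply cut the word at the first occurrence of each letter; one needs the $\mathbf{M}_e$ structure to certify that what happens \emph{inside} a maximal-content region is \da-controlled while the transitions between regions are captured by invariance guards. I anticipate that handling this correctly requires a careful induction on the $\cJ$-depth of $M(L)$, peeling off the top $\cJ$-classes and applying the induction hypothesis to the restricted monoids $eM_e e$; the bookkeeping needed to show the resulting formula stays within \invtl\ (rather than needing genuine threshold or factor constraints) is where the real work lies, and is presumably why the authors flag in the introduction that ``novel techniques concerning factorizations of words'' were required.
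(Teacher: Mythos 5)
Both directions of your plan contain genuine gaps, and the necessity direction rests on a false premise. You claim the logic is ``content-blind'': that ``within a region of fixed content $B$ the invariance guards $\#B'=0$ give no information,'' so Player~2 can match moves inside a content-$B$ block. This is wrong: \fotwobet\ can very well distinguish words of identical content --- the paper's own example shows ${(ab)}^*$ is definable in \fotwobet, even though all words in ${(ab)}^+$ of length at least~$2$ have the same content $\{a,b\}$. Betweenness detects order and adjacency structure inside a fixed-content region, so your EF-game matching strategy fails. The real necessity proof is far more delicate: the elements $x,y\in eM_ee$ are products of elements $\cJ$-above $e$, and to force the \da\ identity ${(xy)}^{\omega}x{(xy)}^{\omega}={(xy)}^{\omega}$ one must (i) pass to an auxiliary alphabet whose letters map to the generators $m_i, m_j'$ of $M_e$ and to the witnesses $n_k$ of $e=n_{2i-1}m_in_{2i}$, which requires proving that $\equiv_k$ is preserved under inverse homomorphisms (Lemma~\ref{lem.morphismclosure}); (ii) build words $X_{S,T}={(v^S{\bf a}v^S{\bf b}v^S)}^T$ in which copies of $v$ (mapping to $e$ and containing \emph{all} letters) are interleaved between the generator blocks, precisely so that jumps over ${\bf a}$ or ${\bf b}$ also cross full-content segments; and (iii) prove $X_{S,T}X_{S,T}\equiv_k X_{S,T}{\bf a}X_{S,T}$ (Lemma~\ref{lemma.xst}) by an intricate induction on rounds using $m$-neighborhoods and an ``enhanced'' game. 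None of this is recoverable from content-blindness, which, if true, would collapse \fotwobet\ to \fotwoless\ on fixed-content words.

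On sufficiency, the difficulty you flag is real, but your proposed repairs point in the wrong direction. The paper inducts on the \emph{alphabet size}, not on the $\cJ$-depth of $M(L)$, and the decomposition is not ``cut at positions where a new letter first appears'': it is an iterated collect/cap algorithm over proper subalphabets taken in a topological order, whose factor boundaries are definable in \fotwobet\ only via a nontrivial game argument (Lemma~\ref{lem.factschemes}), together with definable \emph{next}/\emph{previous} predicates on factor starts. More importantly, your plan applies Th\'erien--Wilke in the form $\da=\fotwoless$ to individual blocks and then glues with invariance modalities; but the essential algebraic step is different: after the final factorization every factor has full content $A$, the sequence of factor images is a word over a subalphabet $N\subseteq M$, and the subsemigroup $S$ it generates satisfies $eSe\subseteq eM_ee\in\da$, so one needs the \emph{semigroup} characterization $\fotwosuc={\bf DA}*{\bf D}$ applied over the alphabet $M$, followed by a Simulation Lemma (Lemma~\ref{lem.simulations}) translating \fotwosuc\ formulas over $M$ into \fotwobet\ formulas over $A$. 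Successor structure on the sequence of factors is indispensable here; invariance guards alone cannot recover it, so the gluing step as you describe it would not go through.
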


We will prove one direction (necessity of the condition $M(L)\in {\bf M}_e{\bf DA}$) later in this section, and the other direction (sufficiency) in Section~\ref{sec:fac}.

We can prove from rather abstract principles that there is some variety ${\bf W}$ of finite monoids that characterizes definability in \fotwobet in this way.  Our theorem implies that ${\bf W}$ coincides with ${\bf M}_e{\bf DA}$. The theorem provides an effective method for determining whether a given regular language is definable in \fotwobet, since we can compute the multiplication table of the syntactic monoid, and check whether it belongs to ${\bf M}_e{\bf DA}$.

\begin{exa}
In our example above, where $L={(ab)}^*$, all the submonoids $e M_e e$ are either trivial, or are two-element monoids isomorphic to $\{0,1\}$, which is in {\bf DA}.  Thus ${(ab)}^*$ is definable in \fotwobet, as we saw earlier by construction of a defining formula.
\end{exa}

The following corollary to Theorem~\ref{thm.licsmain} answers our original question of whether $FO[<]$ has strictly more expressive power than \fotwobet.
To prove it, we need only calculate the syntactic monoid of the given language, and verify that it is in {\bf Ap} but not in ${\bf M}_e{\bf DA}$.

\begin{cor}\label{cor.strictinclusion} The language given by the regular expression
${(a{(ab)}^*b)}^*$
is definable in $FO[<]$ but not in \fotwobet.
\end{cor}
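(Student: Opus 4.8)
The plan is to reduce the statement entirely to finite-monoid computations, using the two algebraic characterizations already available: the McNaughton--Papert--Sch\"utzenberger theorem ($L$ is definable in $FO[<]$ iff $M(L)\in{\bf Ap}$) and Theorem~\ref{thm.licsmain} ($L$ is definable in \fotwobet\ iff $M(L)\in\meda$). Writing $L={(a{(ab)}^*b)}^*$ (this is the language $BB_2$ of Figure~\ref{fig:hierarchies}), it therefore suffices to compute the syntactic monoid $M=M(L)$ and to verify that $M\in{\bf Ap}$ but $M\notin\meda$.

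First I would build the minimal automaton of $L$ by parsing a word as a concatenation of blocks $a{(ab)}^*b$. This gives a four-state automaton: an accepting ``between-blocks'' state $0$, two internal states (one recording that we are inside the $a\,(ab)^\ast$ part and owe a final $b$, one recording that we have just read the $a$ of an inner $ab$ and owe its $b$), and a sink. Reading off the transition monoid, the generators $\overline a,\overline b$ satisfy $\overline a^{\,3}=\overline b^{\,3}=z$, where $z$ is the zero (the constant map to the sink), and a direct check of the finitely many elements shows each $m$ satisfies $m\,m^{\omega}=m^{\omega}$, so $M$ is aperiodic and hence $L\in FO[<]$. I expect this part to be routine bookkeeping.

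The real content is showing $M\notin\meda$, i.e.\ exhibiting one idempotent $e\in E(M)$ with $e M_e e\notin\da$. I would take $e=\overline{ab}$, which is idempotent. Since $e=1\cdot\overline a\cdot\overline b=\overline a\cdot\overline b\cdot 1$, both generators are $\cJ$-above $e$, so $M_e=M$ and thus $eM_ee=eMe$. Computing $eMe$ produces a six-element monoid, and the key observation is that it is isomorphic to the syntactic monoid of ${(ab)}^*$ computed in the earlier example: writing $p=e\overline a e$ and $q=e\overline b e$, one finds $p^{2}=q^{2}=z$, while $pq$ and $qp$ are two distinct idempotents satisfying $pqp=p$ and $qpq=q$---exactly the presentation $1,a,b,ab,ba,0$ of $M({(ab)}^*)$. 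Since that monoid was already shown not to lie in \da, neither does $eMe$, whence $M\notin\meda$ and $L$ is not definable in \fotwobet.

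The main obstacle is the $eMe$ computation together with the recognition of the ${(ab)}^*$ obstruction inside it. Conceptually, the outer star of $BB_2$ ``localizes'' at the idempotent $e=\overline{ab}$ to reproduce precisely the one-level failure of \da\ exhibited by $(ab)^*$, which is exactly what $\meda$---being \da\ relativized to the submonoids $eM_ee$---cannot absorb. A reader wishing to avoid invoking the isomorphism could instead conclude directly: with $f=pq\in E(eMe)$, the element $p$ lies in the submonoid $(eMe)_f$ generated by the elements $\cJ$-above $f$ (since $f=e\cdot p\cdot q$ gives $f\leq_{\cJ}p$), and a one-line check shows $fpf=z\neq f$, so $f\,(eMe)_f\,f$ contains the two distinct elements $f$ and $z$, giving $eMe\notin\da$ at once.
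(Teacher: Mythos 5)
Your proof is correct and takes essentially the same route as the paper's: the same minimal automaton (three states plus a sink) and aperiodicity check for the $FO[<]$ half, and for the \fotwobet half the same localization at the idempotent $e=\overline{ab}$, where your witnesses $p=e\overline{a}e$ and $f=pq$ with $fpf=z\neq f$ are exactly the paper's $m=\overline{(ab)a(ab)}$ and $f=\overline{(ab)a(ab)b(ab)}$ with $fmf=0\neq f$. Your identification of $eM_ee$ with the six-element syntactic monoid of ${(ab)}^*$ is a pleasant conceptual gloss, but it adds nothing beyond the direct check in your final paragraph, which is the paper's argument verbatim.
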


\begin{proof}[Proof of Corollary.]
Let
$L= {(a{(ab)}^*b)}^*$.
It is easy to construct a sentence of $FO[<]$ defining $L$. We can give a more algebraic proof by working directly with the minimal automaton of $L$ and showing that its transition monoid is aperiodic.  This automaton has three states $\{0,1,2\}$ along with a dead state.  For each state $i=0,1,2$ we define $i\cdot a=i+1$, $i\cdot b=i-1$, where this transition is understood to lead to the dead state if $i+1=3$ or $i-1=-1$. The transition monoid has a zero, which is the transition mapping every state to the dead state. It is then easy to check that every $m\in M(L)$ is either idempotent, or has $m^3=0$, and thus $M(L)$ is aperiodic.
We now show that $L$ cannot be expressed in \fotwobet.  By Theorem~\ref{thm.licsmain}, if  $L$ is expressible then $M(L)\in {\bf M}_e{\bf DA}$. As before, we will denote the image of a word $w$ in $M(L)$ by $\overline {w}$. Easily, $e=\overline{ab}$ is an idempotent in $M(L)$, and both $m=\overline{(ab)a(ab)}$ and $f=\overline{(ab)a(ab)b(ab)}$ are elements of $e\cdot {M(L)}_e\cdot e$ with $f$ idempotent and $f\leq_{\cJ} m$ in $e\cdot {M(L)}_e\cdot e$.  So if $e\cdot {M(L)}_e\cdot e\in{\bf DA}$ we would have
$fmf=f$.  Now $f$ is the transition that maps state 0 to 0 and all other states to the dead state, and $m$ is the transition that maps 0 to 1 and all other states to the dead state.  Thus $fmf=0\neq f$, so $M(L)\notin {\bf M}_e{\bf DA}$.
\end{proof}

\subsection{Proof of necessity in Theorem~\ref{thm.licsmain}}

Here we prove the direction of Theorem~\ref{thm.licsmain} stating that every language definable in \fotwobet has its syntactic monoid in ${\bf M}_e{\bf DA}$.  This is all we needed to prove Corollary~\ref{cor.strictinclusion}.

First, we use a game argument to prove the following fact:

\begin{lem}\label{lem.morphismclosure}
Let $k\geq 0$, $A,B$ finite alphabets, and $f:B^*\to A^*$ a monoid homomorphism.  Let $w_1,w_2\in B^*$.  If  $w_1\equiv_k w_2$, then $f(w_1)\equiv_k f(w_2)$.
\end{lem}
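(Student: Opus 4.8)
The plan is to argue entirely at the level of the Ehrenfeucht--Fra\"{\i}ss\'e game of Corollary~\ref{cor.game_equiv}. Since $w_1\equiv_k w_2$ means Player 2 wins the $k$-round game on the unmarked words $w_1,w_2$, it suffices to exhibit a winning strategy for Player 2 in the $k$-round game on $f(w_1),f(w_2)$; Corollary~\ref{cor.game_equiv} then yields $f(w_1)\equiv_k f(w_2)$. The central device is a shadow simulation: Player 2 plays an auxiliary copy of the game on $w_1,w_2$ and translates moves back and forth.

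First I would fix the block structure induced by $f$. Writing $w_i=b^{(i)}_1\cdots b^{(i)}_{n_i}$, the word $f(w_i)=f(b^{(i)}_1)\cdots f(b^{(i)}_{n_i})$ is a concatenation of blocks, the $p$-th being the factor $f(b^{(i)}_p)$. Every position of $f(w_i)$ lies in a unique non-erased block $p$ (one with $f(b^{(i)}_p)\neq\epsilon$) at a well-defined offset $o$ inside it. The invariant Player 2 maintains is: the two pebbles sit in blocks $p_1,p_2$ at a common offset $o$, where $(p_1,p_2)$ is a position reachable under her fixed winning strategy in the $w_1,w_2$ game. That strategy guarantees $w_1(p_1)=w_2(p_2)$, so the blocks $f(w_1(p_1))$ and $f(w_2(p_2))$ are equal, the offset $o$ makes sense on both sides, and it labels the same letter.

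In the opening round Player 1 places a pebble on some position of, say, $f(w_1)$, lying in block $p_1$; Player 2 plays the opening of her shadow strategy on letter-position $p_1$ of $w_1$, obtains a reply $p_2$, and places her pebble in block $p_2$ at the same offset, establishing the invariant. In a later round Player 1 either moves inside the current block---in which case Player 2 mirrors the offset change inside the matching block and consumes no shadow move---or moves into a different block $p_1'$, which is the image of a genuine move $p_1\to p_1'$ in $w_1$; Player 2 feeds that move to her shadow strategy, obtains a reply $p_2'$, and places her pebble in block $p_2'$ at the same offset as Player 1's destination. At most one shadow move is spent per round, so at most $k$ shadow rounds occur and her $k$-round winning strategy is always defined; the construction is symmetric if Player 1 plays in $f(w_2)$.

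It remains to check that these replies satisfy conditions (i)--(iii) of the game. Condition (ii) is immediate from $w_1(p_1')=w_2(p_2')$ together with equal offsets. Condition (i) holds because blocks occur in the order of their letter-positions, so a rightward move corresponds to $p_1<p_1'$, hence to $p_2<p_2'$ in the shadow game, hence to a rightward move on the $f(w_2)$ side. The main obstacle is condition (iii), the agreement of the sets of jumped-over letters. For a cross-block rightward move the jumped-over letters of $f(w_1)$ split into three parts: the suffix of the source block $f(w_1(p_1))$ after offset $o$, the contents of all intermediate blocks, and the prefix of the target block $f(w_1(p_1'))$ before the destination offset. The first and third parts match their $f(w_2)$-counterparts because $w_1(p_1)=w_2(p_2)$ and $w_1(p_1')=w_2(p_2')$. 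The middle part equals $\bigcup_{b\in S}\alpha(f(b))$, where $S=\{w_1(p):p_1<p<p_1'\}$ is exactly the set of $w_1$-letters jumped over in the shadow game; condition (iii) of that game forces the set of $w_2$-letters jumped over between $p_2$ and $p_2'$ to be this same $S$, so the middle parts coincide as well. Letters $b$ with $f(b)=\epsilon$ contribute $\alpha(\epsilon)=\emptyset$ and are harmless, and within-block moves satisfy (i)--(iii) trivially since both sides use the identical block $f(w_1(p_1))=f(w_2(p_2))$. This establishes the winning strategy and hence the lemma.
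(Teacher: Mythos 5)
Your proof is correct and takes essentially the same route as the paper's own proof: a strategy-copying (``shadow game'') argument in which Player~2 translates positions of $f(w_i)$ to the blocks $f(b)$ they lie in, replies at the matching offset in the matching block, and verifies conditions (i)--(iii) via the same three-part decomposition of the jumped-over letters (source-block suffix, intermediate blocks, target-block prefix), with intermediate blocks handled through the identity of the jumped letter sets in the shadow game; the treatment of erasing letters also matches the paper's closing remark. The only differences are presentational, e.g.\ you make the invariant and the one-shadow-move-per-round bookkeeping more explicit.
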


\begin{proof}  We do this by a strategy-copying argument.  If $w_1\equiv_k,w_2$, then Player 2 has a winning strategy in the $k$-round game in $(w_1,w_2)$. We use it to produce a winning strategy for the $k$-round game in $(f(w_1),f(w_2))$.  We write
\[w_1=a_{11}\cdots a_{1r}, w_2=a_{21}\cdots a_{2s},\]
and
\[f(w_1)=v_{11}\cdots v_{1r}, f(w_2)=v_{21}\cdots v_{2s},\]
where $v_{ij}=f(a_{ij})\in B^*$.

  Player 2 will keep track of a `shadow game' in $(w_1,w_2)$ to determine her moves.
Let us assume that Player 1's move is in $f(w_1)$; the description is the same if we assume the move is in $f(w_2)$. If Player 1 places the pebble initially on a position of $f(w_1)$, or moves to such a position, then the position falls within some factor $v_{1i}=f(a_{1i})$.  Player 2, acting as Player 1, will place or move the pebble in the shadow game so that it is on the letter $a_{1i}$.  She then uses her strategy in the shadow game to  determine the reply, moving to $a_{2j}$ in $w_2$.  Since this is a legal reply for Player 2, we must have $a_{2j}=a_{1i}$.  Thus $v_{2j}=v_{1i}$.  Now suppose Player 1's move was on the $t^{th}$ letter of $v_{1i}$.  Then Player 2's move will be on the $t^{th}$ letter of $v_{2j}$.

It remains to show that in all cases this constitutes a legal reply for Player 2 in the game in $(f(w_1),f(w_2))$. Since $v_{1i}=v_{2j}$, the $t^{th}$ letters of the two words are the same, so the two pebbles wind up on identical letters.  To see that the two moves are in the same direction, observe that if Player 1 moves the pebble toward the right in $w_1$, then in the shadow game, Player 2 either moves to the right or keeps the pebble on the same position.  (This latter case is like a passed move, and Player 2 responds by not moving her pebble.) In either instance, the reply in $f(w_2)$ is a move toward the right, either within the factor $v_{2j}$, or from $v_{2j'}$ to $v_{2j}$ with $j'<j$.  To see that the sets of letters jumped over is the same, suppose that the move in $w_1$ was from the $t^{th}$ letter of $v_{1i'}$ to the $u^{th}$ letter of $v_{1i}$ with $i\leq i'$. If $i'<i$, the set of letters jumped over is the set of letters in $v_{1,i'+1},\ldots v_{1,i-1}$ together with the letters from positions $t$ to the end of $v_{1i'}$ and the letters from the start of $v_{1i}$ to position $u$, and similarly in $f(w_2)$. Equality of the two sets of letters follows from equality in the shadow game. If $i'=i$, then the set of jumped letters is just the set of letters strictly between positions $t$ and $u$ of $v_{1i}=v_{j2}$.

Observe that this argument works even if the morphism $f$ is erasing---that is, if $f(a)=1$ for some letters $a$.
\end{proof}

Now let $L\subseteq A^*$ be definable by a sentence of \fotwobet.   $L$ is a union of $\equiv_k$-classes for some $k$, and thus $L$ is recognized by the quotient monoid $N=A^*/\equiv_k$, so $M(L)$ is a homomorphic image of this monoid.  Consequently, it is sufficient to show that $N$ itself is a member of the variety ${\bf M}_e{\bf DA}$.  We denote by $\psi:A^*\to N$ the projection morphism onto this quotient.

Take $e=e^2\in N$ and $x,y\in e N_e e$. We will show
\[{(xy)}^{\omega}x{(xy)}^{\omega}={(xy)}^{\omega}.\]
This identity characterizes the variety {\bf DA} (see, for example Diekert, {\it et al.}~\cite{DGK}), so this will prove $N\in {\bf M}_e{\bf DA}$, as required. We can write
\[x=em_1\cdots m_r e, y=em_1'\cdots m_s'e,\]
where each $m_i$ and $m_j'$ is $\leq_{\cJ}$-above $e$. Thus we have
\[e=n_{2i-1} m_i n_{2i}=n_{2r+2j-1}m_j'n_{2r+2j},\]
for some $n_1,\ldots n_{2r+2s}\in N$.

Now let $B$ be the alphabet
\[\{a_1,\ldots,a_r,b_1,\ldots,b_s,c_1,\ldots,c_{2r+2s}\},\]
  We define a homomorphism $\phi:B^*\to N$ by mapping each $a_i$ to $m_i$, $b_j$ to $m_j'$, and $c_k$ to $n_k$.  Since $\psi:A^*\to N$ is onto, we also have a homomorphism $f:B^*\to A^*$ satisfying $\psi\circ f=\phi$.
We define words $v,{\bf a},{\bf b}, X_{S,T}\in B^*$, where $S,T>0$, as follows:
\[v=\prod_{i=1}^r c_{2i-1} a_i c_{2i}\cdot\prod_{i=r+1}^s c_{2i-1}b_{i-r}c_{2i},\]
where
${\bf a}=a_1\cdots a_r$,  ${\bf b}=b_1\cdots b_r$, and
$X_{S,T}={(v^S{\bf a}v^S{\bf b}v^S)}^T$.

 Observe that
\[\phi(v)=e, \phi(X_{S,T})={(xy)}^T,\phi({\bf a})=m_1\cdots m_r,\phi({\bf b})=b_1\cdots b_s.\]
Since all languages recognized by $N$ are definable in $FO[<]$, $N$ is aperiodic, and thus for sufficiently large values of $T$, we have $\phi(X_{S,T})={(xy)}^{\omega}$.  Thus for sufficiently large $T$ and all $S$ we have
\[\phi(X_{S,T}X_{S,T})={(xy)}^{\omega}{(xy)}^{\omega}={(xy)}^{\omega},\]
\[\phi(X_{S,T}{\bf a}X_{S,T})={(xy)}^{\omega}x{(xy)}^{\omega}.\]

The crucial step is provided by the following lemma.

\begin{lem}\label{lemma.xst}
Let $k\geq 0$.  For $S,T$ sufficiently large,
\[X_{S,T}X_{S,T}\equiv_k X_{S,T}{\bf a}X_{S,T},\]
\end{lem}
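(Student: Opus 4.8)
The plan is to prove Lemma~\ref{lemma.xst} by exhibiting a winning strategy for Player~2 in the $k$-round game on the two \emph{unmarked} words, after which the conclusion follows from Corollary~\ref{cor.game_equiv}. The structural fact I would record first is that $v$ lists every letter of $B$ exactly once, so $\alpha(v)=B$; in particular $\alpha({\bf a})=\{a_1,\dots,a_r\}\subseteq\alpha(v)$. I would then rewrite the two words around their only point of difference. Let $\lambda$ be $X_{S,T}$ with its trailing block $v^S$ deleted (so $\lambda$ ends in the marker ${\bf b}$) and $\rho$ be $X_{S,T}$ with its leading $v^S$ deleted (so $\rho$ begins with the marker ${\bf a}$); then $X_{S,T}=\lambda v^S=v^S\rho$ and
\[
X_{S,T}X_{S,T}=\lambda\, v^{2S}\,\rho,\qquad X_{S,T}{\bf a}X_{S,T}=\lambda\, v^{S}\,{\bf a}\, v^{S}\,\rho.
\]
Thus the two words share the prefix $\lambda$ and suffix $\rho$ (each containing $\gg k$ copies of $v$ once $T$ is large) and differ only in their central block, $v^{2S}$ versus $v^S{\bf a}v^S$; the inserted ${\bf a}$ uses only letters that already occur in every copy of $v$.

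The engine of the whole argument is a \emph{content-blindness} observation. Player~2's obligations on a move are conditions (i)--(iii), and (iii) constrains only the \emph{set} of letters jumped over, never their multiplicity or arrangement. Because $\alpha(v)=B$, any jump whose source and destination are separated by at least one full copy of $v$ has jumped-over set exactly $B$, so condition~(iii) is automatic for all such ``long'' jumps; for the remaining ``short'' jumps the jumped-over set is determined by a bounded-length factor around the move. In particular the logic can never certify the \emph{absence} of a ${\bf b}$-block (indeed of any marker) across a region that contains a copy of $v$, since $b_1,\dots,b_s$ occur inside every $v$. This is precisely why the extra ${\bf a}$-marker present in $X_{S,T}{\bf a}X_{S,T}$ but absent from $X_{S,T}X_{S,T}$ ought to be invisible to Player~1.

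Player~2's strategy is then a copying strategy relative to a block-respecting correspondence between positions, maintained as an invariant carrying a \emph{safety radius} (measured in copies of $v$) that shrinks by a bounded amount each round; taking $S,T$ larger than the total radius consumed over $k$ rounds guarantees the $v^S$ padding never runs out, which is where ``$S,T$ sufficiently large'' is used. Concretely: if Player~1 plays inside $\lambda$ or inside $\rho$, Player~2 copies to the identical position; if his move crosses the central block (hence crosses a full $v$, so its jumped-over set is $B$), she replies at any letter-matching, $v$-aligned position of the corresponding central block. Conditions (i)--(iii) are immediate from the identity of $\lambda$, of $\rho$, and from content-blindness, while the invariant records that the two pebbles occupy matching offsets within matching blocks and that the numbers of $v$-copies separating them from the central block and from the word ends agree up to the current threshold.

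The step I expect to be the main obstacle is Player~2's reply when Player~1 places or moves onto (or immediately beside) the buried ${\bf a}$-marker of $X_{S,T}{\bf a}X_{S,T}$, a position with no exact counterpart in $X_{S,T}X_{S,T}$. Matching the letter alone is insufficient, because Player~1 can probe the local successor structure---the successor of $a_i$ is $a_{i+1}$ on a marker but a $c$-letter inside a copy of $v$---and thereby tell a marker position from an interior $v$ position. The resolution, which I would argue with care, is that Player~2 replies on a \emph{genuine} ${\bf a}$-marker of $X_{S,T}X_{S,T}$ at the same offset: the bounded local neighbourhood of any ${\bf a}$-marker is identical in the two words, and the only feature separating the buried marker from a normal one (namely that its nearest neighbouring marker on one side is again an ${\bf a}$ rather than a ${\bf b}$) sits at distance at least $S$ copies of $v$. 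Reaching and certifying that feature would require a jump crossing a full $v$, whose jumped-over set is $B$ and so cannot witness the absence of the intervening ${\bf b}$-block; since $S$ exceeds the remaining safety radius, Player~1 can neither cover the distance nor certify the distinction within the rounds that remain, so the reply is safe. Having thus established $X_{S,T}X_{S,T}\equiv_k X_{S,T}{\bf a}X_{S,T}$, I note that this is the self-contained combinatorial core that is afterwards transported through Lemma~\ref{lem.morphismclosure} to yield the required identity in $N$.
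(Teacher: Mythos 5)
Your proposal is correct in substance and rests on exactly the same combinatorial facts as the paper's proof: that $\alpha(v)=B$, so any jump spanning a full copy of $v$ has jumped-over set exactly $B$ and is ``content-blind,'' and that consequently the buried ${\bf a}$-marker can be answered by a genuine ${\bf a}$-marker whose surroundings agree out to distance $S$ copies of $v$. What differs is the scaffolding. You fix the decomposition $X_{S,T}X_{S,T}=\lambda v^{2S}\rho$ versus $X_{S,T}{\bf a}X_{S,T}=\lambda v^{S}{\bf a}v^{S}\rho$ once and for all, and run a single $k$-round strategy carrying a quantitative ``safety radius'' invariant. The paper instead strengthens the statement to an \emph{$m$-enhanced game}, in which after every round the pebbles must sit at matching offsets of matching $m$-neighborhoods (factors $v^m{\bf a}v^m$, $v^m{\bf b}v^m$, or the extremal $v^m$) or outside all of them, and proves $X_{S,T}X_{S,T}\equiv^m_k X_{S,T}{\bf a}X_{S,T}$ by induction on $k$, trading one round against one neighborhood level and one layer of the word (peeling peripheral factors $X_{S,1}$ and recursing in the center); the case $m=0$ is the lemma. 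The paper's induction is precisely a rigorous form of your ``radius shrinks by a bounded amount each round,'' with the inductive hypothesis playing the role of your invariant; your route is more direct and makes the reason the lemma holds more visible. One caution for a full write-up: your stated invariant (matching offsets within blocks, plus truncated counts of $v$-copies to the central block and word ends) is not quite strong enough, and ``she replies at \emph{any} letter-matching, $v$-aligned position'' is too loose. The game can separate a position lying one $v$-copy from a marker from one lying several copies away: a short jump onto the adjacent marker jumps over a \emph{proper} subset of $B$ (the tail of that copy of $v$), and the opposing pebble can match that set only if it sits at the same distance from a marker, since any longer reply sweeps up all of $B$. So the invariant must also track proximity to the nearest markers up to the current radius---exactly what the paper's $m$-neighborhoods encode. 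Your handling of the buried-marker case already does this implicitly, so the repair is routine, but carrying it out in general essentially reconstructs the paper's neighborhood apparatus.
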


\noindent
Assuming this for the moment,  by Lemma~\ref{lem.morphismclosure} we have

\[f(X_{S,T}X_{S,T})\equiv_k f(X_{S,T}{\bf a}X_{S,T}).\]
So
\begin{eqnarray*}
{(xy)}^{\omega} &=&\phi(X_{S,T}X_{S,T})\\
&=& \psi(f(X_{S,T}X_{S,T}))\\
&=& \psi(f(X_{S,T}{\bf a}X_{S,T}))\\
&=& \phi(X_{S,T}{\bf a}X_{S,T})\\
&=& {(xy)}^{\omega}x{(xy)}^{\omega},
\end{eqnarray*}
as required.


\subsection{Proof of Lemma~\ref{lemma.xst}}

Let $r,B,{\bf a},{\bf b}$, and $v$ be as above.
 Let $R>0$. We will build words by concatenating the factors
\[{\bf a} , {\bf b}, v^R.\]

For example, with $R=4$, two such words are

\[v^4{\bf a}v^4{\bf b}v^8{\bf a}v^4{\bf a}v^{12}, v^4{\bf a}v^4{\bf ba}v^4{\bf a}v^{12}.\]
If the first and last factors of such a word are $v^R$, and if two consecutive factors always include at least one $v^R$, then we call it an \textit{$R$-word}.  The first word in the example above is a 4-word, but the second is not, because of the consecutive factors {\bf b} and {\bf a}. In what follows, we will concern ourselves exclusively with $R$-words. The way in which we have defined the word $v$ ensures that the factorization of an $R$-word in the required form is unique.
%

 Let $m\geq 0$, and $R> 2m$.  We will define special factors in $R$-words that we call \textit{$m$-neighborhoods}. One kind of $m$-neighborhood is a factor  of the form  $v^m{\bf a}v^m$ or $v^m{\bf b}v^m$, where the ${\bf a}$ or ${\bf b}$ is one of the original factors used to build the word.  In addition, we say that the prefix $v^m$ and suffix $v^m$ are also $m$-neighborhoods.  So, for example, the 1-neighborhoods in the 4-word in the example above are indicated here by underlining:

 \[\underline{v}\cdot v^2\cdot\underline{v{\bf a}v}\cdot v^2\cdot\underline{v{\bf b}v}\cdot v^6\cdot\underline{v{\bf a}v}\cdot v^2\cdot\underline{v{\bf a}v}\cdot v^{10}\cdot\underline{v}.\]

  The condition $R> 2m$ ensures that $m$-neighborhoods are never directly adjacent, so that every position belongs to at most one $m$-neighborhood, and some of the $v$ factors are contained in no $m$-neighborhood.

 Consider two marked words $(w_1, i_1), (w_2,i_2)$ where $w_1, w_2$ are $R$-words. We say these marked words are $\equiv_0^m$-equivalent if $w_1(i_1)=w_2(i_2)$, and if either $i_1$ and $i_2$ are in the same position in identical $m$-neighborhoods, or if neither $i_1$ nor $i_2$ belongs to an $m$-neighborhood.   For instance, if $m=2$, and $i_1$ is on the third position of a 2-neighborhood $v^2{\bf a}v^2$ in $w_1$, then $i_2$ will be on the third position of a 2-neighborhood $v^2{\bf a}v^2$ of $w_2$.  If $m=0$, then we only require $w_1(i_1)=w_2(i_2)$, so $\equiv_0^0$ equivalence is the same as $\equiv_0$-equivalence.

 We now play our game  in marked words $(w_1,i_1), (w_2,i_2)$,  where $w_1, w_2$ are $R$-words. We add the rule that at the end of every round, the two marked words $(w_1,j_1)$ and $(w_2,j_2)$ are $\equiv_0^m$-equivalent.  If Player 2 has a winning strategy in the $k$-round game with this additional rule, we say that the starting words $(w_1,i_1)$ and $(w_2,i_2)$  are  $\equiv_k^m$-equivalent. Once again, the case $m=0$ corresponds to ordinary $\equiv_k$-equivalence.

 We will call this stricter version of the game the \textit{$m$-enhanced game}. As with the original game, we can define a version of the $m$-enhanced game for ordinary (that is, unmarked) $R$- words: In the first round, Player 1 places his pebble on a position in either of the words, and Player 2 responds so that the resulting marked words are $\equiv_0^m$-equivalent. Play then proceeds as described above for $k-1$ additional rounds.  We write $w_1\equiv^m_k w_2$ if Player 2 has a winning strategy in this $k$-round $m$-enhanced game.

 We claim that for each $m\geq 0$, $k\geq 1$, there exists $R$ such that if $S,T\geq R$,
 \[X_{S,T}X_{S,T}\equiv_k^m X_{S,T}{\bf a}X_{S,T}.\]

 The case $m=0$ is the desired result.

  We prove this by induction on $k$, first considering the case $k=1$ with arbitrary $m\geq 0$. Choose $R> 2m$, and $S,T\geq R$. If Player 1 plays in either word inside one of the factors $X_{S,T}$, then Player 2 might try to simply mimic this move in the corresponding factor $X_{S,T}$ in the other word.  This works unless Player 1 moves near the center of one of the two words.  For example, if Player 1 moves in the final position of the first $X_{S,T}$ in $X_{S,T}X_{S,T}$, then this position is contained inside a factor $v$ and does not belong to any $m$-neighborhood, but the corresponding position in the other word belongs to a neighborhood of the form $v^m{\bf a}v^m$, so the response is illegal. Of course we can solve this problem: $X_{S,T}$ itself contains factors $v$ that do not belong to any $m$-neighborhood (because $R> 2m$).  Conversely, if Player 1 moves anywhere in the central $m$-neighborhood $v^m{\bf a}v^m$ in $X_{S,T}{\bf a}X_{S,T}$ then Player 2 can find an identical neighborhood inside $X_{S,T}$ and reply there.

Now let  $k\geq 1$ and suppose that the Proposition is true for this fixed $k$ and all $m\geq 0$. We will show the same holds for $k+1$. Let $m\geq 0$. Then by the inductive hypothesis, there exist $S,T$ such that
\[X_{S,T}X_{S,T}\equiv^{m+1}_k X_{S,T}{\bf a}X_{S,T}.\]
We will establish the proposition by showing
\[X_{S,T+1}X_{S,T+1}\equiv^{m}_{k+1}X_{S,T+1}{\bf a}X_{S,T+1}.\]
Observe that
\[X_{S,T+1}X_{S,T+1}=X_{S,1}(X_{S,T}X_{S,T})X_{S,1},\]
\[X_{S,T+1}{\bf a}X_{S,T+1}=X_{S,1}(X_{S,T}{\bf a}X_{S,T})X_{S,1}.\]
We will call the factors $X_{S,1}$ occurring in these two words the \textit{peripheral factors}; the remaining letters make up the \textit{central regions}. We prove the proposition by presenting a winning strategy for Player 2 in the $(k+1)$-round $m$-enhanced game in these two words.

For the first $k$ rounds, Player 2's strategy is as follows:
\begin{itemize}
\item If Player 1 moves into either of the peripheral factors in either of the words, Player 2 responds at the corresponding position of the corresponding peripheral factor in the other word.
\item If Player 1 moves from a peripheral factor into the central region of one word, Player 2 treats this as the opening move in the $k$-round $(m+1)$-enhanced game in the central regions, and responds according to her winning strategy in this game.
\item If Player 1 moves from one position in the central region of a word to another position in the central region, Player 2 again responds according to her winning strategy in the $k$-round $(m+1)$-enhanced game in the central regions.
\end{itemize}

%
%

\noindent
 We need to show that each move in this strategy is actually a legal move in the game---that in each case the sets of letters jumped by the two players are the same, and that the $m$-neighborhoods match up correctly.  It is trivial that the $m$-neighborhoods match up---in fact, the $(m+1)$-neighborhoods do, so we concentrate on  showing that the sets of jumped letters are the same.

This is clearly true for moves that remain within a single peripheral factor or move from one peripheral factor to another.  What about the situation where Player 1 moves from a peripheral factor to the central region?  We can suppose without loss of generality that the move begins in the $i^{th}$ position of the left peripheral factor in one word, and jumps right to the $j^{th}$ position of the central region of the the word. If $j\leq |v|$, then this move is within an $m$-neighborhood of the central factor, namely the prefix of the central factor of the form $v^m$, so Player's 2 reply will be at the identical position in the corresponding neighborhood, and thus at position $j$ of the central factor in the other word.  Thus the two moves jumped over precisely the same set of letters. If $j>|v|$, then Player 1's move jumps over all the letters of $v$.  Since Player 2's response, owing to the condition on neighborhoods, cannot be within the first $|v|$ letters of the central factor of the other word, her move too must jump over all the letters of $v$. The same argument applies to moves from the central region into a peripheral factor.

This shows that Player 2's strategy is successful for the first $k$ rounds, but we must also show that Player 2 can extend the winning strategy for one additional round. If after the first $k$ rounds, the two pebbles are in peripheral factors, then Player 1's next move is either within a single peripheral factor, from one peripheral factor to another, or from a peripheral factor into the central region.  In all these cases Player 2 responds exactly as she would have during the first $k$ rounds. As we argued above, this response is a legal move.

If Player 1 moves from the central region to one of the peripheral factors, Player 2 will move to the same position in the corresponding peripheral factor in the other word.  Again, we argue as above that this is a legal move.

The crucial case is when Player 1's move is entirely within one of the central factors. Now we can no longer use the winning strategy in the game in the central factors to determine Player 2's response, because she has run out of moves. We may suppose, without loss of generality, that Player 1's move is toward the left. We consider whether or not the set of letters that Player 1 jumps consists of all the letters in $B$.  If it does, then Player 2 can just locate a matching neighborhood somewhere in the left peripheral factor in the other word and move there.  In the process, Player 2 also jumps to the left over all the letters of $B$.

What if Player 1 jumps over a proper subset of the letters of $B$?   Player 2 responds by moving the same distance in the same direction in the other word. Why does this work? After $k$ rounds, the pebbles in the two words must be on the same letters and in the same positions within matching $(m+1)$-neighborhoods, or outside of any $(m+1)$-neighborhood.  Thus we cannot have the situation where, for example, the pebble in one word is on a letter $b_i$ within a factor ${\bf b}$, and in the other word on the same letter $b_i$ within a factor $v$.  If Player 1 moves the pebble from a factor $v$ into a factor ${\bf b}$ or ${\bf a}$---let us say ${\bf b}$--- then the original position of the pebble was within a 1-neighborhood.  Therefore Player 2's pebble was in the corresponding position in a matching 1-neighborhood, so the response moves this pebble to ${\bf b}$ as well, and thus jumps over the same set of letters. (This is the only potentially problematic case.)  How can we ensure that the $m$-neighborhoods match up correctly after this move?  If  the move takes Player 1's pebble into an $m$-neighborhood, then the move stayed within a single $(m+1)$-neighborhood.  This means that the original position of Player 2's pebble was in the corresponding position of an identical $(m+1)$-neighborhood, so that after Player 2's response, the two pebbles are in matching $m$-neighborhoods.



\section{The factorization sequence}\label{sec:fac}

Sufficiency in Theorem~\ref{thm.cslmain} will be proved by induction on the alphabet size.
The proof is somewhat intricate.
We develop new techniques of factorization which are amenable to
simulation using logic. The bulk of the proof is combinatorics on words and finite model theory.
At the end we will apply the algebraic characterization of $\fotwosuc$ from~\cite{TW} cited in the previous section.

Throughout this section, we will suppose that $h:A^*\to M$ is a homomorphism onto a finite monoid $M$  that satisfies $e M_e e\in{\bf DA}$ for all idempotents $e$. Sufficiency in Theorem~\ref{thm.cslmain} is equivalent to showing that the language  $h^{-1}(m)$ is definable in $FO^2[<,bet]$ for all $m\in M$.

The proof is trivial for a one-letter alphabet, so assume $|A|>1$
and that the theorem holds for all strictly smaller alphabets.

For now we distinguish a letter $a\in A$, and restrict our attention to
\emph{$a$-words} $w$ with the following three properties:
\begin{itemize}
\item $\alpha(w)=A$
\item $a$ is the first letter of $w$
\item $a$ is the \textit{last} letter to appear in a \textit{right-to-left} scan of $w$; that is, $w=xay$ where $\alpha(y)=A\backslash\{a\}$.
\end{itemize}

\noindent
We describe an algorithm for constructing a sequence of factorizations for any $a$-word.  Each step of the algorithm is divided into two sub-steps, and we will refer to each of these sub-steps as a factorization scheme.  The factors that occur in each scheme are formed by concatenating factors from the previous scheme. That is, at each step, we clump smaller factors into larger ones, so the number of factors decreases (non-strictly) at each step.

We begin by putting a linear ordering $<$ on the set of proper subalphabets of $A$  that contain the letter $a$. This will be a topological sort of the subset partial order.  That is, if $B,C$ are two such subalphabets with $B\subsetneq C$, then $B<C$, but otherwise the ordering is arbitrary.  For example, with $A=\{a,b,c,d\}$, we can take
\[\{a\} < \{a,b\}<\{a,c\}<\{a,b,c\}<\{a,d\}<\{a,b,d\}<\{a,c,d\},\]
as one of many possibilities.
One way to think about our techniques is as a refinement of Th\'erien and
Wilke's combinatorial characterization of \da~\cite{TW}
which only used the inclusion order over an alphabet.

Here is the algorithm, which is the new development over \da:

\begin{itemize}
\item Initially factor $w$ as
$au_1\cdots au_k$,
where each $\alpha(u_i)$ is properly contained in $A$.
\item For each  proper subalphabet  $B$ of $A$ with $a\in B$, following the linear order
\begin{itemize}
	\item For each  factor $u$ such that $\alpha(u)=B$, combine all sequences of consecutive factors of this kind into a single factor. We say that $B$ is now \textit{collected}.
	\item For each   factor $u$ such that $\alpha(u)=B$,   combine each such factor with the factor immediately to its right.  We say that $B$ is now \textit{capped}.
\end{itemize}

\end{itemize}

\noindent
Here is an example.  We begin with an $a$-word and its initial factorization:
\begin{quote}
$adccdcc\cdot adc\cdot a\cdot a\cdot a\cdot addccdcccdbcdc\cdot a\cdot ac\cdot abcbbd$
\end{quote}

We use the ordering in the example above. We start with $B=\{a\}$ and collect $B$:

\begin{quote}
$adccdcc\cdot adc\cdot {\rm aaa}\cdot addccdcccdbcdc\cdot {\rm a}\cdot ac\cdot abcbbd$
\end{quote}
then cap it:

\begin{quote}
$adccdcc\cdot adc\cdot {\rm aaaaddccdcccdbcdc}\cdot {\rm aac}\cdot abcbbd$\end{quote}

We choose $B=\{a,b\}$.  There is nothing to do here, because no factor contains just $a$ and $b$. $B=\{a,c\}$ is already collected, because there is no pair of consecutive factors with this alphabet, so we cap it:

\begin{quote}
$adccdcc\cdot adc\cdot {\rm aaaaddccdcccdbcdc}\cdot {\rm aacabcbbd}$\end{quote}

The next subalphabet in order that occurs as a factor is $\{a,c,d\}$.  We collect:

\begin{quote}
${\rm adccdccadc}\cdot {\rm aaaaddccdcccdbcdc}\cdot {\rm aacabcbbd}$
\end{quote}
then cap:
\begin{quote}
${\rm adccdccaddadaaaaddccdcccdbcdc\cdot acabcbbd}$
\end{quote}

Let us make a few general observations about this algorithm:  Every proper subset of $A$ containing $a$ that occurs as the alphabet of a factor will eventually be capped, because the rightmost factor $au_k$ of the initial factorization contains all the letters of $A$.  Once $B$ has been collected, there is no pair of consecutive factors with content $B$. Once $B$ has been capped, there are no more factors with content $B$ nor with strictly smaller content. Thus at the end of the process, every factor contains all the letters of $A$.

Note as well that immediately after a subalphabet $B$ is collected to create a (possibly) larger factor $u$, both the factor immediately to the right of $u$ and immediately to the left of $u$ must contain a letter that is not in $u$.


\subsection{Starts and jumps}

We establish below several model-theoretic properties  of the factorization schemes produced by the above algorithm.

\begin{lem}\label{lem.factschemes}
\begin{enumerate}[(a)]
\item\label{item.start}
There is a formula \emph{start} in $\fobet$ such that for all $a$-words $w$,
$(w,i)\models start(x)$ if and only if $i$ is the first position in a factor of $w$.

\item\label{item.next}
Let $\phi_1(x)$ be a formula in $\fobet$.
Then there is a formula \emph{next} in $\fobet$ with the following property: Let $w$ be an $a$-word and let $i$ be the first position in some factor of $w$  that is not the rightmost factor. Then $(w,i)\models\phi_1(x)$ if and only if $(w,i_{succ})\models next(x)$, where $i_{succ}$ is the first position in the \textit{next} factor of $w$.
We also define the analogous property, with `rightmost' replaced by `leftmost',
\emph{next} by \emph{previous}, and $i_{succ}$ by $i_{pred}$.
\end{enumerate}
\end{lem}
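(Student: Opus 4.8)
The plan is to prove both parts simultaneously by induction on the position of the factorization scheme in the sequence produced by the algorithm, that is, on the number of collect/cap sub-steps performed. Alongside (a) and (b) I will carry an auxiliary invariant: for every subalphabet $C$ with $a\in C$ there is a $\fobet$ formula $\mathit{content}_C(x)$ that holds at a factor start $x$ exactly when the current-scheme factor beginning at $x$ has content $C$. Since every factor of every scheme begins with the letter $a$, the factor starts of any scheme always form a subset of the $a$-positions, and this is what will keep the required predicates inside $\fobet$.

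For the base case (the initial factorization $au_1\cdots au_k$, whose factors are the maximal blocks beginning at an $a$ and containing no further $a$) I take $\mathit{start}(x):=a(x)$. The next factor start after $x$ is just the next $a$, located by the letter-betweenness predicate through $a(y)\wedge y>x\wedge\neg a(x,y)$; from this I read off $\mathit{content}_C$ (testing $c(x,y)$ for each $c\neq a$), and I obtain the $\mathit{next}$/$\mathit{previous}$ formulas by jumping to the adjacent $a$ and re-evaluating $\phi_1$ there, using the fact that in two-variable logic $\phi_1$ may be copied with the roles of $x$ and $y$ interchanged. The rightmost and leftmost factors are harmless boundary cases handled with $\mathit{first}/\mathit{last}$ formulas.

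For the inductive step I treat the two sub-steps separately. Capping $B$ only ever merges a content-$B$ factor with the single factor to its right, so it is the easy case: a scheme-$t$ start $x$ survives iff its left neighbour is not a content-$B$ factor, which is $\neg\,\mathit{prev}^{(t)}_{\mathit{content}_B}(x)$ by the induction hypothesis for (b); the content of a surviving $B$-factor becomes $B$ together with the content of its (unique) right neighbour, supplied by $\mathit{next}^{(t)}_{\mathit{content}_C}(x)$; and because each merge involves a bounded number of old factors, the new $\mathit{next}$/$\mathit{previous}$ operators are obtained by composing (nesting) the level-$t$ operators at most twice. All of these are finite Boolean combinations of $\fobet$ formulas.

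Collecting $B$ is the main obstacle, and is the reason the factorization was designed as it is. Here a single new factor is a maximal run of consecutive content-$B$ factors, and such a run can be arbitrarily long; a scheme-$t$ start survives unless it begins a content-$B$ factor whose left neighbour is also content-$B$. Since there is no ``betweenness of factor starts'' predicate, I cannot directly assert ``no surviving start lies strictly between $x$ and $y$'' with only two variables. The resolution uses the property guaranteed by the linear order on subalphabets: by the time $B$ is collected, every subalphabet strictly contained in $B$ has already been capped, and contents never shrink under merging, so no factor of content $\subsetneq B$ survives. Consequently a factor whose letters all lie in $B$ must have content exactly $B$, and a maximal $B$-run is precisely a maximal stretch all of whose intervening letters lie in $B$ --- a condition expressible by letter-betweenness as $\bigwedge_{c\notin B}\neg c(x,y)$ (factor predicates are not even needed). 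Using this, from a new factor start $y$ I can place the single auxiliary variable onto the start $x$ of the $B$-run immediately to its left, characterised by $\mathit{start}^{(t)}(x)$, $x<y$, $\bigwedge_{c\notin B}\neg c(x,y)$, and $\neg\,\mathit{prev}^{(t)}_{\mathit{content}_B}(x)$, and then test $\phi_1$ there. Splitting on whether the scheme-$t$ factor immediately before $y$ has content $B$ (detected by $\mathit{prev}^{(t)}_{\mathit{content}_B}$) selects between this case and the trivial one, in which the previous new factor is a single old factor and the answer is $\mathit{next}^{(t)}_{\phi_1}(y)$; this yields $\mathit{next}^{(t+1)}$, and $\mathit{previous}^{(t+1)}$ together with the updated start and content predicates are obtained symmetrically. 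I expect the only delicate remaining points, once this central idea is in place, to be verifying uniqueness of the located run start and disposing of the word-boundary cases.
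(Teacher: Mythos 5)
Your proposal is correct, and it takes a genuinely different route from the paper's. The paper also inducts over the sequence of factorization schemes, but it never constructs the formulas explicitly: definability is established semantically through Ehrenfeucht--Fra\"{\i}ss\'e games. For item (a) it shows that whenever $i$ is a $\tau$-factor start and $j$ is not (both being starts for the preceding scheme $\sigma$), Player 1 wins a game with boundedly many extra rounds, so ``being a start'' is a union of $\equiv_k$-classes; for item (b) it shows that Player 1 can force the configuration $(w,i_{succ}),(v,j_{succ})$ back to $(w,i),(v,j)$, so the $\equiv_{k'}$-class of the successor-marked word determines the $\equiv_k$-class of the original. You instead build the formulas syntactically, carrying the auxiliary $\mathit{content}_C$ predicates, and your central step---locating the start of a collected $B$-run from the following factor start $y$ by $\mathit{start}^{(t)}(x)$, $x<y$, $\bigwedge_{c\notin B}\neg c(x,y)$, and ``the factor left of $x$ is not a $B$-factor''---is sound exactly because of the invariant you isolate: by the topological order, when $B$ is collected every surviving factor with content contained in $B$ has content exactly $B$, so maximal $B$-runs coincide with maximal stretches of $B$-letters, and the located witness is unique (a later candidate has a content-$B$ left neighbour, an earlier one forces a letter outside $B$ into the jumped interval). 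This is the same combinatorial fact that drives the paper's game moves (Player 2 being forced to jump a letter outside $B$), so the two arguments share their backbone while differing in machinery: yours buys explicitness---concrete formulas, with the content bookkeeping making the case analysis for collect/cap steps transparent---while the paper's buys economy, since the game formulation handles the quantification over all $\phi_1$ by merely counting rounds and never needs witness-uniqueness checks or content updates. Two minor cautions: your prev/next naming is opposite to the paper's convention (the paper's \emph{next} is evaluated at $i_{succ}$ and reads $\phi_1$ at $i$), and the transported operators of item (b) are unspecified at the leftmost and rightmost starts, so your boundary cases need an explicit disjunct for the first factor start (position $1$ in an $a$-word)---both routine, as you note.
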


\begin{proof}
We prove these properties by induction on the construction of the sequence of
factorization schemes.  That is, we prove that
they hold for the initial factorization scheme,
and that they are preserved in each sub-step of the algorithm.
For the induction, we will use Ehrenfeucht-Fra\"{\i}ss\'e games
for the logic $\fobet$
to argue for the existence of the formula $start=start_{\tau}$
(see Section~\ref{sec:efgame}).

We note that the claim in Item~\ref{item.next} implies
the condition on games (possibly with different parameters).
If for every formula $\phi_1$ there is a corresponding `successor'
formula $next$, then there is some constant $c$ such that
${\tt qd}(next)\leq c+{\tt qd}(\phi_1)$,
where {\tt qd} denotes quantifier depth.
Suppose that Player 2 wins the $(k+c)$-round game in
$(w,i_{succ}),(v,i_{succ})$. Then  $(w,i_{succ})\equiv_{k+c}(v,j_{succ})$.
Consider the formula $\phi_1$ that defines the $\equiv_k$-class of $(w,i)$.
Then $(w,i_{succ})\models next$, so $(v,j_{succ})\models next$.
Thus $(v,j)\models\phi_1$, so $(w,i)\equiv_k(v,j)$,
and Player 2 wins the $k$-round game in these words.

\bigskip

We begin with Item~\ref{item.start}:  For the initial factorization, we simply take $start(x)$ to be $a(x)$:  the factor starts are exactly the positions that contain $a$. We now assume that $\tau$ is some factorization scheme in the sequence, and that for the preceding factorization scheme $\sigma$,  the required formula, which we denote $start_{\sigma}$, exists.

To establish this formulation, let $(w,i), (w',j)$ be as described.  Since, by the inductive hypothesis, the formula  $start_{\sigma}$ for the preceding scheme $\sigma$ exists, we can treat this as if it is an atomic formula, in describing our game strategy. Observe that $i$ must also be the start of a factor of $w$ according to the previous factorization scheme  $\sigma$.  We write this as $start_{\sigma}(i)$ rather than the more verbose $(w,i)\models start_{\sigma}(x)$. If $j$ does not satisfy $start_{\sigma}(j)$,  then by induction we are done, and can take the number $k$ of rounds to be the quantifier depth of $start_{\sigma}$.  Thus $j$ is the start of a factor with respect to  the scheme $\sigma$,  not  with respect to $\tau$.
This can happen in one of two ways, depending on whether the most recent sub-step collected a subalphabet $B$, or capped a subalphabet $B$.

In the first case, we will describe a winning strategy for Player 1 in a game that lasts just a few more rounds than the game for the previous scheme.   Position $j$ was the start of a factor in the prior scheme $\sigma$, and has been collected into a larger factor that begins at position to the left of $j$. First suppose that $i$ is the start of a factor with content different from $B$.  Then this factor must contain some $c\notin B$. Player 1 then wins as follows:  He moves right  in $(w',j)$, jumping to the start $j'$ of the next factor (which must satisfy $start_{\sigma}(j')$).  In so doing, all the letters he jumps belong to $B$.  Player 2 must also jump to the right in $(w,i)$, and must also land on the start of a factor in the scheme $\sigma$; otherwise, by induction, Player 1 will win the game in the next $k$ rounds.  But to do so, Player 2 will have to jump over a position containing $c$, so she cannot legally make this move. Thus $i$ must be the start of a factor with content $B$.  In this case, Player 1 moves left in $(w',j)$  to $j''$, the start of the previous factor with respect to $\sigma$. In doing so, he jumps over letters in $B$.  Now Player 2 must also jump to the left  in $(w,i)$ to a position that was the start of a factor with respect to $\sigma$, but must jump over a letter not in $B$ to do this, so Player 1 wins again. (See Figure~\ref{fig:game1}.)

\begin{figure}
\begin{center}
\includegraphics[width=4in,clip=true]{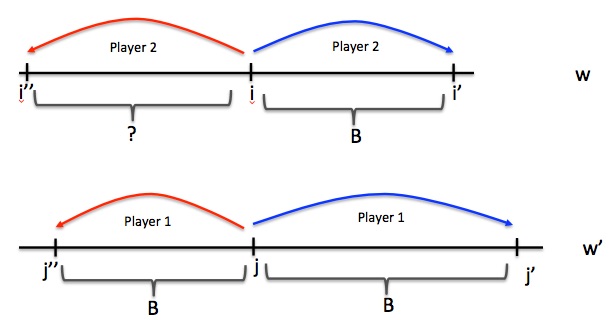}
\end{center}
\caption{Game-based proof of definability of factor starts. The figure shows the two words just after the step collecting the subalphabet $B$.  We suppose $i,j$ are factor starts for the preceding factorization scheme $\sigma$, and that $i$, but not $j$, is a factor start for the present scheme $\tau$. This means that the factor with respect to $\sigma$ beginning at $j$ was joined to the previous factor as a result of the collection. If Player 1 moves to the start $j'$ of the next factor of $w'$ with respect to $\sigma$ (blue arrow), then he jumps over precisely the letters of $B$. Thus for Player 2 to have a response, $i$ must be the start of a factor with alphabet $B$. But this means that the factor with respect to $\sigma$ in $w$ that precedes $i$ must contain a letter not in $B$.  As a result, Player 2 cannot reply to a  move by Player 1 to the start $j''$ of the factor with respect to $\sigma$ that precedes $j$ (red arrow).}%
\label{fig:game1}
\end{figure}

In the second case, where $B$ was capped, $j$ was the start of a factor that immediately followed a newly-collected factor with content $B$.  Player 1 jumps left to $j'$, the start position of this factor, and in doing so jumps over a segment with  content $ B$.  Thus Player 2 must jump to the start of a factor with respect to $\sigma$. For this to be a legal move, the segment she jumps must have content $B$. However, this is impossible, for any factor with this content in the scheme $\sigma$ would have been capped by the following factor, so that $i$ cannot be the start of a factor for $\tau$. (Figure~\ref{fig:game2}.)

\begin{figure}
\begin{center}
\includegraphics[width=4in,clip=true]{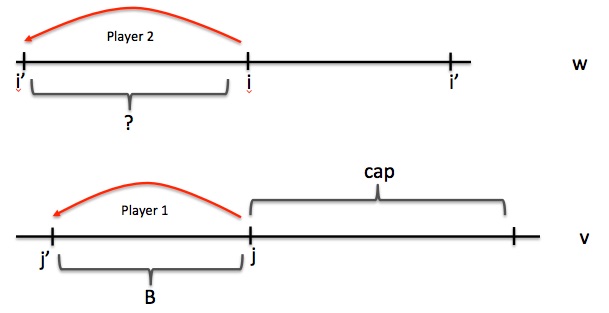}
\end{center}
\caption{This shows the case just after the step that caps the subalphabet $B$.  Again suppose $i,j$ are factor starts for the preceding factorization scheme $\sigma$, and that $i$, but not $j$, is a factor start for the present scheme $\tau$. If Player 1 moves in $v$ from $j$ to $j'$, the start of the factor preceding $j$ with respect to $\sigma$, then only letters in $B$ are jumped. If Player 2 moves left from $i$ to another factor start with respect to $\sigma$, she will have to jump over letters that are not in $B$, because all factors with alphabet $B$ have been capped; thus Player 2 cannot respond to this move.}%
\label{fig:game2}
\end{figure}

Now for Item~\ref{item.next}.  Again, we use a game argument.  We claim it will be enough to establish the following for sufficiently large values of $k$:  Let $(w,i),(v,j)$ be marked words, where $i,j$ are the starts of factors, and let $(w,i_{succ}),(v,i_{succ})$ be the same words, where the indices $i_{succ},j_{succ}$ mark the start of the successor factors. If Player 1 has a winning strategy in the $k$-round game in $(w,i),(v,j)$, then he has a winning strategy in the $k'$-round game in $(w,i_{succ}),(v,i_{succ})$ for some $k'$ that depends only on $k$ and the alphabet size, and not on $v$ and $w$.  Equivalently, if Player 2 wins in $(w,i_{succ}),(v,i_{succ})$ then she wins in $(w,i),(v,j)$.   Of course, there is the analogous formulation for \emph{previous}.

\bigskip

So we will suppose Player 1 has a winning strategy in the $k$-round game in $(w,i),(v,j)$, where $k$ is at least as large as the quantifier depth of $start_{\tau}$.  We will prove the existence of a strategy in $(w,i_{succ}),(v,j_{succ})$  for the $k'$-round game, where $k'$ is larger than $k$.  (By tracing through the various cases of the proof carefully, you can figure out how large $k'$ needs to be.)  What we will show in fact is that for each $\tau$, Player 1 can force the starting configuration $(w,i_{succ}),(v,j_{succ})$ to the configuration $(w,i),(v,j)$, and from there apply his winning strategy in $(w,i),(v,j)$.

 The base step is where $\tau$ is the initial factorization scheme.  Here the factor starts are just the positions where the letter $a$ occurs. Player 1 begins by jumping from $i_{succ}$ to $i$.  For Player 2 to respond correctly, she must jump from $j_{succ}$ to $j$, because she is required to move left and  land on a position containing $a$ while jumping over a segment that does not contain the letter $a$.

 So now we will suppose that $\tau$ is not the initial factorization scheme. We again denote the previous factorization scheme by $\sigma$.  We assume that the property in Item~(\ref{item.next}) holds for $\sigma$. Thanks to what we proved above, we know that the property in Item~(\ref{item.start}) holds for both $\tau$ and $\sigma$.   This means that we can treat $start_{\tau}$ and $start_{\sigma}$ essentially as atomic formulas.

If $i_{succ}$ is also the successor of $i$ (that is, the start of the next factor) with respect to the previous factorization scheme $\sigma$, and $j_{succ}$ is the successor of $j$, then we have the desired result by induction.  Thus we may suppose that one or both of the factor starts, either between $i$ and $i_{succ}$ or between $j$ and $j_{succ}$, or both, were eliminated in the most recent sub-step of the algorithm.


Let us suppose first that the most recent sub-step was a collection step, collecting the subalphabet $B$.   Player 1 jumps from $i_{succ}$ left to $i$.  The set of jumped letters is $B$. Player 2 must respond by jumping to some $j'<j_{succ}$ where $j'$ satisfies $start_{\tau}$.  If $j'<j$, then the set of jumped letters necessarily contains a letter not in $B$, so such a move is not legal.  Thus $j'=j$. Player 1 now follows his winning strategy in $(w,i),(v,j)$.  The identical strategy works for the predecessor version, because any factor following the sequence of collected factors must contain a letter not in $B$.

So suppose that the most recent sub-step was a capping step, and that the subalphabet $B$ was capped.  We may suppose that there is some $i'$ with $i<i'<i_{succ}$ such that $start_{\sigma}(i')$, but  not $start_{\tau}(i')$. Thus the interval from $i$ to $i'-1$ has content $B$ and constitutes a factor that was collected during the prior sub-step, before being capped in the present one. Player 1 uses his strategy from the previous factorization scheme to force the configuration to $(w,i'),(v,j')$, where $j'$ is the start of the factor preceding $j$ in the scheme $\sigma$.  Observe that we must have that $j'$ does not satisfy $start_{\tau}$ because $i'$ does not satisfy $start_{\tau}$.  Thus $j<j'<j_{succ}$, so the interval from $j$ to $j'-1$ is also a factor with content $B$ that was collected during the previous substep.  Player 1 now moves from $i'$ left to $i$.  Player 2 must respond with a move to $j''\leq j$ such that $start_{\tau}(j'')$ holds.  We cannot have $j''<j$, for then the set of jumped letters would include a letter not in $B$.  Thus $j''=j$, and the game is now in the configuration $(w,i),(v,j)$.

The strategy for a capped step in the predecessor game uses the same idea:  We may assume there is some $i'$ with $i_{prec}<i'<i$ such that the interval from $i_{prec}$ to $i'-1$  has content $B$ and constitutes a factor that was collected during the prior sub-step, before being capped in the present one.  Thus in the previous scheme $\sigma$, $i'$ was the successor position of $i_{prec}$.  Player 1 uses his strategy from the previous scheme to force the game to the configuration $(w,i'),(v,j')$, where $j'$ is the successor of $j_{prec}$ in the scheme $\sigma$.  We must have the set of jumped letters to be $B$ in each case, so the intervals from $i'$ to $i-1$ and $j'$ to $j-1$ are the caps applied in the scheme $\tau$, and thus $i$ is the successor of $i'$, and $j$ the successor of $j'$, in the scheme $\sigma$.  Player 1 now uses his strategy for the scheme $\sigma$ to force the game from the configuration $(w,i'),(v,j')$ to $(w,i),(v,j)$.
\end{proof}


\subsection{Simulating factorization in logic}\label{sec:sim}

A factorization scheme $\sigma$ gives a factorization $\sigma(w)=(w_1,\ldots,w_k)$ of an $a$-word $w$.  This in turn gives a word $\sigma_h(w)=m_1\cdots m_k\in M^+$.  We say that $\sigma$ \textit{admits simulations} if the following properties hold.

\begin{itemize}

\item For each sentence $\psi\in \fotwosuc$ over the alphabet $M$, there exists a sentence $\phi\in \fobet$ over the alphabet $A$ with the following property.  Let $w$ be an $a$-word.
\[w\models \phi\quad{\tt iff}\quad {\sigma}_h(w)\models\psi.\]

\item For each formula $\psi(x)\in \fotwosuc$ with one free variable over the alphabet $M$, there exists a formula $\phi(x)\in \fobet$ with one free variable over the alphabet $A$ with the following property. Let $w$ be an $a$-word, $1\leq i\leq k$ and let $j_i$ be the position within $w$ of the first letter of $w_i$ in $\sigma(w)$.  Then

\[(w,j_i)\models \phi(x)\quad{\tt iff}\quad ({\sigma}_h(w),i)\models\psi(x).\]

\end{itemize}

\begin{lem}[Simulation]%
\label{lem.simulations} Each factorization scheme in our sequence admits simulations.
\end{lem}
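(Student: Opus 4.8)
The plan is to prove the lemma by induction on the position of the scheme $\sigma$ in the factorization sequence, reducing the simulation of an arbitrary \fotwosuc\ formula over $M$ to the single hard case of the atomic formulas $m(x)$. For this reduction I would first invoke the equivalence of \fotwosuc\ with the temporal logic \ltlunsuc\ (the successor analogue of the $\utl$ equivalence cited above, from~\cite{EVW,TW}), so that simulating a formula amounts to translating temporal modalities over $M$ into \fotwobet\ over $A$, with the positions of $\sigma_h(w)$ represented by the factor-start positions of $w$. The modalities are handled as follows: a proposition $m$ becomes a \emph{content predicate} $\mathit{content}^{\sigma}_m(x)$ (deferred below); $\fut$ and $\past$ become quantifications over later or earlier factor starts, using the formula $start$ of Lemma~\ref{lem.factschemes}(a) to relativize to factor starts, e.g. $\exists y(x<y\wedge start(y)\wedge\cdots)$; and $\nextt$ and $\prev$ become exactly the formulas $next$ and $previous$ of Lemma~\ref{lem.factschemes}(b), which transfer a unary property from a factor start to the adjacent one. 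Boolean connectives are transparent, and a sentence is simulated by evaluating the translated formula at the first factor start, i.e. at position $1$, which carries the leading letter $a$. This leaves only the content predicates to construct, once for the base scheme and once per sub-step.

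For the base case, where $\sigma$ is the initial factorization $au_1\cdots au_k$, each factor is $au_i$ with $\alpha(u_i)\subsetneq A$ and $a\notin\alpha(u_i)$. Writing $h(au_i)=h(a)\,h(u_i)$, it suffices to define, for each $m'\in M$, a formula asserting that the maximal $a$-free block $u_i$ immediately to the right of the marked position has image $m'$. Since $|\alpha(u_i)|<|A|$ and $\meda$ is a variety (so the hypothesis passes to the submonoid $h(\alpha(u_i)^*)$), the main induction hypothesis on alphabet size yields a \fotwobet\ sentence over $\alpha(u_i)$ defining $\{v:h(v)=m'\}$, which I would relativize to that block. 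This relativization is exactly where the betweenness predicates are essential: two positions lie in the same $a$-free block precisely when no $a$ occurs strictly between them, which is the atomic assertion $\neg a(y,y')$, so every quantifier of the $\alpha(u_i)$-formula can be confined to the block by conjoining such constraints. Taking the disjunction over the finitely many $m'$ with $h(a)m'=m$ yields $\mathit{content}^{\sigma}_m(x)$.

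For the inductive step, suppose the preceding scheme $\sigma$ admits simulations; I must build the content predicates for the next scheme $\tau$, obtained from $\sigma$ by a collection or a capping step on a subalphabet $B$. Because $\sigma$ admits simulations, \emph{any} \fotwosuc\ property of $\sigma_h(w)$ pulls back to \fotwobet\ over $A$, so it is enough to express, at the $\sigma$-position that begins a $\tau$-factor, the statement ``the product of the images of the $\sigma$-factors composing this $\tau$-factor equals $m$'' in \fotwosuc\ over $M$. The capping case is immediate: a capped $\tau$-factor is a single content-$B$ $\sigma$-factor concatenated with its right neighbour, so its image is a product of just two $\sigma$-images, expressible as a finite disjunction over pairs $(m_1,m_2)$ with $m_1m_2=m$ using the $\sigma$-content predicate at the current factor start and, via $previous$, at the next one. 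The collection case is the crux, since a collected $\tau$-factor is a \emph{maximal run} of content-$B$ $\sigma$-factors, and its image is a product of unboundedly many $\sigma$-images.

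The hard part, then, is to show that the product of such a run is a \fotwosuc-expressible function of the run, and this is the only place where the standing hypothesis $eM_ee\in\da$ is used. My route would be algebraic: fix an idempotent $e=h(t)^{\omega}$ whose preimage word $t$ has content exactly $B$; since every letter of $B$ occurs in $t^{\omega}$, each such letter $b$ satisfies $e\leq_{\cJ}h(b)$ and hence lies in $M_e$, so the image of every content-$B$ factor lies in $M_e$. Splitting off the first and last factor of the run as bounded corrections and pre- and post-multiplying the remaining product by $e$ lands it in $eM_ee$, which is in \da\ by hypothesis; by the combinatorial/logical characterization of \da\ (Th\'erien--Wilke, as refined in this section) products in \da\ are determined by $\fotwoless$-expressible data about the sequence, so the run product is definable in $\fotwoless\subseteq\fotwosuc$ over $M$, and pulling it back through the simulation for $\sigma$ gives $\mathit{content}^{\tau}_m(x)$. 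I expect the genuine obstacles to be precisely (i) choosing $e$ and verifying that the run product factors through $eM_ee$ after the boundary factors are removed, and (ii) identifying the maximal content-$B$ run inside $\sigma_h(w)$, which forces me to carry along, beside the image predicates, predicates for the \emph{content} (alphabet) of each $\sigma$-factor; these are supplied by the same game-based induction that established Lemma~\ref{lem.factschemes}.
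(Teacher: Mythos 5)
Your overall skeleton matches the paper's proof: reduce an arbitrary \fotwosuc\ formula over $M$ to the atomic content predicates $m(x)$ (the paper does this via the normal forms of~\cite{EVW} rather than via temporal logic, but both routes use $start$ and the $next$/$previous$ formulas of Lemma~\ref{lem.factschemes}), then build the content predicates by induction along the factorization sequence; your base case and capping case are the paper's. The genuine gap is the collection case, precisely the step you call the crux. Your algebraic route fails as stated: from the run product $x=h(u_1\cdots u_k)$ you pass to $exe$ by ``pre- and post-multiplying by $e$'', but the map $x\mapsto exe$ on $M_e$ is neither injective nor a homomorphism ($exe\cdot eye=exeye$, which is not $exye$ in general), so knowing $exe$ --- even after splitting off the first and last factors as bounded corrections --- does not determine $x$; and there is no occurrence of $e$ inside the actual product that would let it factor through $eM_ee$. (In the necessity proof of Section~\ref{sec:licsmain} the idempotent genuinely appears because the words $X_{S,T}$ are \emph{constructed} to contain long blocks mapping to $e$; an arbitrary run of content-$B$ factors has no such structure.) A symptom of the same problem is your conclusion that the run product is $\fotwoless$-definable: a \da-based, successor-free logic cannot suffice here, which is exactly why the paper needs the semigroup variety $\dad$ and \fotwosuc\ where algebra is really used.

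Moreover, this step needs no algebra at all, and in the paper the hypothesis $eM_ee\in\da$ is never invoked inside the Simulation Lemma. A collected $\tau$-factor is itself a word whose content is the \emph{proper} subalphabet $B$, so the outer induction on alphabet size --- the same one you use for the initial factorization --- already supplies a \fotwobet\ sentence over $B$ defining $\{v\in B^*: h(v)=m\}$; the paper then relativizes it to the collected factor using $\boxm{B}(x,y)$ together with $start_{\tau}$, exactly as in the base case. The algebraic hypothesis is reserved for Section~\ref{sec:charzn}, where the \emph{final} scheme's factors have full content $A$ and no smaller-alphabet induction is available: there the subsemigroup $S$ generated by the factor images satisfies $eSe\subseteq eM_ee\in\da$, and Th\'erien--Wilke's characterization of \fotwosuc\ (i.e.\ of $\dad$) produces the sentence over $M$ that the Simulation Lemma pulls back. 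If you insist on treating runs algebraically, the repairable version of your idea is to replay that argument for the subsemigroup $T$ generated by the content-$B$ images (its idempotents $e$ satisfy $eTe\subseteq eM_ee$, giving $T\in\dad$ and hence \fotwosuc-definability of the run product), not to apply the \da\ identity to $eM_ee$ after forcibly multiplying by $e$; and even then you would still owe an argument for how the pulled-back sentence gets relativized to the positions of one particular run, which your proposal leaves unaddressed and which the paper's in-place relativization handles for free.
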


It is useful to have abbreviations for
commonly used subformulas of $\fobet$.  If $B$ is a subalphabet of $A$,
we write $\boxm{B}(x,y)$ to mean the conjunction of $\neg c(x,y)$ over
all $c\notin B$; in other words, `every letter between $x$ and $y$ belongs to $B$'.
$\boxm{a}(x,y)$ is always true if $y\leq x$ because
$b(x,y)$ is false for every $b\in A$ whenever $y\leq x$.
We denote by $\exact{B}(x,y)$ the conjunction of $\boxm{B}(x,y)$
together with the conjunction of $b(x,y)$ over all $b\in B$;
in other words, $B$ is exactly the set of letters between $x$ and $y$.

\begin{proof} The first claim in the Theorem follows easily from the second.  So we will begin with the formula $\psi(x)\in \fotwosuc$ over $M$ and and show how to produce $\phi(x)$.  We prove this by induction on the construction of formula $\psi$.  So the base case is where $\psi(x)$ is an atomic formula $m(x)$, where $m\in M$. This means that for each factorization scheme $\sigma$, we have to produce a formula $\phi_{m,\sigma}(x)$ such that for an $a$-word $w$,
$(w,i)\models\phi_{m,\sigma}(x)$ if and only if the factor starting at $i$ maps to $m$ under $h$.

We do this by induction on the sequence of factorization schemes.  In the initial factorization, every factor is of the form $au$, where $a\notin \alpha(u)$.  This factor maps to $m$ if and only if $h(u)=m'$ for some $m'\in M$ satisfying $h(a)\cdot m'=m$.  Since we suppose the main theorem holds for every alphabet strictly smaller than $A$, there is a sentence $\rho\in \fobet$ such that $u\models \rho$ if and only if $h(u)=m'$ where  $h(a)\cdot m'=m$.  We now relativize $\rho$ to obtain a formula $\rho'$ with one free variable that is satisfied by $(w,i)$ if and only if the factor of $w$ starting at $i$ has the form $au$, where $u\models\rho$.  To do this, we do a standard relativization trick, working from the outermost quantifier of $\rho$ inward.
We can assume that all the quantifiers at the outermost level quantify the variable $y$.  We replace each of these quantified formulas $\exists y\eta(y)$ by $\exists y(y>x\wedge \neg a(x,y)\wedge\eta(y))$.  Similarly, as we work inward, we rewrite each occurrence of $\exists z'(z'>z\wedge\eta)$ and $\exists z'(z'<z\wedge\eta)$, where $\{z,z'\}=\{x,y\}$, by adding the clause $\neg a(z,z')$ or $\neg a(z',z)$.  In essence, each time we jump left or right to a new position, we check that in so doing we did not jump over any occurrence of $a$, and thus remain inside the factor.

We now assume that $\tau$ is not the initial factorization scheme,
and that the formula $\phi_{m,\sigma}(x)$ exists for the preceding
factorization scheme $\sigma$.
We first consider the case where $\tau$ was produced during a step that
collected a subalphabet $B$.
Observe that we can determine within a formula whether $i$ is
the start of a factor that was produced during this collection step,
with the criterion
\[\exists y(x<y\wedge start_{\tau}(y)\wedge \exact{B}(x,y)).\]
(This includes the case where the collection is trivial
because there is only one factor to collect.)
If this condition does not hold,
then we can test whether the factor maps to $m$ with the formula produced
during the preceding step.
So we suppose that $i$ is the first position of one of the
new `collected' factors.
Since $B\subsetneq A$, there is a sentence $\rho$  of $\fobet$
satisfied by exactly the words over this smaller alphabet that map to $m$.
Once again, we must relativize $\rho$ to make sure that whenever
we introduce a new quantifier $\exists x(y>x\wedge\cdots)$ or
$\exists x(y<x\wedge\cdots)$ we do not jump to a position outside the factor.
To do this, we can replace $\exists x(y>x\wedge\cdots)$ by
\[\exists x(y>x\wedge\boxm{B}(x,y)\wedge\exists x(y<x\wedge start_{\tau}(x)\wedge\boxm{B}(x,y))).\]
In other words, we did not jump over any letter not in $\{a\}\cup B$,
and there is a factor start farther to the right that we can reach without
jumping over any letter not in $B$.
We replace $\exists x(y<x\wedge\cdots)$ by
\[\exists x(y<x\wedge\boxm{B}(y,x)\wedge\exists x(x\leq y\wedge start_{\tau}(x)\wedge\boxm{B}(x,y))),\]
using essentially the same idea.

Now suppose that $\tau$ was produced during a step that
capped the subalphabet $B$.
Again, we can write a formula that says that $i$ is the start of
a new factor produced in this process:
it is exactly the formula that said $i$ was the start of a factor that
collected $B$ in the preceding scheme $\sigma$.
So we only need to produce a formula that says the factor of $w$
beginning at $i$ maps to $m$ under the assumption that this is
one of the new `capped' factors.
Our factor has the form $u_1u_2$, where $u_2$ is the cap
and $u_1$ is the factor in which $B$ was collected.
We consider all pairs $m_1,m_2$ such that $m_1\cdot m_2=m$.
We know that there are formulas $\rho_1(x)$ and $\rho_2(x)$ telling us
that the factors  in the preceding scheme $\sigma$ map to $m_1$ and $m_2$.
We use the same formula $\rho_1(x)$, and take its conjunction with $next(x)$,
the successor formula derived from $\rho_2(x)$ by means of
Item~(\ref{item.next}) in Lemma~\ref{lem.factschemes}.
We are using the fact that the start of $u_2$ is the successor of
the start of $u_1$ under the preceding scheme $\sigma$.

We are almost done
(and we no longer need to induct on the sequence of factorization schemes)
because \fotwosuc formulas can be reduced to a few normal forms~\cite{EVW}.
Let us first suppose that our formula $\psi$ has the form
$\exists x(\suc(x,y)\wedge \kappa(x))$.
The inductive hypothesis is that there is a formula $\mu$ simulating $\kappa$.
Let $previous$ be the predecessor formula whose existence is given by
Item~(\ref{item.next}) of Lemma~\ref{lem.factschemes}.
We claim that $previous$ simulates $\psi$.  To see this, suppose
$w$ is an $a$-word, and $j_i$ is the position where the $i^{th}$ factor of $w$ begins.

Suppose $(w,j_i)\models previous$.
Then $(w,j_{i+1})\models\mu$.

So $(\sigma_h(w),i+1)\models\kappa$,
which gives $(\sigma_h(w),i)\models\psi$.

This implication also runs in reverse,
so we have shown that $previous$ simulates $\psi$.
Using the successor formula in place of the predecessor formula gives us
the analogous result for $\psi$ in the form $\exists x(\suc(y,x)\wedge \kappa(x))$.
\end{proof}


\subsection{Proof of sufficiency in Theorem~\ref{thm.cslmain}}\label{sec:charzn}

Again, we assume $|A|>1$ and that the theorem holds for all strictly smaller
alphabets.  Let $m\in M$, where $M$ satisfies the $\meda$ property.
	We need to show $h^{-1}(m)$ is defined by a sentence of $\fobet$.  As an overview, we will first, through a series of quite elementary steps, reduce this to the problem of showing that for each $a\in A$ and $s\in M$, the set of $a$-words mapping to $s$ is defined by a sentence of $\fobet$.  We then use Lemma~\ref{lem.simulations} on simulations, together with the algebraic characterization of $\fotwosuc$ cited earler, to find a defining sentence for the set of $a$-words that map to $s$.

First note that
$h^{-1}(m)=\bigcup_{B\subseteq A}\{w\in h^{-1}(m):\alpha(w)=B\}$.

It thus suffices to find, for each subalphabet $B$,
a sentence $\psi_B$ of $\fobet$ defining the set of words
$\{w\in h^{-1}(m):\alpha(w)=B\}$.
We then obtain a sentence for $h^{-1}(m)$ as
\[\bigvee_{B\subseteq A}(\psi_B\wedge\bigwedge_{b\in B}\exists xb(x)\wedge\bigwedge_{b\notin B}\neg\exists x b(x)).\]
Since we obtain the sentences $\psi_B$ for proper subalphabets $B$ of $A$
by the induction hypothesis, we only need to find $\psi_A$.

For each $w$ with $\alpha(w)=A$, let $last(w)$ be the last letter of $w$
to appear in a right-to-left scan of $w$.
It will be enough to find, for each $a\in A$, a sentence
$\phi_a$ of $\fobet$ defining $\{w\in h^{-1}(m):last(w)=a\}$,
since we then get $\psi_A$ as
\[\exists y(a(y)\wedge\forall x(x>y\rightarrow\neg a(x))\wedge\bigwedge_{b\neq a}\exists x(x>y\wedge b(x)))\wedge\phi_a.\]

A word $w$ with $\alpha(w)=A$ and $last(w)=a$ has a unique factorization
$w=uv$, where $\alpha(u)=A\backslash\{a\}$, and $v$ is an $a$-word.
We consider all factorizations $m=m_1m_2$ in $M$.
By the inductive hypothesis, there is a sentence $\mu$ of $\fobet$
defining the set of all words over $A\backslash\{a\}$ that map to $m_1$.
Suppose that we are able to find a sentence $\nu$ defining the set of all
$a$-words mapping to $m_2$.
We can then use a simple relativizing trick to obtain a sentence defining
all concatenations $uv$ such that $u\models\mu$ and $v\models\nu$.
One simply modifies each quantified subformula $\exists x\zeta$ of $\mu$ and
$\nu$, starting from the outside, changing them to
\[\exists x(\neg\exists y(y\leq x\wedge a(y)))~\mbox{and}~
\exists x(\exists y(y\leq x\wedge a(y))).\]
The conjunction of the two modified sentences now says that $\mu$ holds
in the factor preceding the first occurrence of $a$, and
$\nu$ holds in the factor that begins at the first occurrence of $a$.
Take the disjunction of these conjunctions over all factorizations $m_1m_2$
of $m$ to obtain $\phi_a$.

It remains to show how to construct a sentence that defines the set of
$a$-words that map to a given element $s$ of $M$.
Let $w\in A^*$ be an $a$-word. Let $\sigma$ be the final factorization scheme
in our sequence, so that
\[\sigma(w)=(w_1,\ldots,w_k),~~~ \sigma_h(w)=m_1\cdots m_k\in M^+.\]
In fact, each $w_i$ can be mapped to the subalphabet
\[N=\{h(v)\in M:\alpha(v)=A,v\in aA^*\},\]
so we can restrict to this subalphabet $N$ of $M$.

The map $n\mapsto n$ extends to a homomorphism from $N^+$
into the subsemigroup $S$ of $M$ generated by the elements of $N$.
Since the generators of $S$ are images of words $v$ with $\alpha(v)=A$,
we have $eSe \subseteq eM_{e}e$, which is in \da for every idempotent
$e\in E(S)$ by definition of $\meda$.
Thus the set of words over $N$ multiplying to $s\in S$ is defined
by a sentence $\psi$ over $N$ in $\fosucc$~\cite{TW}.
We can take the conjunction of this with a sentence that says
every letter belongs to the alphabet $N$, and thus obtain a sentence
$\psi'$ over $M$, also in $\fotwosuc$, defining this same set of words.
Thus by the Simulation Lemma~\ref{lem.simulations}, there is a sentence
$\phi$ in $\fobet$ that defines the set of $a$-words that map to $s$.
This completes the proof.


\section{Characterization of \texorpdfstring{$\fobetfac$}{FO2[<,BETFAC]}}\label{sec:cslnext}

The class of languages definable in the logic $\fobetfac$ corresponds to
a variety of finite semigroups rather than monoids.
We employ an operation
\[{\bf V}\mapsto {\bf V}*{\bf D}\]
that takes varieties of finite monoids to varieties of finite semigroups.  Normally
this is defined in terms of the semidirect product. However we have no need to introduce
this product here.  Instead we will rely on a characterization of this operation from~\cite{Str-v*d} and stated as Theorem~\ref{thm.v*d} below.  For purposes of this paper, this can
be taken as the definition of ${\bf V}*{\bf D}$.

Incidentally, the same operation underlies the characterization of $\fotwosuc$ that we have already
cited several times:  In fact, what Th\'erien and Wilke showed in~\cite{TW} is that a language $L$ is definable
in this logic if and only if $S(L)\in{\bf DA}*{\bf D}$.  They then relied on a difficult result of Almeida~\cite{Almeida} to
establish the effective characterization we used here. It is also the operation that connects the levels of the quantifier
alternation hierarchy in $FO[<]$ with the corresponding levels in $FO[<,\suc]$~\cite{Str-v*d}.

Here is our main result.

\begin{thm}[$\fobetfac$ characterizes $\medad$]\label{thm.betfac}
Let $L\subseteq A^+$.  $L$ is definable in $\fobetfac$ if and only if
$S(L)\in \medad$. Moreover, there is an effective procedure for determining
if $S(L)\in\medad$.
\end{thm}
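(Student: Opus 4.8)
The plan is to reduce the characterization of $\fobetfac$ to the already-established characterization of $\fobet$ (Theorem~\ref{thm.cslmain}), using the operation $\mathbf{V}\mapsto\mathbf{V}\stard$ as the bridge between the two levels. The whole argument rests on two observations. First, factor-betweenness over $A$ is the same thing as letter-betweenness over a \emph{sliding-window alphabet}: fix $n$ and encode a word $w=a_1\cdots a_m$ with $m\geq n$ by $\rho_n(w)=b_1\cdots b_{m-n+1}$ over $B_n=A^n$, where $b_i=a_i\cdots a_{i+n-1}$ is the length-$n$ window beginning at position $i$. A length-$\ell$ factor $u$ (with $\ell\leq n$) occurs strictly between positions $x$ and $y$ of $w$ exactly when some window $b_j$ with $j$ in a range determined by $x,y$ (up to an additive shift bounded by $n$) has $u$ as a prefix; conversely a window-letter $c_1\cdots c_n$ occurs between two window-positions precisely when the length-$n$ factor $c_1\cdots c_n$ occurs between the corresponding positions of $w$. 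Hence, over long words, the predicate $\diam{u}(x,y)$ of $\fobetfac$ and the letter-betweenness predicates of $\fobet$ over $B_n$ are mutually expressible, modulo boundary corrections of size $O(n)$. Second, the $\stard\mathbf{D}$ operation is exactly what captures this window passage algebraically: by the characterization of Theorem~\ref{thm.v*d} (from~\cite{Str-v*d}), $S(L)\in\medad$ holds iff there is an $n$ and a language $K\subseteq B_n^{\,*}$ with $M(K)\in\meda$ such that $L$ and $\rho_n^{-1}(K)$ agree on all words of length at least $n$, the finitely many shorter words being governed by the prefix/suffix part of the block congruence.

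For sufficiency, suppose $S(L)\in\medad$. By Theorem~\ref{thm.v*d} obtain $n$ and $K$ with $M(K)\in\meda$ as above. By Theorem~\ref{thm.cslmain}, $K$ is defined by a sentence $\psi$ of \fotwobet over $B_n$. I would then translate $\psi$ back to a sentence $\phi$ of \fotwobetfac over $A$ by replacing each atomic formula $b(z)$ asserting that position $z$ carries window-letter $b=c_1\cdots c_n$ with the statement that the factor $c_1\cdots c_n$ starts at $z$ (expressible with successor, which is available in $\fobetfac$), and each letter-betweenness formula $b(x,y)$ with the corresponding factor-betweenness $\diam{c_1\cdots c_n}(x',y')$, where $x',y'$ are the $O(n)$-shifted positions accounting for the window offset. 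Short words form a finite, hence $\fobetfac$-definable, set, and are added or removed as dictated by $K$. This yields a $\fobetfac$ sentence defining $L$.

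For necessity, suppose $L$ is definable in $\fobetfac$. The finitely many factor predicates $\diam{u}(x,y)$ appearing in a defining sentence all have $|u|$ bounded by some $n$, and by the first observation each is equivalent, over words of length $\geq n$, to a Boolean combination of letter-betweenness predicates over $B_n$ with bounded positional shifts. Carrying this translation through the sentence produces a \fotwobet sentence over $B_n$ defining a language $K$ with $L=\rho_n^{-1}(K)$ on long words; by Theorem~\ref{thm.cslmain}, $M(K)\in\meda$, and then Theorem~\ref{thm.v*d} gives $S(L)\in\medad$. Effectiveness follows because Theorem~\ref{thm.v*d} reduces membership of $S(L)$ in $\medad$ to testing membership in $\meda$ of finitely many effectively computable monoids, and membership in $\meda$ is decidable: one computes the idempotents $e$ of a given monoid and checks whether each $eM_ee$ lies in $\mathbf{DA}$, which is decidable via its defining identity.

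The step I expect to be the main obstacle is making the window translation genuinely precise rather than merely morally correct. The positional bookkeeping at the two ends of the word---windows that run off the right end, prefixes and suffixes of length $<n$, and the exact additive shifts relating a position of $w$ to the window-position encoding the factor starting there---must be handled so that the equivalence of $\diam{u}(x,y)$ with the window predicates holds with no off-by-one slippage; the overlap phenomena between the sought factor $u$ and the betweenness window (already a source of case analysis in the proof of Theorem~\ref{thm.factlequivalence}) reappear here and dictate exactly which shifted window-positions to quantify over. A secondary point requiring care is the precise form of Theorem~\ref{thm.v*d}: one must confirm that its congruence is exactly the sliding-window/block congruence $\rho_n$ used above, so that the monoid $M(K)$ it produces is literally the syntactic monoid of the block-encoded language to which Theorem~\ref{thm.cslmain} applies.
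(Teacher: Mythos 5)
Your architecture is the same as the paper's: a sliding-window encoding, Theorem~\ref{thm.v*d} as the algebraic bridge for ${\bf V}*{\bf D}$, Theorem~\ref{thm.cslmain} applied at the window level, and sentence translations in both directions (the paper's Delay Proposition~\ref{simulate_betfac} and Expansion Proposition~\ref{simulate_bet}). But there is a genuine gap at exactly the step you flag as ``the main obstacle,'' and it is not positional bookkeeping. In the necessity direction you claim that $\diam{u}(x,y)$ is equivalent, over long words, to ``a Boolean combination of letter-betweenness predicates over $B_n$ with bounded positional shifts.'' A shifted betweenness predicate --- ``some window-letter of the form $\expand{u'u}$ occurs strictly between $x+|u|-1$ and $y$'' --- is not an atomic formula of \fotwobet over the window alphabet, and it cannot be produced by quantification either: with only two variables you cannot advance $x$ by a constant while retaining $y$. (This asymmetry is why the expansion/sufficiency direction is easy --- there the correction terms are purely local properties of $x$, so $\expand{v}(x,y)$ translates to $\diam{v}(x,y)$ disjoined with finitely many successor-chain conditions at $x$, each of which may scratch over $y$ --- while the delay/necessity direction is not.) The paper's resolution is a counting device: $\diam{v}(x,y)$ holds in $w$ iff the number of window-letters of the form $\expand{u'v}$ strictly between $x$ and $y$ in $w''$ \emph{exceeds} the number of such letters among the positions $x+1,\ldots,x+|v|-1$; the latter quantity is a local pattern at $x$, the former is a threshold constraint on $(x,y)$, so the translation lands in \fotwothr over the window alphabet, and one then invokes the nontrivial, game-based equivalence $\fotwothr=\fotwobet$ of Theorem~\ref{thm.invequalsthr}. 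Without this idea (or a substitute for it) your translation of the factor predicates does not typecheck as a two-variable formula, and the necessity direction does not go through as written.

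A second, smaller gap concerns effectiveness. Theorem~\ref{thm.v*d} existentially quantifies over \emph{all} window widths $k$, so it does not by itself ``reduce membership of $S(L)$ in $\medad$ to testing membership in $\meda$ of finitely many effectively computable monoids'': decidability of $\meda$ (which is indeed routine, as you say) gives no a priori bound on $k$. The paper instead invokes a separate theorem of~\cite{Str-v*d}: if a monoid variety ${\bf V}$ has decidable membership and contains the syntactic monoid of ${(ab)}^*$, then the semigroup variety ${\bf V}*{\bf D}$ has decidable membership. Some such delay-bounding result (compare~\cite{Almeida}, which Th\'erien and Wilke~\cite{TW} needed for the analogous statement about ${\bf DA}*{\bf D}$) is required to complete your effectiveness claim.
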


We can use this theorem to establish some of the inclusions and non-inclusions depicted in Figure 1.
Since $\meda$ contains $\Delta_3[<]$ in the quantifier alternation hierarchy~\cite{Weil}, $\medad$ contains $\Delta_3[<,\suc]$, which includes
the language $BB_2 = {(a{(ab)}^*b)}^+$, which we showed in Corollary~\ref{cor.strictinclusion}
below is not in $\meda$. On the other hand it does not contain
$BB_3 ={(a{(a{(ab)}^*b)}^*b)}^+$.
Consider the language $U_3$ which is a sublanguage of $A^*c{(a+b)}^*cA^*$
such that between the marked $c$'s, the factor $bb$ does not occur before
the factor $aa$. This is in $\medad$ since it is defined by
the $\Pi_2[<,\suc]$ sentence
\[\begin{array}{ll}
\forall x\forall y\forall z\forall z'( & c(x) \land c(y) \land x < z < z' < y
\land \suc(z,z') \land b(z) \land b(z') \\
 & \limplies \exists w\exists w'
(x < w < w' < z \land \suc(w,w') \land a(w) \land a(w'))).
\end{array}\]

We now proceed to the proof.  We first state the characterization of ${\bf V}*{\bf D}$ in terms of ${\bf V}$ from~\cite{Str-v*d}.
This can be stated in several different ways, but all depend on some scheme for treating words of length $k$ over $A$ as individual letters. Here is a standard version.  Let $k>0$.  Let $A$ be a finite alphabet, and let $B=A^k$.  We treat $B$ as a finite alphabet itself---to distinguish the \textit{word} $w\in A^*$ of length $k$ from the same object considered as a \textit{letter} of $B$, we write $\expand{w}$ in the latter case.  We will  define, for a word $w\in A^+$ with $|w|\geq k-1$, a new word $w'\in B^*$, where $w'$ is simply the sequence of length-$k$ factors of $w$.  So, for example, with $A=\{a,b\}$ and $k=3$, if $w=aa$, then $w'=1\in B$, while if $w=ababba$, then
\[w'=\expand{aba}\expand{bab}\expand{abb}\expand{bba}.\]
To make sure that the lengths match up,
we supplement $A$ with a new symbol $*$ and define $B'$ as ${(A\cup\{*\})}^k$,
and $w''$ as the sequence of length-$k$ factors of $*^{k-1}w$.
For example, with this new definition, if $k=3$ and $w=ababba$, then
\[w''=\expand{**a}\expand{*ab}\expand{aba}\expand{bab}\expand{abb}\expand{bba}.\]

\begin{thm}[characterization of ${\bf V}*{\bf D}$~\cite{Str-v*d}]\label{thm.v*d}
Let $h:A^+\to S$ be a homomorphism onto a finite semigroup.
$S\in {\bf V}*{\bf D}$ if and only if there exist:
an integer $k>1$, and a homomorphism $h':B^*\to M\in{\bf V}$,
where $B=A^k$, such that whenever $v,w\in A^+$ are words that have
the same prefix of length $k-1$, and the same  suffix of length $k-1$,
and $v',w'$ are the sequence of $k$-length factors of $v,w$ respectively,
with $h'(v')=h'(w')$, then $h(v)=h(w)$.
\end{thm}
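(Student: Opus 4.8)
The plan is to prove both implications directly from the definition of ${\bf V}*{\bf D}$ as the variety generated by semidirect products $M*D$ with $M\in{\bf V}$ and $D\in{\bf D}$, exploiting the structure of ${\bf D}$, the variety of \emph{definite} finite semigroups. Recall that $S\in{\bf D}$ if and only if $S$ satisfies $x\,y^{\omega}=y^{\omega}$; equivalently, each $D\in{\bf D}$ has a \emph{degree} $d$ such that every product of more than $d$ factors in $D$ is determined by its last $d$ factors. The entire argument rests on the slogan that scanning a definite semigroup from left to right remembers only a bounded \emph{suffix} of the input read so far, so that in a semidirect product the value fed into the $M$-coordinate at each step depends only on a bounded window of the word.

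For \emph{sufficiency}, I assume the integer $k$ and the morphism $h':B^{*}\to M\in{\bf V}$ with the stated factorization property, and I build an explicit definite semigroup. Let $D$ be the semigroup on $A^{\le k-1}$ (nonempty words of length at most $k-1$) with product $u\cdot v$ equal to the suffix of $uv$ of length $\min(|uv|,k-1)$; this $D$ is definite of degree $k-1$, and $\pi:A^{+}\to D$, sending $w$ to its suffix of length $\min(|w|,k-1)$, is a morphism. I then pass to the wreath product $M\wr D$, whose base $M^{D^{1}}$ lies in ${\bf V}$ (varieties are closed under products), so that $M\wr D\in{\bf V}*{\bf D}$. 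Reading a word left to right, I define a morphism $\eta:A^{+}\to M\wr D$ whose $D$-coordinate tracks $\pi$ and whose $M$-coordinate accumulates $h'(\{u\})$, where at each position $u$ is the length-$k$ factor ending there, recovered from the stored $D$-state (the previous $k-1$ letters) together with the current letter. By construction $\eta(w)$ encodes the sequence of $k$-factors of $w$ together with its prefix and suffix of length $k-1$; the factorization hypothesis says exactly that these data determine $h(w)$, so the congruence induced by $\eta$ refines $\ker h$. Hence $S\prec M\wr D$ and $S\in{\bf V}*{\bf D}$.

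For \emph{necessity}, I assume $S\in{\bf V}*{\bf D}$ and fix a division $S\prec T*D$ with $T\in{\bf V}$ and $D\in{\bf D}$ of degree $d$, setting $k=d+1$ (taking $d$ at least $1$ so that $k>1$). After reading $a_{1}\cdots a_{i-1}$, the $D$-prefix-state depends, by definiteness, only on $a_{i-d}\cdots a_{i-1}$; consequently the term contributed to the $T$-coordinate at step $i$, namely the $D$-state-twisted image of $a_{i}$, depends only on the length-$k$ factor $a_{i-d}\cdots a_{i}$ ending at position $i$. I package this dependence as a morphism $h':B^{*}\to T'$ by declaring $h'(\{u\})$ to be that windowed contribution for each $k$-factor $u$ and extending multiplicatively; the image $T'$ is the submonoid of $T$ so generated, hence $T'\in{\bf V}$. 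One then checks that if $v,w$ share their length-$(k-1)$ prefix and suffix and satisfy $h'(v')=h'(w')$, their $T*D$-coordinates coincide: matching suffixes fix the final $D$-coordinate, matching prefixes fix the left overhang of the first windows, and equality of $h'$ on the $k$-factor sequences fixes the accumulated $T$-coordinate. Therefore $h(v)=h(w)$, which is precisely the asserted condition.

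The step I expect to be the main obstacle is the boundary bookkeeping, which surfaces in both directions. At the left end the length-$k$ window overhangs the start of the word—this is exactly what the padding $*^{k-1}$ and the length-$(k-1)$ prefix condition are designed to absorb—while at the right end the final $D$-state, and hence the output side of the semidirect product, is controlled by the length-$(k-1)$ suffix. Reconciling these overhang contributions with the $*$-padded $k$-factorization in the statement, and at the same time guaranteeing that the extracted morphism $h'$ genuinely lands in ${\bf V}$ rather than merely in $T$, is the delicate part; the remaining verifications that the wreath-product multiplication computes the $T$-coordinate as the claimed ordered product of windowed terms are routine.
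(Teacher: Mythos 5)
You have reconstructed a proof of a statement the paper itself never proves: Theorem~\ref{thm.v*d} is imported from~\cite{Str-v*d}, and the authors explicitly say it ``can be taken as the definition'' of ${\bf V}*{\bf D}$. So your attempt must be judged against the classical argument, which it follows in outline. Your necessity direction is essentially correct: reduce to a single division $S\prec T\rtimes D$ (a direct product of semidirect products embeds in the semidirect product of the direct products, and both varieties are closed under products), lift $h$ through the division by choosing preimages of the $h(a)$, and take $k=d+1$ with $d\geq 1$ a definiteness degree of $D$. Then the $i$-th factor of the accumulated $T$-coordinate depends only on the length-$k$ window ending at position $i$ once $i>d$, and only on the shared length-$(k-1)$ prefix when $i\leq d$, while the $D$-coordinate is fixed by the length-$(k-1)$ suffix. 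The remaining verifications are, as you say, routine.

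The sufficiency direction, however, has a genuine gap, and it sits exactly at the point you deferred as ``boundary bookkeeping.'' Your morphism $\eta$ into $M\wr D$ records the suffix $\pi(w)$ and a function $F_w\colon D^1\to M$ whose value at the identity state is $h'(w')$; but nothing in $\eta(w)$ determines the length-$(k-1)$ \emph{prefix} of $w$, since the first $k-1$ positions contribute only identities (or values a degenerate $h'$ collapses). The factorization hypothesis applies only to pairs $v,w$ with the \emph{same} prefix, so $\ker\eta$ need not refine $\ker h$. Concretely, let $S$ be the two-element left-zero semigroup with $h(w)$ the first letter of $w$: the window condition holds with $M$ trivial (the trivial monoid lies in every ${\bf V}$, and the prefix alone determines $h$), yet your $\eta$ identifies $au$ and $bu$ for nonempty $u$, so the claimed division $S\prec M\wr D$ fails. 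The repair requires a second wreath coordinate: choose a nontrivial $N\in{\bf V}$ with some $n\neq 1$, set $M_2=N^{A^{\leq k-1}}\in{\bf V}$, and let position $i\leq k-1$ contribute the tuple equal to $n$ on the coordinate indexed by $a_1\cdots a_i$ and to $1$ elsewhere; evaluated at the empty state, the accumulated product then reveals every prefix of length at most $k-1$, hence the length-$(k-1)$ prefix, and the pair of morphisms into $(M\times M_2)^{D^1}\rtimes D$ does refine $\ker h$. This is precisely where nontriviality of ${\bf V}$ is used---and the paper remarks immediately after the theorem that the statement is \emph{false} for the trivial variety. Your argument never invokes nontriviality, which is a structural sign that, as written, it cannot be complete.
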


In brief, you can determine $h(w)$ by looking at the prefix and suffix of $w$
of length $k-1$, and checking the value of $w'$ under a homomorphism
$h'$ into an  element of ${\bf V}$.
Note that the statement is false if {\bf V} is the trivial variety
(and only in this case).

We will also need the following two propositions.

\begin{prop}[Delay]\label{simulate_betfac}
Let  $\phi$ be a sentence of $\fobetfac$.
Then there exist  $k>1$ and a sentence $\phi'$ of $\fobet$
interpreted over ${(A\cup\{*\})}^k$, with this property:
if $w\in A^+$ with $|w|\geq k-1$, then
$w\models\phi$ if and only if $w''\models\phi'$.
\end{prop}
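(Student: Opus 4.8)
The plan is to reduce the proposition to a single combinatorial statement about sliding windows and then invoke the Ehrenfeucht--Fra\"{\i}ss\'e game of Section~\ref{sec:efgame}, rather than attempting a predicate-by-predicate syntactic translation. Fix the finite set of factors $u$ that actually occur in $\phi$, let $\ell$ be the maximal length among them, and let $d$ be the quantifier depth of $\phi$. I would take the window size $k$ to be any integer with $k>\ell$ (so in particular $k>1$), form $B'={(A\cup\{*\})}^k$, and use the encoding $w\mapsto w''$ defined just before Theorem~\ref{thm.v*d}. The key structural facts are that $w''$ has the same length as $w$ and its positions are in order-preserving bijection with those of $w$, and that the $B'$-letter at a position $z$ records the $k$ letters of $*^{k-1}w$ ending there; since $k>\ell$, whether a given factor $u$ occurs \emph{ending} at $z$ is a property of the single letter $w''(z)$.

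The heart of the argument is a transfer statement, for $w_1,w_2\in A^+$ with $|w_i|\ge k-1$: if Player~2 wins the $d$-round $\fobet$-game on $(w_1'',w_2'')$ over $B'$, then she wins the $d$-round game over $A$ for the fragment of $\fobetfac$ of factor-length $\le\ell$, in which a legal reply must preserve, at each move, the set of length-$\le\ell$ factors jumped over (the analogue of Theorem~\ref{thm.game_equiv} for this fragment). I would prove this by strategy copying through the position bijection: Player~2 reads Player~1's move in $w_i$ as the same move in $w_i''$, consults her winning $B'$-strategy, and plays the returned position back in $w_i$. Legality of direction and of the destination letter is immediate, since equality of the destination $B'$-letters forces equality of their last symbols, i.e.\ of the destination letters in $w$. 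The content condition is where the work lies: I must show that equality of the \emph{set of $k$-windows} jumped over, together with the equalities of the pebbled source and destination windows that the $\fobet$-game maintains, forces equality of the \emph{set of length-$\le\ell$ factors} jumped over. This is a purely combinatorial lemma: every length-$\le\ell$ factor occurring strictly between the endpoints is a suffix of some jumped window, and conversely a jumped window of length $k>\ell$ exposes only factors genuinely occurring in that region---except for the at most $k-1$ windows lying within distance $k$ of an endpoint, which may straddle it.

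The main obstacle is exactly this boundary (or ``offset'') phenomenon, which is also why a naive syntactic translation fails: a length-$\ell$ factor strictly between $x$ and $y$ corresponds to a $B'$-letter whose occurrence must \emph{start} after $x$, hence end at least $\ell$ positions to the right of $x$, and two-variable logic cannot shift one endpoint by the fixed constant $\ell-1$ while retaining the other. In the game formulation the difficulty is localized to the finitely many windows adjacent to the pebbled endpoints, and I would resolve it using precisely the data the game preserves there: equality of the source windows pins down the $k-1$ letters immediately preceding each left endpoint, and equality of the destination windows does the same at the right, so the straddling contribution to the factor set is computed by the same function on both sides. Once this lemma is established, the transfer yields $w_1''\equiv_d w_2''$ over $B'$ $\Rightarrow$ $w_1\equiv^{\mathrm{fac}}_d w_2$ over $A$. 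Since $\phi$ is a union of $\equiv^{\mathrm{fac}}_d$-classes and there are only finitely many $\equiv_d$-classes over $B'$, each itself $\fobet$-definable, I would finally let $\phi'$ be the disjunction of the class-defining $\fobet$-sentences for those $\equiv_d$-classes over $B'$ that contain the encoding $v''$ of some model $v$ of $\phi$ with $|v|\ge k-1$; the transfer then guarantees $w''\models\phi'\iff w\models\phi$ for every $w\in A^+$ with $|w|\ge k-1$, as required.
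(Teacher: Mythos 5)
Your overall architecture (a game-theoretic transfer from the $\fobet$-game on expansions to a factor game on the original words, followed by defining $\phi'$ as a disjunction of $\equiv_d$-class-defining sentences) is genuinely different from the paper's syntactic, predicate-by-predicate translation, but it rests on a combinatorial lemma that is false, and this is a real gap. Your lemma asserts that equality of the source windows, of the destination windows, and of the \emph{set} of jumped $k$-windows forces equality of the set of length-$\le\ell$ factors occurring strictly between the endpoints. Counterexample with $A=\{a,b,c\}$, $\ell=2$, $k=3$: take $w_1=ccacabccaa$ and $w_2=ccabccacaa$, with both pebbles moving from position $3$ to position $10$. The windows of $w_1''$ at positions $3,\dots,10$ are
$\expand{cca},\expand{cac},\expand{aca},\expand{cab},\expand{abc},\expand{bcc},\expand{cca},\expand{caa}$,
and those of $w_2''$ are
$\expand{cca},\expand{cab},\expand{abc},\expand{bcc},\expand{cca},\expand{cac},\expand{aca},\expand{caa}$.
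So the source windows agree, the destination windows agree, and the jumped sets are both $\{\expand{cac},\expand{aca},\expand{cab},\expand{abc},\expand{bcc},\expand{cca}\}$; the reply $3\to 10$ is legal in the $\fobet$-game on the expansions (indeed it is the \emph{only} legal reply there). But the interior of $w_1$ is $cabcca$, which contains the factor $ab$, while the interior of $w_2$ is $bccaca$, which does not (it contains $ac$, which $w_1$'s interior lacks). So the copied reply is illegal in your factor game, and strategy copying breaks in a reachable configuration. Note this refutes your per-move lemma, not the proposition itself, nor necessarily even your transfer statement; but your proof of the transfer is exactly this per-move argument.

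The reason your proposed repair of the boundary phenomenon fails is that the source window at $j$ pins down only $w(j-k+1)\cdots w(j)$, whereas the straddling windows at positions $j+1,\dots,j+\ell-1$ involve letters \emph{after} $j$ (here $w_1(4)=c$ but $w_2(4)=b$), so the ``straddling contribution'' is not computed by the same function on both sides. Worse, set equality of jumped windows cannot detect the discrepancy, because a window ending with $u$ whose only occurrence is straddling in one word ($\expand{cab}$ at position $4$ of $w_2''$) can be matched by a genuine occurrence of the same window in the other ($\expand{cab}$ at position $6$ of $w_1''$). What is really needed is \emph{counting}: $u$ occurs strictly inside $(x,y)$ iff the number of positions strictly between $x$ and $y$ carrying a window that ends with $u$ exceeds the number of such positions among the first $|u|-1$ positions after $x$. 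This is precisely why the paper's proof translates $\diam{v}(x,y)$ into a threshold condition over the window alphabet, landing in $\fotwothr$, and then invokes Theorem~\ref{thm.invequalsthr} to return to $\fobet$. Your game route could likely be repaired by running the threshold game $\equiv^{\theta}_k$ of Section~\ref{sec:BasicProperties} (with thresholds up to $k$) on the expansions, accepting an inflated round count in the transfer --- but at that point you are essentially reconstructing the paper's argument, with the counting done in the game rather than in the formula.
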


\begin{prop}[Expansion]\label{simulate_bet}
Let $\phi'$ be a sentence of $\fobet$ interpreted over ${(A\cup\{*\})}^k$,
where $k>1$. Then there is a sentence $\phi$ of $\fobetfac$ with this property:
if $w\in A^+$ with $|w|\geq k-1$, then
$w\models\phi$ if and only if $w''\models\phi'$.
\end{prop}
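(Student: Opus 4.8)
The plan is to realize $\phi$ as the image of $\phi'$ under a translation of atomic formulas, exploiting the natural bijection between the positions of $w$ and those of $w''$. Recall that $w''$ is the sequence of length-$k$ factors of $*^{k-1}w$, so $|w''|=|w|$, and position $z$ of $w''$ carries the letter $\expand{u}$, where $u$ is the length-$k$ factor of $w$ ending at position $z$, padded on the left with copies of $*$ when $z<k$; concretely the $*$-free case $z\ge k$ gives $u=w(z-k+1)\cdots w(z)$, while for $z<k$ we get $u=*^{\,k-z}w(1)\cdots w(z)$. Since this bijection preserves the order $<$ and the quantifiers of $\fobet$ range over positions, it suffices to define, for each atomic formula of $\fobet$ over ${(A\cup\{*\})}^{k}$, an equivalent \emph{two-variable} formula of $\fobetfac$ over $A$, and then push the translation through the Boolean connectives and quantifiers unchanged; correctness then follows by a routine structural induction, with $x<y$ translating to itself. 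I assume throughout the hypothesis $|w|\ge k-1$, which guarantees that the short initial positions behave as described above.

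For a unary atom $\{u\}(x)$ there are two cases. If $u$ is $*$-free, I take the formula $\mathit{End}_u(x)$ asserting that the factor $u$ ends at $x$; this is written by walking $k-1$ predecessors from $x$ using a nested alternation of $\suc$ in the two available variables (exactly as the $\mathit{Bit}$-formulas of Section~\ref{sec:examples} address a position at fixed distance), and it is automatically false when $x<k$. If $u=*^{\,j}c_{j+1}\cdots c_k$ with $j\ge 1$, the letter $\expand{u}$ occurs only at the fixed position $k-j$ of $w''$, so I take the conjunction of ``$x$ is the $(k-j)$-th position'' (a single-variable formula) with a sentence asserting $w(1)\cdots w(k-j)=c_{j+1}\cdots c_k$. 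All of these lie in $\fotwosuc\subseteq\fobetfac$.

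The heart of the matter is the binary between atom $\{u\}(x,y)$, which asserts that some position $z$ with $x<z<y$ of $w''$ carries $\expand{u}$; I treat the $*$-free case, the starred case being analogous to the unary one. Identifying $z$ with the end of a $u$-occurrence in $w$, the condition is that some occurrence of $u$ \emph{ends} strictly inside $(x,y)$. The factor predicate $\diam{u}(x,y)$ of $\fobetfac$ expresses instead that an occurrence of $u$ lies \emph{entirely} inside $(x,y)$, i.e.\ its end lies in $(x+k-1,y)$; the two differ exactly by occurrences that straddle $x$, whose end is one of $x+1,\dots,x+k-1$. I therefore write the translation as $\diam{u}(x,y)\lor\bigvee_{t=1}^{k-1}\bigl(\mathit{End}^{\,t}_u(x)\land\mathit{Gap}_{>t}(x,y)\bigr)$, where $\mathit{End}^{\,t}_u(x)$ is the single-variable formula saying $u$ ends at $x+t$ (a bounded window around $x$, addressed by nested $\suc$ and $\prev$), and $\mathit{Gap}_{>t}(x,y)$ says $y>x+t$. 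The obstacle here is twofold: the straddling correction must be read off a bounded neighbourhood of $x$ while the surviving constraint $z<y$ refers to the other endpoint, and the distance test $y>x+t$ is \emph{not} available in plain $\fobet$. Both are resolved by the key observation that in $\fobetfac$ one can count consecutive positions: $y>x+t$ holds iff some block of $t$ consecutive positions lies strictly between $x$ and $y$, so I may set $\mathit{Gap}_{>t}(x,y)=\bigvee_{s\in A^{t}}\diam{s}(x,y)$, a genuine two-variable $\fobetfac$ formula. Thus $\mathit{End}^{\,t}_u(x)$ (with $x$ free and $y$ used only as a bound navigation variable) is conjoined with $\mathit{Gap}_{>t}(x,y)$ (with $x,y$ free) without ever needing a third variable.

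Finally I assemble $\phi$ by substituting these translations for the atoms of $\phi'$, keeping the quantifier structure and the variables $x,y$ intact; since every atom's translation uses only $x$ and $y$, with any auxiliary navigation confined to bound occurrences, the result is a legal $\fobetfac$ sentence over $A$, and $w\models\phi\iff w''\models\phi'$ follows by structural induction using the position bijection. The single genuinely delicate point, and the one I expect to require the most care, is the binary-between translation above: verifying that the straddling boundary terms are captured correctly, and that the consecutive-position trick indeed makes the distance comparison, and hence the whole formula, expressible within two variables.
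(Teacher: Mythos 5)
Your proof is correct and follows essentially the same route as the paper's: keep the quantifier structure and translate atoms, handling unary atoms by bounded successor navigation and binary atoms as $\diam{u}(x,y)$ plus correction terms for occurrences of $u$ that straddle the left endpoint $x$. You are in fact more careful than the paper's own terse argument on one point: the paper's stated equivalence for $\expand{u}(x,y)$ omits the requirement that a straddling occurrence end strictly before $y$, whereas your $\mathit{Gap}_{>t}(x,y)=\bigvee_{s\in A^{t}}\diam{s}(x,y)$ device enforces this distance constraint correctly while staying within two variables.
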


Assuming these for now, we proceed to the proof of our characterization theorem.

\begin{proof}[Proof of Characterization Theorem~\ref{thm.betfac}]
Let $L\subseteq A^+$,  and suppose that $L$ is definable by a sentence $\phi$ of  $\fobetfac$.  Let $k>1$ and $\phi'$ in $\fobet$ be as given by Proposition~\ref{simulate_betfac}. Let $L'\subseteq {({(A\cup\{*\})}^k)}^*$ be the language defined by $\phi'$.  We will show that $S(L)\in \medad$.

Let $h:A^+\to S(L)$ be the syntactic morphism of $L$.
Let $h'$ be the syntactic morphism of $L'$ and
let $h''$ be the  restriction of $h'$ to elements of ${(A^k)}^*$.
Since $\phi'$ is a sentence of $\fobet$, the syntactic monoid of $L'$,
and hence the image of $h''$, belongs to $\meda$.
It is therefore enough, in view of Theorem~\ref{thm.v*d}, to suppose that
$v,w\in A^+$ have the same prefix of length $k-1$ and the same suffix of
length $k-1$, and that $h''(v')=h''(w')$, and then conclude that $h(v)=h(w)$.
To show $h(v)=h(w)$ we must show that for any $x,y\in A^*$,
$xvy\in L$ if and only if $xwy\in L$.
Given the symmetric nature of the statement,
it is enough to show $xvy\in L$ implies $xwy\in L$.
So let $xvy\in L$.  Then $xvy\models\phi$, so $(xvy)''\models\phi'$.
We take apart $(xvy)''$: Suppose
$x=a_1\cdots a_r, v=b_1\cdots b_s, y=c_1\cdots c_t$.

The leftmost $r+k-1$ letters of $(xvy)''$ are
\[\expand{*^{k-1}a_1}\expand{*^{k-2}a_1a_2}\cdots\expand{a_r b_1\cdots b_{k-1}}.\]
The rightmost $t$ letters of $(xvy)''$ are
\[\expand{b_{s-k+2}\cdots b_s c_1}\expand{b_{s-k+3}\cdots b_s c_1 c_2}\cdots\expand{c_{t-k+1}\cdots c_t}.\]
(The exact form of the last factor will be different if $t<k-1$.)
In between these two factors, we have the $s-k+1$ letters of $v'$.  Thus
$h'((xvy)'')=m_1 h''(v')m_2$,
where $m_1,m_2$ depend only on $x,y$ and the prefix and suffix of $v$ of length at most $k-1$.  It follows that we likewise have
$h'((xwy)'')=m_1 h''(w')m_2$,
with the same $m_1,m_2$.  Since $h''(v')=h''(w')$, we conclude
$h'((xvy)'')=h'((xwy)'')$, so $(xwy)''\models\phi'$.
Thus $xwy\models\phi$, and so $xwy\in L$.
This concludes the proof that $S(L)\in\medad$.

Conversely, suppose $L\subseteq A^+$ and that $S(L)\in \medad$.
Let $h:A^+\to S(L)$ be the syntactic morphism of $L$.  Let
$h':{(A^k)}^*\to M\in\meda$
be the homomorphism given by Theorem~\ref{thm.v*d}.
We extend $h'$ to ${({(A\cup\{*\})}^k)}^*$ by defining
$h'(b)=1$ for any $b$ that contains the new symbol $*$.
Then for each $m\in M$, we have a sentence $\phi_m'$ of $\fobet$
interpreted over ${({(A\cup\{*\})}^k)}^*$ defining ${(h')}^{-1}(m)$.
Let $\phi_m$ be the sentence over $\fobetfac$ given by
Proposition~\ref{simulate_bet}.
For each $x\in A^{k-1}$, let $\text{pref}_x$ be a sentence defining
the set of strings over $A$ whose prefix of length $k-1$ is $x$,
and similarly define $\text{suff}_x$.
Observe that both of these sentences can be chosen to be in  $\fobetfac$.
In fact, these properties are definable in $\fobet$ over $A$.
It follows that the set of words in $A^+$ of length at least $k-1$ mapping to
a given value $s$ of $S(L)$ is given by a disjunction of
finitely many sentences of the form
\[\text{pref}_x\wedge\text{suff}_y\wedge\phi'_m.\]
We thus get the complete preimage $h^{-1}(s)$ by taking the disjunction
with a sentence that says the word lies in a particular finite set.
So $L$ itself is definable in $\fobetfac$.

The claim about an effective procedure for testing  definability follows directly from another result in~\cite{Str-v*d}:  If  the  monoid variety {\bf V} has a decidable membership problem, and if {\bf V} contains the
syntactic monoid of the language ${(ab)}^*$, then the semigroup variety ${\bf V}*{\bf D}$ also has a decidable membership problem.
\end{proof}

We now provide proofs of the two propositions.

\begin{proof}[Proof of Delay Proposition~\ref{simulate_betfac}]
Let $\phi$ be a sentence of $\fobetfac$.
We choose for $k$ the length of the longest word $v\in A^+$ such that  $\diam{v}(x,y)$ occurs as a subformula of $\phi$. Because $w$ and $w''$ have the same length, the sets of positions in the two words are the same.  We thus keep occurrences of quantifiers and $x<y$ in $\phi$ exactly as they are, and consider how to translate the other atomic formulas in $\phi$.  First let's see how we translate $a(x)$.  Consider, for example,

\[w=bbacbab\]
with $k=4$.  We have
\[w''=\expand{***b}\expand{**bb}\expand{*bba}\expand{bbac}\expand{bacb}\expand{acba}\expand{cbab}.\]
The positions in $w$ that contain the letter $a$ correspond to the positions in $w''$ contain a word of length $k$ over $A\cup\{*\}$ that ends in $a$.  We thus translate $a(x)$ by
\[\bigvee \expand{ua}(x),\]
where the disjunction is over all $u\in {(A\cup\{*\})}^{k-1}$.

How do we translate $\diam{v}(x,y)$?  We might be tempted at first to try something similar, writing
\[\bigvee\expand{uv}(x,y),\]
where $u$ ranges over all $u$ of length $k-|v|$.
The problem is that $\expand{uv}(x,y)$ can hold without
$\diam{v}(x,y)$ being true.
Consider again the example above, with the variable $x$ pointing to
position 3 (treating 1 as the leftmost position of the word),
and $y$ to position 7.  With these choices of $x$ and $y$,
we have $\expand{bbac}(x,y)$ satisfied by $w''$,
however $\diam{ac}(x,y)$ is not satisfied by $w'$.
Observe that saying there is an occurrence of the factor $ac$ in $w$ between
$x$ and $y$  is equivalent to saying that the number of occurrences of letters
of the form $\expand{uac}$, with $|u|=2$,  in $w''$ between $x$ and $y$ is
greater than the number of occurrences of letters of this form in positions
$x+1$ and $x+2$. Thus the condition $\diam{ac}(x,y)$ is equivalent to the conjunction of conditions like
\begin{quote}
there is an occurrence of a letter of the form $\expand{uac}$ at position $x+1$, but not at position $x+2$, and at least two occurrences of letters of the form $\expand{uac}$ between $x$ and $y$
\end{quote}
We can write that there is, for example, an occurrence of $\expand{bbac}$ at $x+1$ but no occurrence of $\expand{abac}$ at position $x+2$ as
\[\exists y(y=x+1\wedge\expand{bbac}(y)\wedge\exists x(x=y+1\wedge\neg\expand{abac}(x))).\]
We write that there is at least one occurrence of $\expand{cbac}$ and at least one occurrence of $\expand{baac}$ between $x$ and $y$.
Observe that only a finite number of such conditions will be involved, since we never have to deal with thresholds larger than 2
(or $k-|v|$ in the general case).
In this manner we translate $\diam{v}(x,y)$ into a (rather complicated) formula
of two-variable logic with thresholds. As we showed in
Theorem~\ref{thm.invequalsthr}, this is definable in $\fobet$.
\end{proof}

\begin{proof}[Proof of Expansion Proposition~\ref{simulate_bet}]
We are given a sentence $\phi'$ of $\fobet$ interpreted over ${(A\cup\{*\})}^k$, for some $k>1$.  As before, there is a one-to-one correspondence between the positions of $w''$ and those of $w$.  As above, the only issue is showing how to translate $\expand{v}(x)$ and $\expand{v}(x,y)$.

The translation of $\expand{v}(x)$ depends on whether the word $v$ contains any occurrences of the symbol $*$.  For example, with $k=3$ again, and $v=bac$, the translation is
\[c(x)\wedge\exists y(x=y+1\wedge a(y)\wedge\exists x(y=x+1\wedge b(x))),\]
while with $v=*ac$ we get
\[c(x)\wedge\exists y(x=y+1\wedge a(y)\wedge\neg\exists x(y=x+1)).\]
To translate the $\expand{v}(x,y)$, we use the observation above, but here it is much simpler to apply:
$\expand{v}(x,y)$ holds in $w''$ if and only if either $\diam{v}(x,y)$ holds in $w$,
or $\expand{v}$ is the letter occurring at one of the positions $x+1,\ldots x+|v|-1$,
and we have already seen how to translate this latter condition.
\end{proof}

Finally, we note that the essentially the same argument gives us a proof of the equivalence of $\fotwothrfac$ and $\fotwobetfac$  (Theorem~\ref{thm.invequalsthrfac}): The two propositions providing the translation between formulas of $\fotwobet$ and $\fotwobetfac$ can be adapted to provide similar translations between $\fotwothr$ and $\fotwothrfac$. Since we know $\fotwobet=\fotwothr$ (Theorem~\ref{thm.invequalsthrfac}), we have that $L$ is definable in $\fotwothr$ if and only if $M(L)\in{\bf M}_e{\bf DA}$, and the methods of this section then imply that a language is definable in $\fotwothrfac$ if and only if its syntactic semigroup is in $M(L)\in{\bf M}_e{\bf DA}*{\bf D}$. Thus $\fotwothrfac$ and $\fotwobetfac$ define the same family of languages.


\section*{Acknowledgments}
Much of this research was carried out while the various authors were guests of
Boston College, the Tata Institute of Fundamental Research in Mumbai, the Chennai Mathematical Institute and the Institute of Mathematical Sciences in Chennai, and the University of Montreal, and participated in Dagstuhl Seminar 15401 on `Circuits, Logic and Games' in September, 2015, and in subsequent visits
among our institutions.
We would like to thank our institutions, Boston College, the Institute of Mathematical Sciences
and the Tata Institute of Fundamental Research, for hosting our collaborative visits.

\bibliographystyle{alpha}
\bibliography{algdecbet}
\end{document}